\newcommand{\vb}{\vspace{3mm}\noindent}
\newcommand{\BQP}{{\sf BQP}}
\newcommand{\NP}{{\sf NP}}
\newcommand{\BPP}{{\sf BPP}}
\newcommand{\IP}{{\sf IP}}
\newcommand{\PSPACE}{{\sf PSPACE}}
\newcommand{\Eq}[1]{Eq.~\ref{#1}}
\newcommand{\Le}[1]{Lemma~\ref{#1}}
\newcommand{\Cl}[1]{Claim~\ref{#1}}
\newcommand{\Th}[1]{Theorem~\ref{#1}}
\newcommand{\Prot}[1]{Protocol~\ref{#1}}
\newcommand{\Def}[1]{Definition~\ref{#1}}
\newcommand{\Sec}[1]{Sec.~\ref{#1}}
\newcommand{\pen}[1]{Appendix~\ref{#1}}
\newcommand{\EqDef}{\stackrel{\mathrm{def}}{=}}
\newcommand{\bra}[1]{\left< #1\right|}
\newcommand{\ket}[1]{\left| #1\right>}
\newcommand{\tr}{\mbox{Tr}}
\newcommand{\mN}{\mathbbm{N}}
\newcommand{\mbP}{\mathbbm{P}}
\newcommand{\mcI}{\mathcal{I}}
\newcommand{\mcA}{\mathcal{A}}
\newcommand{\mcB}{\mathcal{B}}
\newcommand{\mcC}{\mathcal{C}}
\newcommand{\mcM}{\mathcal{M}}
\newcommand{\mcO}{\mathcal{O}}
\newcommand{\mcK}{\mathcal{K}}
\newcommand{\mcP}{\mathcal{P}}
\newcommand{\mcV}{\mathcal{V}}
\newcommand{\mcL}{\mathcal{L}}
\newcommand{\mfC}{\mathfrak{C}}
\newcommand{\ignore}[1]{}
\newtheorem{thm}{Theorem}[section]
\newtheorem{deff}{Definition}[section]
\newtheorem{protocol}{Protocol}[section]
\newtheorem{claim}[thm]{Claim}
\newtheorem{lem}[thm]{Lemma}
\newtheorem{corol}[thm]{Corollary}
\newtheorem{fact}[thm]{Fact}
\newenvironment{proofof}[1]{\vspace*{2mm}\noindent{\textit Proof} of $\bf{#1}$:\hspace*{1em}}{$\Box $}
\newenvironment{statement}[1]{\noindent{\textbf {#1}\hspace*{0.5em}}}{$\\$}
\def\hpic #1 #2 {\mbox{$\begin{array}[c]{l}
      \epsfig{file=#1,height=#2} \end{array}$}}
\def\vpic #1 #2 {\mbox{$\begin{array}[c]{l}
      \epsfig{file=#1,width=#2} \end{array}$}}
\newcommand{\dnote}[1]{\textcolor{red}{\small {\textbf{(Dorit:}
#1\textbf{) }}}}
\newcommand{\qw}{\mbox{\emph{Q-Wave}}}
\newcommand{\qc}{\mbox{\textsf{Q-CIRCUIT}}}
\newcommand{\QPIP}{\textsf{QPIP}}
\newcommand{\QAS}{\textsf{QAS}}
\newcommand{\wt}[1]{{\widetilde{#1}}}
\newcommand{\odots}{{\otimes{\ldots}\otimes}}
\newenvironment{proof}{\noindent\textit{Proof: }}{$\Box $}
\begin{document}

\author{Dorit Aharonov\footnote{School of Computer Science, The Hebrew
    University of Jerusalem, Israel. $\{$doria,benor,elade$\}$@cs.huji.ac.il}
  \and Michael Ben-Or$^*$ \and Elad Eban$^*$}

\title{Interactive Proofs For Quantum Computations}

\maketitle
\thispagestyle{empty}
\begin{abstract}
The widely held belief that \BQP\ strictly contains \BPP\ raises fundamental
questions: Upcoming generations of quantum computers might already be too
large to be simulated classically. Is it possible to experimentally test that
these systems perform as they should, if we cannot efficiently compute
predictions for their behavior? Vazirani has asked \cite{vazirani07}:
If computing predictions
for Quantum Mechanics requires exponential resources,
is Quantum Mechanics a falsifiable theory? In
cryptographic settings, an untrusted future company wants to sell a quantum
computer or perform a delegated quantum computation. Can the customer be
convinced of correctness without the ability to compare results to
predictions?

To provide answers to these questions, we define Quantum Prover Interactive
Proofs (\QPIP). Whereas in standard Interactive Proofs \cite{goldwasser1985kci} the prover is
computationally unbounded, here our
prover is in \BQP, representing a quantum computer.
The verifier models our current computational capabilities: it is a \BPP\
machine, with access to few qubits. Our main theorem can be roughly stated
as: "Any language in \BQP\ has a \QPIP, and moreover,
a fault tolerant one''. We provide two proofs. The
simpler one uses a new (possibly of independent interest) quantum
authentication scheme (\QAS) based on random Clifford elements. This \QPIP\
however, is not fault tolerant. Our second protocol uses polynomial codes \QAS\
due to Ben-Or, Cr{\'e}peau, Gottesman, Hassidim, and Smith \cite{benor2006smq},
combined with quantum fault tolerance and secure multiparty quantum
computation techniques. A slight
modification of our constructions makes the protocol ``blind'': the quantum
computation and input remain unknown to the prover.

After we have derived the results, we have learnt that
Broadbent, Fitzsimons, and Kashefi \cite{broadbent2008ubq} have independently
derived "universal blind quantum computation'' using completely different
methods (measurement based
quantum computation). Their construction implicitly implies
similar implications.
\end{abstract}
\newpage
\setcounter{page}{1}
\section{Introduction}
\subsection{Motivation}\label{sec:funda}
As far as we know today, the quantum mechanical description
of many-particle systems requires exponential resources to simulate.
This has the following fundamental implication:
the results of an experiment conducted on a many-particle physical system
described by quantum mechanics, cannot be predicted (in general)
by classical computational devices, in any reasonable amount of time.
This important realization (or belief),
which stands at the heart of the interest in quantum computation,
led Vazirani to ask \cite{vazirani07}: Is quantum mechanics a falsifiable
physical theory? Assuming that small quantum systems obey quantum
mechanics to an extremely
high accuracy, it is still possible that the physical description of
large systems deviates significantly from quantum mechanics.
Since there is no efficient way to make the predictions of the experimental
outcomes for most large quantum systems, there is no way
to test or falsify this possibility experimentally, using the usual
scientific paradigm, as described by Popper.

This question has practical implications.
Experimentalists who attempt to realize quantum computers
would like to know how to test that their systems indeed perform
the way they should. But most tests cannot be compared to any
predictions! The tests whose predictions can in fact be computed,
do not actually test the more interesting aspects of quantum mechanics, namely
those which cannot be simulated efficiently classically.

The problem arises in cryptographic situations as well.
Consider for example, a company called \qw\ which
is trying to convince a certain potential customer that the system it had
managed to build is in fact a quantum computer of $100$ qubits.
How can the customer, who cannot make predictions of the outcomes
of the computations made by the machine, test that the machine is indeed
a quantum computer which does what it is claimed to do?
Given the amounts of grant money and prestige involved,
the possibility of
dishonesty of experimentalists
and experimentalists' bias inside the academia should not be
ignored either \cite{roodman2003bap, BlindWiki}.

There is another related question that stems from cryptography.
It is natural to expect that the first
generations of quantum computers will be extremely expensive,
and thus quantum
computations would be delegated to untrusted
companies.  Is there any way for the costumer to trust the
outcome, without the need to trust the company which performed the
computation, even though the costumer cannot verify the
outcome of the computation (since he cannot simulate it)?
And even if the company is honest,
can the costumer detect innocent errors in such a computation?

Vazirani points out \cite{vazirani07} that in fact, an answer
to these questions is already given in the form of Shor's algorithm.
Indeed, quantum mechanics does not seem to be falsifiable
using the {\it usual} scientific paradigm, assuming that \BQP\ is
strictly lager than \BPP. However, Shor's algorithm does provide a way
for falsification, by means of
an experiment which lies outside of the scientific paradigm:
though its result cannot be {\it predicted} and then compared to the
experimental outcome, it can be {\it verified} once the outcome
of the experiment is known (by simply taking the product of the factors
and checking that this gives the input integer).

This, however, does not fully address the issues raised above.
Let us take for example the case of the company trying to convince a
costumer that the system it is trying to sell is indeed a quantum computer
of $100$ qubits. Such a system is already too big to simulate classically;
However, any factoring algorithm that is run on a system of a $100$ qubits
can be easily performed by today's classical technology.
For delegated quantum computations, how can Shor's algorithm help
in convincing a costumer of correctness of, say, the computation of the \BQP\
complete problem of approximating the Jones polynomial
\cite{aharonov2006pqa,jonesHardness}?
As for experimental results, it is difficult
to rigorously state what is exactly falsified or verified
by the possibility to apply Shor's algorithm.
Finally, from a fundamental point of view, there is a fundamental difference
between being convinced of the ability to factor, and testing
universal quantum evolution.

We thus pose the following main question:
Can one be convinced of the correctness of the computation of {\it any}
polynomial quantum circuit? Does a similar statement to the one above,
regarding Shor's algorithm, apply for universal quantum computation?
Alternatively, can one be convinced of
the ``correctness'' of the quantum mechanical description of
any quantum experiment that can be conducted in the laboratory,
even though one cannot compute any predictions for the outcomes of this
experiment? In this paper we address the above fundamental question
in a rigorous way. We do this by taking a computational point of
view on the interaction between
the supposed quantum computer, and the entity which attempts to
verify that it indeed computes what it should.

\subsection{Quantum Prover Interactive Proofs (\QPIP)}
Interactive proof systems, defined by Goldwasser, Micali and Rackoff
\cite{goldwasser1985kci}, play a crucial role in the theory of computer
science.  Roughly, a language $\mcL$ is said to have an interactive proof if
there exists a computationally unbounded prover (denoted $\mcP$) and a \BPP\
verifier ($\mcV$) such that for any $x\in\mcL$, $\mcP$ convinces $\mcV$ of the
fact that $x\in\mcL$ with probability $\ge \frac 2 3$ (completeness).
Otherwise,
when $x\notin\mcL$ any prover fails to convince $\mcV$ with probability higher
than $\frac 1 3$ (soundness).

Shor's factoring algorithm \cite{shor1997pta} can be viewed as an interactive
proof of a very different kind: one between a classical \BPP\ verifier, and a
quantum \textit{polynomial time} (\BQP) prover, in which
the prover convinces the verifier of the factors of a given number
(this can be easily converted to the usual \IP\ formalism of membership
in a language).
This is a quantum interactive proof of
a very different kind than quantum interactive proofs previously
studied in the literature \cite{watrous2003phc}, in which the prover is
an \emph{unbounded}
quantum computer, and the \emph{verifier} is a \BQP\ machine.

Clearly, such an interactive proof between a \BQP\ prover and
a \BPP\ verifier exists for any problem inside \NP $\cap$ \BQP.
However, it is widely believed that \BQP\ is not contained in \NP\ (
and in fact not even in the polynomial hierarchy). The main idea
of the paper is to generalize the above interactive point of view
of Shor's's algorithm, and show that with this generalization,
a verifier can be convinced of the result of {\it any} polynomial
quantum circuit, using interaction with the prover - the quantum
computer.

To this end we define a new model of quantum interactive proofs which we call
quantum prover interactive proofs (\QPIP). The simplest definition would be
an interactive proof in which the prover is a \BQP\ machine and the verifier a
\BPP\ classical machine.  In some sense, this model captures the
possible interaction
between the quantum world (for instance, quantum systems in the lab) and the
classical world. However, this model does not suffice for
our purposes.
We therefore modify it a little, and
allow the verifier additional access to a constant number of qubits.
The verifier can be viewed as modeling our current computational
abilities, and so in some sense, the verifier in the following system
represents ``us''.

\begin{deff}\label{def:QPIP} Quantum Prover Interactive Proof (\QPIP)
is an interactive proof system with the following properties:
\begin{itemize}
\item The prover is computationally restricted to \BQP.
\item The verifier is a hybrid quantum-classical machine. Its classical part is
      a \BPP\ machine. The quantum part is a register of $c$ qubits (for some
      constant $c$), on which the prover
 can perform arbitrary quantum operations. At
      any given time, the verifier is not allowed to possess
      more than $c$ qubits. The
      interaction between the quantum and classical parts is the usual one: the
      classical part controls which operations are to be performed on the
      quantum register, and outcomes of measurements of the quantum register can
      be used as input to the classical machine.
\item There are two communication channels: one quantum and one
      classical.
\end{itemize}
The completeness and soundness conditions are identical to the \IP\ conditions.
\end{deff}

Abusing notation, we
denote the class of languages for which such a proof
 exists also
by $\QPIP$.

\subsection{Main Results}

\begin{deff} The promise problem \qc\ consists of a quantum circuit made of a
  sequence of gates, $U=U_T{\ldots}U_1$, acting on $n$ input bits. The task is
  to distinguish between two cases:
\begin{eqnarray*}
  \qc_{\textmd{YES}}&:
  \|(\left(\ket 0 \bra 0 \otimes \mcI_{n-1}\right)U\ket{\bar{ 0}} \|^2\ge \frac 2 3\\
  \qc_{\textmd{NO}}\;\,&:
  \|(\left(\ket 0 \bra 0 \otimes \mcI_{n-1}\right)U\ket{\bar{ 0}} \|^2\le \frac 1 3
\end{eqnarray*}
\end{deff}

\qc\ is a \BQP\ complete problem, and moreover, this remains true for
other soundness and completeness parameters $0<s,c<1$, if
 $c-s>\frac 1 {Poly(n)}$.
Our main result is:
\begin{thm}\label{thm:qcircuit}
  The language \qc\ has a \QPIP.
\end{thm}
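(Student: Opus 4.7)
The plan is to design a protocol in which the verifier encodes every qubit that will ever appear in the circuit $U$ using a quantum authentication scheme (\QAS), so that the prover is forced to behave as a purely classical controller of an encoded quantum computation. The verifier holds all secret keys and only ever touches $c$ physical qubits at a time (which is enough since a single logical qubit is encoded in a constant number of physical qubits). I would proceed as follows. First I would use the constant-size quantum register to prepare, one by one, authenticated encodings of $\ket 0$ for each of the $n$ input wires, sending each encoded block to the prover over the quantum channel while updating a classical bookkeeping of the associated key. Then, gate by gate through the circuit $U_T\cdots U_1$, the verifier would send classical instructions telling the prover which local operations to apply to the encoded blocks, with the verifier updating its record of keys accordingly. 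Finally, the verifier would ask the prover to return the block containing the output wire, decode and verify authentication; if authentication succeeds, the verifier measures the logical qubit and uses the outcome as its answer, and if authentication fails, it rejects.

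The gate step is where the structure of the \QAS\ matters. For the Clifford-based scheme mentioned in the excerpt, Clifford gates of $U$ can be implemented by the prover applying a suitable encoded Clifford directly on the authenticated block, with the verifier only having to update its key (since conjugation of a random Clifford by another Clifford remains a random Clifford from the same distribution). Non-Clifford gates such as Toffoli or $T$ would be implemented by gate teleportation: the verifier prepares the appropriate authenticated magic/ancilla state inside its $c$-qubit register, sends it to the prover, who teleports the gate into the computation using encoded Bell measurements; the verifier interprets the classical measurement outcomes it receives, adjusts the key, and instructs the prover to apply the resulting encoded Pauli correction. In this way the whole circuit is implemented on authenticated data, with the verifier's quantum memory never exceeding one encoded block plus ancillas at a time.

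\textbf{Completeness} is immediate: an honest \BQP\ prover just follows the instructions, authentication checks pass, and the output distribution agrees with that of $U$ on $\ket{\bar 0}$, so the $2/3$ threshold from the definition of \qc\ carries over. \textbf{Soundness} is the actual content. I would argue that any cheating prover's joint attack on the sequence of authenticated blocks it holds can, by the standard QAS security guarantee, be written as a convex combination of a channel that fully preserves the encoded state (possibly up to a logical Pauli that the verifier tracks via its key) and a channel that is caught by the authentication check with all but negligible probability. Since the soundness/completeness gap of \qc\ is constant, one can amplify the authentication's detection probability by choosing parameters of the \QAS\ large enough (still constant in $c$) so that the overall soundness error stays below, say, $1/3$.

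The main obstacle I expect is the soundness analysis in the adaptive, interactive setting: the prover sees the verifier's classical messages between gates and may correlate its deviation across time steps and across different blocks. The crux of the proof is therefore to show that the \QAS, used this way through a sequence of $T$ rounds of prover-controlled operations plus teleportation-based magic-state injections, remains secure against such an adaptive adversary with only a polynomial loss. Once this composability lemma for the Clifford \QAS\ is in hand, the construction above assembles directly into a \QPIP\ for \qc\ and proves \Th{thm:qcircuit}.
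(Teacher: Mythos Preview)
Your high-level architecture---authenticate each qubit, have the prover manipulate encoded blocks under classical instructions while the verifier updates keys, inject authenticated magic states for non-Clifford gates, and check authentication on the output---is essentially the paper's \emph{second} protocol (\Prot{prot:PolynomialIP}). The gap is that you attribute this capability to the \emph{Clifford} \QAS, and that step does not go through. Your key-update argument (``conjugation of a random Clifford by another Clifford remains random'') handles single-qubit Cliffords, but fails for two-qubit gates: the two blocks carry independent keys $C_{k_1},C_{k_2}$, and for an entangling $G$ the operator $G(C_{k_1}\otimes C_{k_2})$ does not in general factor as $(C_{k_1'}\otimes C_{k_2'})$ times a logical gate, so there is no key update that re-authenticates the joint state block-by-block. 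The magic-state step has the same problem: gate teleportation requires the prover to measure authenticated blocks, but a standard-basis measurement of $C_k(\ket\psi\otimes\ket{0}^{\otimes d})$ with $C_k$ an unknown Clifford yields outcomes the verifier cannot interpret as logical values. The paper says this explicitly: for the Clifford \QAS\ ``we do not know how to make the prover apply gates on the states without revealing the key.''

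Accordingly, the paper proves \Th{thm:qcircuit} via two distinct protocols. In the Clifford-based one (\Prot{prot:CliffordIP}) the prover is pure untrusted \emph{storage}: for each gate $U_i$ the verifier retrieves the relevant authenticated qubits, decodes, applies $U_i$ in the clear inside its constant-size register, re-encodes under fresh keys, and sends them back. Soundness is then a direct reduction to the security of the concatenated Clifford \QAS\ (\Th{thm:CliffordConcat}), and the adaptive/composability obstacle you flag dissolves because every round is a fresh authentication with independent keys. The prover-computes-on-encoded-data approach you sketch is realized only in the polynomial-codes protocol (\Prot{prot:PolynomialIP}), where the algebraic structure of signed polynomial codes makes Pauli, control-SUM and Fourier transversal, lets the verifier interpret prover-side standard-basis measurements from its Pauli key, and supports Toffoli via authenticated Toffoli states; composability there is handled by \Le{lem:uniformkeys}, which shows the keys remain uniform and independent after every gate.
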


Since \qc\ is \BQP\ complete, and \QPIP\ is trivially
inside \BQP, we have:

\begin{thm} \label{thm:main}
$\BQP\ = \QPIP$.
\end{thm}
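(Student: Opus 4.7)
The plan is to establish the two inclusions $\BQP \subseteq \QPIP$ and $\QPIP \subseteq \BQP$ separately, relying on Theorem~\ref{thm:qcircuit} for the first direction and on a direct simulation argument for the second.

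For the forward inclusion, I would take an arbitrary language $L \in \BQP$ and invoke the \BQP-completeness of \qc: there is a classical polynomial-time reduction mapping any instance $x$ to a quantum circuit $U_x$ such that $x \in L$ iff $U_x$ is a YES-instance of \qc, with completeness/soundness gap bounded below by $1/\mathrm{poly}(n)$ (which can be amplified to the $2/3$--$1/3$ gap by standard repetition). The verifier computes this reduction using its \BPP\ part and then runs the \QPIP\ protocol for \qc\ on $U_x$ guaranteed by Theorem~\ref{thm:qcircuit}; the composed protocol is itself a \QPIP\ for $L$, since the reduction is classical and the verifier's quantum register stays of constant size.

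For the reverse inclusion, I would exhibit a \BQP\ machine that, given $x$, internally simulates \emph{both} the honest prover and the verifier of a \QPIP\ protocol for $L$ and outputs the verifier's decision. The prover side is by definition a \BQP\ circuit, so it can be simulated directly. The verifier side consists of a \BPP\ machine coupled to $c = O(1)$ qubits; both are easy to embed in a \BQP\ computation, and the two classical/quantum communication channels simply become internal wires of the simulating circuit. For $x \in L$, completeness of the protocol guarantees the honest prover causes acceptance with probability $\geq 2/3$. For $x \notin L$, soundness applies to \emph{any} prover, in particular to the honest strategy being simulated, so acceptance probability is $\leq 1/3$. Thus the \BQP\ simulation decides $L$.

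There is no real obstacle here: the forward direction is a one-line invocation of Theorem~\ref{thm:qcircuit} combined with \BQP-completeness of \qc, and the reverse direction is the standard observation that the entire \QPIP\ protocol lives inside \BQP\ because both parties, as well as the interaction between them, can be uniformly implemented by a polynomial-size quantum circuit. The only mildly delicate point to mention is that the constant bound $c$ on the verifier's quantum register ensures the simulation cost is polynomial and not merely subexponential; if $c$ were allowed to grow, the reverse inclusion would need more care.
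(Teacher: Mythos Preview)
Your proposal is correct and matches the paper's own argument, which is stated in a single sentence preceding the theorem: ``Since \qc\ is \BQP\ complete, and \QPIP\ is trivially inside \BQP, we have \ldots''. Your write-up simply unpacks both directions (the reduction via \BQP-completeness of \qc\ for $\BQP\subseteq\QPIP$, and the joint simulation of honest prover and verifier for $\QPIP\subseteq\BQP$) in more detail than the paper bothers to; the only quibble is that your caveat about the constant $c$ is unnecessary---even a polynomial-size verifier register would leave the reverse inclusion trivially inside \BQP.
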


Thus, a \BQP\ the prover can convince the verifier of any language he can
compute. We remark that our definition of \QPIP\ is asymmetric -
the verifier is ``convinced'' only if the quantum circuit
outputs $1$. This asymmetry seems irrelevant in
our context of verifying correctness of quantum computations.  Indeed, it is
possible to define a symmetric version of \QPIP,
(we denote it by $\QPIP^{sym}$)
in which the verifier is convinced of {\it correctness} of the prover's outcome
(in both $0$ and $1$ cases) rather than of membership of the input in the
language, namely in the $1$ case only.  That $\BQP=\QPIP^{sym}$ follows quite
easily from the fact that $\BQP$ is closed under complement (see \pen{app:sym}).

Moreover, the above results apply in a realistic setting, namely with noise:

\begin{thm}\label{thm:ft}
\Th{thm:qcircuit} holds also when the quantum communication
and computation devices are subjected to the usual local noise model
assumed in quantum fault tolerance settings.
\end{thm}

In the works
\cite{childs2001saq,blind} a related
question was raised: in our cryptographic setting, if we distrust
the company performing the delegated quantum computation,
we might want to keep both the input and the
function which is being computed secret.
Can this be done while maintaining the
confidence in the outcome? A simple modification of our protocols gives

\begin{thm}\label{thm:blind}
\Th{thm:ft} holds also in a blind setting,
namely, the prover does not get any
 information regarding the function being computed and its input.
\end{thm}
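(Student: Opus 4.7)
The plan is to combine two standard observations: the QAS used in the fault-tolerant protocol is simultaneously a quantum one-time pad (which handles input privacy for free), and any circuit can be pushed through a fixed universal circuit (which handles circuit privacy). No new cryptographic machinery is needed beyond the protocol of \Th{thm:ft}.

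First, I would unpack the authentication scheme used in \Th{thm:ft}. The polynomial-code QAS of \cite{benor2006smq} enjoys the property that, averaged over the verifier's secret key, the prover's marginal density matrix on any authenticated register is the maximally mixed state; it is an encrypting authentication scheme. Therefore the moment the verifier authenticates her input $\ket{x}$ and sends it to the prover, from the prover's point of view he holds random noise, and input blindness follows without any change to the protocol.

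Second, to hide the circuit itself, I would invoke the standard universal-circuit reduction. Fix a uniform family of universal quantum circuits $\{U_{\mathrm{univ}}^{n,T}\}$ such that $U_{\mathrm{univ}}^{n,T}$ takes as input a classical description $\langle C\rangle$ of an arbitrary $n$-qubit circuit of size at most $T$, together with an input $x$, and outputs $C(x)$. A verifier wishing to evaluate $C$ on $x$ now applies the protocol of \Th{thm:ft} to the fixed circuit $U_{\mathrm{univ}}^{n,T}$, feeding in the authenticated pair $(\langle C\rangle, x)$. The sequence of gate instructions that the verifier issues is then a function of $(n,T)$ only, entirely independent of the actual $(C,x)$ being computed.

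Putting the two ingredients together yields blindness: the prover's view, consisting of his quantum registers plus the transcript of classical messages he receives, is up to negligible error the same for any two inputs $(C,x)$ and $(C',x')$ of the same size $(n,T)$. The quantum part is (close to) maximally mixed by the encryption property, and the classical part is the universal schedule plus Pauli and key-update corrections that depend only on the verifier's secret randomness. The step I expect to require the most care is precisely this last claim: verifying that every classical message in the fault-tolerant protocol, including those accompanying the teleportation and magic-state gadgets used to implement non-Clifford gates and the multiparty-style checks, is a function of the verifier's randomness and the universal schedule alone and does not sneak in a dependence on the plaintext gate being simulated. Once this bookkeeping is done, the blindness property follows immediately.
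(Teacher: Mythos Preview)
Your proposal is correct and matches the paper's argument essentially line for line: the paper too observes that the (Pauli part of the) \QAS\ renders the prover's marginal completely mixed, and then runs the protocol on a fixed universal circuit with the actual circuit description fed in as authenticated input. The bookkeeping you flag about the classical messages is precisely what \Le{lem:uniformkeys} supplies---the verifier's keys remain uniform and independent at every step, so the Pauli/key updates sent to the prover carry no information about the plaintext.
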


We note that an analogous result for \NP-hard problems
was shown already in the late $80$'s to be impossible
unless the polynomial hierarchy collapses \cite{abadi1987hio}.

\subsection{Proofs Overview (and More
Results About Quantum Authentication Schemes)}
Our main tool is quantum authentication schemes (\QAS) \cite{barnum2002aqm}.
Roughly, a \QAS\ allows two parties to communicate in the following way: Alice
sends an encoded quantum state to Bob. The scheme is secure if upon decoding,
Bob gets the same state as Alice had sent unless it was altered, whereas if the
state had been altered, then Bob's chances of declaring valid a wrong state are
small.  The basic idea is that similar security can be achieved, even if the
state needs to be rotated by unitary gates, as long as the verifier can control
how the unitary gates affect the authenticated states.  Implementing this
simple
idea in the context of fault tolerance encounters several complications,
which we
explain later.

We start with a simple \QAS\ and \QPIP, which we do not know how to make fault
tolerant, but which demonstrates some of the ideas and might be of interest on
its own due to its simplicity.

\vb\textbf{Clifford \QAS\ based \QPIP}. We present a new, simple and efficient
\QAS, based on random Clifford group operations (it is reminiscent of Clifford
based quantum $t$-designs \cite{ambainis2007qtd,ambainis2008tre}).  To encode a
state of $m$ qubits, tensor the state with $d$ qubits in the state $\ket{0}$,
and apply a random Clifford operator on the $m+d$ qubits.  The security proof of
this \QAS\ uses a combination of known ideas.  We first prove that any attack of
Eve is mapped by the random Clifford operator to random Pauli operators.  We then show
that those are detected with high probability.  This \QAS\ might be interesting
on its own right due to its simplicity.

To construct a \QPIP\ using this \QAS, we simply use the prover as an untrusted
storage device: the verifier asks the prover for the authenticated qubits on
which he would like to apply the next gate, decodes them, applies the gate,
encodes them back and sends them to the prover. The proof of security is quite
straight forward given the security of the \QAS.

\ignore{
The idea of the Clifford authentication scheme is the following:
If one wants to encode an
$m$ qubit state, one can use a  random subspace of dimension $2^m$ in a
space of dimension $2^{m+d}$. To realize this, we need to
efficiently select a random subspace, which is impossible.  Instead,
we rotate a given subspace by a random Clifford operation, which
turns out to provide enough randomization.
We prove that for any encoded state and any intervention, if we average over
the random Clifford rotation, the
intervention takes the authenticated state far enough outside
the random subspace, which  will be noticeable upon decoding.
}

Due to the lack of structure of the authenticated states, we do not
know how to make the prover apply gates on the states without revealing
the key. This seems to be necessary for fault tolerance.
The resulting \QPIP\ protocol also
requires many rounds of quantum communication.

\vb\textbf{Polynomial codes based \QAS\ and its \QPIP\ }
Our second type of \QPIP\ uses a
\QAS\ due to Ben-Or, Cr\'epeau, Gottesman, Hassidim and Smith
\cite{benor2006smq}. This \QAS\ is based on signed quantum polynomial codes,
which are quantum polynomial codes \cite{aharonov1997ftq}
of degree at most $d$ multiplied by
some random sign ($1$ or $-1$) at every
coordinate (this is called the sign key)
and a random Pauli at every coordinate (the pauli key).

We present here a security proof which was missing from the
original paper \cite{benor2006smq}. The proof requires some care,
due to a subtle point, which was not addressed in \cite{benor2006smq}.
We first prove that no Pauli attack can fool more than a small
fraction of the sign keys, and thus the sign key suffices in order to protect
the code from any Pauli attack. Next, we need to show that the scheme
is secure against general attacks. This, surprisingly, does not follow
by linearity from the security against pauli attacks (as is the case
in quantum error correcting codes):
if we omit the Pauli key we get an authentication
scheme which is secure against Pauli attacks but not against general
attacks. We proceed by showing, (with some similarity to
the Clifford based \QAS), that the random Pauli key
effectively translates Eve's attack to a mixture
(not necessarily uniform like in the Clifford
case) of Pauli operators acting on a state encoded by a random signed
polynomial code.

\ignore{
Intuitively, an authentication scheme is
required to be able to
detect \emph{any} intervention with high probability. Let us
compare it to an error detection code. An error detection code,
detects a \emph{certain set} of errors \emph{with certainty},
while other errors always escape detection. The main idea behind
the authentication scheme of \cite{benor2006smq}
is to use a random error detection
code such that \emph{any error} is detected by all but a small
fraction of codes in the set.}

\ignore{
\begin{thm}\label{thm:securityof polynomial}
The Polynomial authentication scheme is secure against
general attacks with security $2^{-d}$\label{thm:PolynomialAuth}, where $d$ is the degree of the polynomial error correction code used in the scheme.
\end{thm}

We note that the security parameter $d$ is different than the $d$ parameter in
the Clifford based \QAS.  The security of the signed polynomial codes \QAS\ is
in fact slightly worse than that of the Clifford based \QAS\ (see
\Sec{sec:authentication} for more precise statements).}

\ignore{
There is a similarity between the Clifford and polynomial \QAS s, which makes
the ideas underlying the proofs quite reminiscent of each other. The similarity
lies in the role of the random Clifford and random Pauli operations.
Intuitively, the random operations' role is to protect some underling structure
which holds the authenticated information. Both type of operations (Clifford,
Pauli) serve to randomize the attack of Eve on the underlying structure. The
random Clifford operation disentangles and symmetrizes any attack, such that
Eve's attack has the effect on the protected structure as a randomly chosen
Pauli. On the other hand, the random Pauli does not symmetrize Eve's attack but
rather disentangles it, such that Eve's attack is reduced to a mixture (not
necessarily uniform) of Pauli operators on the protected structure.  The
underling structure of both \QAS\ makes it possible to detect such
interventions
with high probability.
}

\ignore{
\subsection{The \QPIP\ Protocols}
Given those \QAS s, we can prove \Th{thm:qcircuit} in two ways,
as follows. We start with the proof using the Clifford authentication scheme,
in which case the construction is extremely simple.
In this case, the prover in the \QPIP\ serves merely as
 an untrusted storage device.  He keeps the qubits in the correct
 superposition
  of a state encoded according to the \QAS. To apply the
operations of the
  quantum circuit, $\mcP$ sends the appropriate encoded
 qubits back to $\mcV$
  who, knowing the encoding, can decode the qubits, apply the operations,
 encode them again, and send them back. This gives a scheme which
requires the register to hold $3$ qubits. }

\ignore{
The size of the quantum register which the verifier needs to
possess in this protocol is that of two registers,
which in our case can be of two qubits each, and so we get a register of
$4$ qubits. In fact,
$3$ qubits suffice for the verifier's register,
since the length of the encoding of one qubit is $2$,
and it suffices to send one register at a time and
wait until the verifier decodes it before sending the
second register.
}

\ignore{
The second proof of \Th{thm:main},
based on the polynomial codes
 \QAS, also provides fault tolerance.
The \QPIP\ protocol we use
is very different than that used with the Clifford group \QAS.
}

Due to its algebraic structure, the signed polynomial code \QAS\
allows applying gates without knowing the key.
This was used in \cite{benor2006smq} for
secure multiparty quantum computation; here we use it
to allow the prover to perform gates without
knowing the authentication key.

The \QPIP\ protocol goes as follows. The prover receives
all authenticated qubits in the beginning. Those
include the inputs to the circuit, as well as
authenticated magic states required to perform
Toffoli gates, as described in \cite{benor2006smq,nielsen2000qcq}.
With those at hand,
the prover can perform universal computation using
only Clifford group operations and measurements
(universality was proved for qubits in \cite{bravyi2005uqc},
and the extension to higher dimensions was used in \cite{benor2006smq}).
The prover sends the verifier results of measurements and the verifier
sends information given those results, which enables the
prover to continue the computation. The communication is thus classical
except for the first round.

\ignore{
We mention that the resulting \QPIP\ which uses
the polynomial code \QAS\ has worse parameters
than the Clifford based one:
The size of the register is $9$ qutrits (three dimensional systems)
rather than $3$ qubits.
The reason is that we need three registers of
three qutrits each, to apply the Toffoli gate. In fact, we suspect that
it suffices that the verifier's register contains only
$3$ qutrits, based on ideas generalizing the
result of Bravyi and Kitaev \cite{bravyi2005uqc}
to qudits, namely, to higher dimensions.
This will be discussed in the full version of this paper.
}

\ignore{
This might be a step towards
a scheme involving only classical communication, which is
a major open problem remaining in this paper.
}

\ignore{
\subsection{Fault Tolerant \QPIP}
As discussed before, for the \QPIP\ to be applicable in physically realistic
settings, it must work also in the presence of
noise which affects the communication channels
and the computations performed by the prover and the verifier.
We manage to prove that this indeed holds for the scheme
based on polynomial codes:

\begin{thm}\label{thm:ft}
\Th{thm:qcircuit} holds also when the quantum communication
and computation devices are subjected to the usual local noise model
assumed in quantum fault tolerance settings.
\end{thm}
}

\vb\textbf{Fault Tolerance}
Using the polynomial codes \QAS\ enables applying
known fault tolerance techniques based on polynomial quantum
codes \cite{aharonov1997ftq, benor2006smq} to
achieve robustness to local noise. However, a problem arises when
attempting to apply those directly: in such a scheme,
the verifier needs to send the prover polynomially
many authenticated qubits every time step, so that
the prover can perform error corrections on all qubits
simultaneously. However, the verifier's quantum register contains
only a constant number of qubits, and
so the rate at which he can send authenticated qubits
(a constant number at every time step)
seems to cause a bottleneck in this approach.

We bypass this problem is as follows.
At the first stage of the protocol,
authenticated qubits are sent from the verifier to
the prover, one by one. As soon as the prover receives an
authenticated qubit, he protects his qubits using his own
concatenated error correcting codes so that the effective error
in the encoded authenticated qubit
is constant.
This constant accuracy can be maintained for a long time by
the prover, by performing error correction with respect to
{\it his} error correcting code. Thus, polynomially many
such authenticated states can be passed to the prover
in sequence. A constant effective error is not good enough,
but can be amplified to an arbitrary inverse polynomial by
purification. Indeed, the prover cannot perform
purification on his own since the purification
compares authenticated qubits and the prover does not know the
authentication code; However, the verifier can help in the prover's
using classical communication. This way the prover can reduce the effective
error on his encoded authenticated qubits to inverse polynomial,
and perform the usual fault tolerant construction
of the given circuit, with the help of the prover in performing
the gates.

\vb\textbf{Blind Quantum Computation}
To achieve \Th{thm:blind}, we modify our construction so that the circuit that the prover performs is a {\it universal quantum circuit}, i.e.,
a fixed sequence of gates which
gets as input a description of a quantum circuit, plus an input string to that
circuit, and  applies the input quantum circuit
to the input string.  Since the universal quantum circuit is fixed, it reveals
nothing about the input quantum circuit or the input string to it.

\subsection{Interpretations of the Results}
The corollaries below clarify the connection between the
results and the motivating questions, and show that
one can use the \QPIP\ protocols designed here, to address the
various issues raised in \Sec{sec:funda}.

We start with some basic question. Conditioned that the verifier
does not abort, does he know that
the final state of the machine is very close to
the correct state that was supposed to be
the outcome of the computation?
This unfortunately is not the case. It may be that
the prover can make sure that the verifier
abort with very high probability, but when he does not abort,
the computation is wrong. However a weaker form of the above
result holds: if we know that the probability of not to abort is high,
then we can deduce something about correctness.

\begin{corol} \label{corol:confidence}
For a \QPIP\ protocol with security parameter $\delta$, if the
verifier does not abort  with probability $\ge \gamma$
then the trace distance between the final density matrix
and that of the correct state
is at most $\frac {2\delta} \gamma$
\end{corol}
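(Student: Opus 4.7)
The plan is to unpack what it means for a \QPIP\ to have security parameter $\delta$, write the real and ideal output states in block form according to the accept/abort flag, and then control the cost of conditioning on a possibly low-probability accept event by a simple triangle-inequality argument.

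First I would recall the security statement that will be proved for our \QAS-based \QPIP\ constructions: for every (possibly cheating) prover strategy, the real final joint state of the verifier's classical accept/abort flag together with the output register, call it $\rho$, is $\delta$-close in trace distance to an \emph{ideal} state $\sigma$ in which, conditioned on ``accept'', the output register is exactly in the correct state $\ket{\psi}$ produced by an honest run of the circuit. Writing this block-diagonally with respect to the classical flag,
\begin{equation*}
\rho = p_a\, \rho_a \otimes \ket{a}\bra{a} + p_r\, \rho_r \otimes \ket{r}\bra{r}, \qquad
\sigma = q_a\, \ket{\psi}\bra{\psi}\otimes \ket{a}\bra{a} + q_r\, \sigma_r \otimes \ket{r}\bra{r},
\end{equation*}
where $p_a$ is the actual non-abort probability (so $p_a \ge \gamma$ by hypothesis) and $q_a$ is the non-abort probability in the ideal experiment.

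Next I would exploit the fact that trace distance is additive over orthogonal classical-flag blocks, which gives $\|p_a \rho_a - q_a \ket{\psi}\bra{\psi}\|_1 \le \|\rho - \sigma\|_1 \le \delta$. Tracing out the output register on both sides of $\rho-\sigma$ also yields $|p_a - q_a| \le \delta$. Combining these via the triangle inequality,
\begin{equation*}
\|p_a \rho_a - p_a \ket{\psi}\bra{\psi}\|_1 \le \|p_a \rho_a - q_a \ket{\psi}\bra{\psi}\|_1 + |p_a - q_a|\cdot \|\ket{\psi}\bra{\psi}\|_1 \le 2\delta.
\end{equation*}
Dividing through by $p_a \ge \gamma$ produces $\|\rho_a - \ket{\psi}\bra{\psi}\|_1 \le 2\delta/\gamma$, which is exactly the claimed bound on the conditional final state.

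The only delicate step is the first one, identifying the ``ideal'' state $\sigma$ to which our \QPIP\ is compared: one must make sure the security theorems we have proved for the Clifford-based and polynomial-codes-based protocols really give closeness to a state of the clean ``accept$\Rightarrow$correct'' form, rather than a weaker guarantee about accept probabilities alone. Once that is in place, the rest is the short block-diagonal computation above, and no further work on the protocol internals is needed.
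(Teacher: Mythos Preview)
Your argument is correct and is a clean, modular alternative, but it is genuinely different from what the paper does. The paper does not pass through an abstract ``real-vs-ideal'' trace-distance comparison; instead it works directly with the explicit three-term decomposition of the verifier's final state that falls out of the Clifford \QAS\ analysis (\Eq{attackProfile}):
\[
s\,\rho_c\otimes\ket{VAL}\bra{VAL}\;+\;\alpha_{bad}\,\rho_{bad}\otimes\ket{VAL}\bra{VAL}\;+\;\alpha_{rej}\,\rho_{rej}\otimes\ket{ABR}\bra{ABR},
\]
where $\rho_c$ is the correct output, $\alpha_{bad}\le\delta$ is exactly the \QAS\ security bound, and $s+\alpha_{bad}=1-\alpha_{rej}\ge\gamma$. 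Conditioning on $\ket{VAL}$ and using $\|\rho_{bad}-\rho_c\|_1\le 2$ gives $2\alpha_{bad}/(1-\alpha_{rej})\le 2\delta/\gamma$ in one line. So the paper leans on the concrete structure of the state (the ``correct'' part is literally a summand with known weight $s$), and never needs to manufacture an ideal $\sigma$ or invoke a simulation-style security statement.

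The trade-off is exactly the one you flag as the ``delicate step'': your route would apply verbatim to any protocol satisfying a trace-distance-to-ideal guarantee, but you must first recast the paper's projection bound $\tr(P_0^{\ket\psi}\rho)\le\delta$ into that form, and doing so naively can cost a constant factor. The paper's route avoids this translation entirely at the price of being tied to the specific $\mcM_s$ decomposition of the Clifford (and, analogously, the polynomial) scheme.
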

The proof is simple and is given in \pen{app:interpretation}.

\bigskip
Further interpreting \Th{thm:main}, we show that under a somewhat
stronger assumption than \BQP\ $\ne$ \BPP, but still a widely
believed assumption, it is possible to lower bound the
 computational power of a successful prover and show that it
 is not within \BPP.
Assuming that there is a language $L \in$ \BQP\
and there is a polynomial time samplable distribution $D$ on which any \BPP\  machine errs with non negligible probability (e.g. the standard cryptographic assumptions about the hardness of Factoring or Discrete Log), we have

\begin{corol} For such a language $L$,
if the verifier interacts with a given prover for the language $L$,
and does not abort with high probability, then the prover's
computational power cannot be simulated by a \BPP\ machine.
\end{corol}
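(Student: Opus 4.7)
The argument is by contradiction. Suppose that the given prover $\mcP^*$, which interacts with $\mcV$ on inputs drawn from $D$ and causes $\mcV$ not to abort with some non-negligible probability $\gamma$, could be simulated by a \BPP\ machine $M$. We substitute $M$ for $\mcP^*$ in the \QPIP\ protocol. Since $\mcV$ holds at most a constant number $c$ of qubits at any time, the entire quantum state of $\mcV$'s register lives in a Hilbert space of fixed dimension $2^c$, so the quantum part of $\mcV$ (including unitary evolution, measurement, and the constant-size quantum channel to and from the simulated prover $M$) can be simulated classically in polynomial time. Combining the classical \BPP\ part of $\mcV$ with the simulated quantum register and with $M$ yields a single \BPP\ machine $M'$ that takes an input $x$ and outputs either ``abort'' or a guess for $L(x)$.

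Now apply \Cl{corol:confidence} (Corollary \ref{corol:confidence}) to the run of $M'$ on $x$: on every input where $M'$ does not abort with probability at least $\gamma$, the verifier's final state is within trace distance $2\delta/\gamma$ of the honest, correct output state, so the probability that $M'$ outputs an incorrect value of $L(x)$ conditioned on not aborting is at most $2\delta/\gamma +$ (completeness error). Because our \QPIP\ protocols achieve security $\delta$ that can be made negligible (and \Th{thm:main} is invariant under standard amplification of completeness), by picking $\delta$ much smaller than $\gamma$ we make this conditional error term negligible. On inputs from $D$ for which $\mcP^*$ already induced non-abort probability $\geq \gamma$, $M'$ therefore outputs the correct value $L(x)$ with probability at least $\gamma - \mathrm{negl}(n)$.

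A standard amplification step then converts $M'$ into a \BPP\ machine that decides $L$ on the distribution $D$ with overwhelming probability: run $M'$ polynomially many times, ignore the aborting runs, and take the majority vote of the non-aborting ones; since on each run non-aborts occur with non-negligible probability and their conditional answer is correct up to negligible error, a polynomial number of repetitions suffices. This contradicts the hypothesis that no \BPP\ machine decides $L$ on $D$ with more than negligible error.

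The main technical point is verifying that a \BPP-simulatable prover really does collapse the whole \QPIP\ into a \BPP\ protocol; this relies on the constant-qubit bound on $\mcV$'s register and on the fact that only classical messages and a constant-width quantum channel are exchanged, so that nothing in the interaction blows up the simulation cost. Once this is in place, the combination of \Cl{corol:confidence} with the average-case hardness of $L$ on $D$ yields the contradiction cleanly.
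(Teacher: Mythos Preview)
Your proof is correct and follows the paper's approach: the paper's own proof is the single sentence ``This corollary follows immediately from Corollary~\ref{corol:confidence},'' and you have unpacked precisely that, using Corollary~\ref{corol:confidence} for conditional correctness, the constant-qubit bound on $\mcV$ to collapse the whole interaction to a \BPP\ computation, and majority-vote amplification to contradict the assumed average-case hardness of $L$ on $D$. The extra details you supply (classical simulation of the constant-size quantum register and the amplification step) are natural and correct expansions of the paper's terse argument.
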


This corollary follows immediately from Corollary \ref{corol:confidence}.

One might wonder whether it is possible to somehow get
convinced not only of the fact that the computation
that was performed by the prover is indeed the desired one,
but also that the prover must have had access to some quantum
computer. We prove:

\begin{corol}\label{corol:hybridProver}
  There exists a language $\mcL \in \BQP$ such that even if the prover in our
  \QPIP\ is replaced by one with unbounded classical computational power, but
  only a constant number of qubits, the prover will not be able to convince the
  verifier to accept: $\mcV$ in this case aborts the computation with high
  probability.
\end{corol}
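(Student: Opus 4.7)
The plan is to take $\mcL = \qc$ and argue that the verifier of the \QPIP\ protocol of \Th{thm:qcircuit} is forced to abort with overwhelming probability when interacting with any prover $\mcP'$ restricted to $k = O(1)$ qubits, no matter how much classical computation $\mcP'$ is allowed. The crux is that ``unbounded classical power'' does nothing to enlarge the prover's ability to \emph{physically store} authenticated quantum states, whose exact description depends on a secret key held only by the verifier.

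First, I would recall that either version of the \QPIP\ requires $\mcP'$, at some stage of the execution, to hold $N = \Omega(n)$ authenticated registers simultaneously: in the Clifford-based protocol every authenticated qubit not currently being operated on is parked at the prover, and in the polynomial-codes protocol the authenticated inputs and magic states are all transferred at the beginning. From $\mcP'$'s point of view, after averaging over the uniformly random authentication key, each such register is indistinguishable from a maximally mixed state on $d$ qubits (for Clifford, because the Clifford group is a $1$-design; for polynomial codes, because of the random Pauli mask), so the joint state of the $N$ registers he must keep is maximally mixed on $Nd$ qubits. A $k$-qubit memory has Hilbert space dimension $2^{k}$, so any channel through it produces outputs of rank at most $2^{k}$; once $Nd > k$ this is far too small to preserve the full joint state, and a substantial fraction of the authenticated registers is effectively traced out.

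Let $t^{*}$ be the first round in which the verifier operates on, or requests the return of, an authenticated register that has been traced out of $\mcP'$'s memory. The replacement state that $\mcP'$ supplies at time $t^{*}$ is manufactured from his remaining $k$ qubits together with his (unbounded but classical) register; since the verifier's key was never transmitted in the clear, neither component has more than a negligible correlation with that key, and the replacement is effectively key-independent. By the \QAS\ security theorem proved earlier in the paper, such a key-independent substitute is accepted with probability at most $\varepsilon$. Tuning the \QAS\ parameters so that $\varepsilon = 1/\mathrm{poly}(n)$ gives an abort probability of $1 - o(1)$, yielding the corollary.

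The main obstacle I anticipate is making the ``traced out'' step rigorous against adaptive strategies: rather than literally discarding registers, $\mcP'$ might measure arriving registers in adaptively chosen bases, store classical outcomes, and later try to re-prepare substitutes using unbounded classical post-processing. The resolution is that every such strategy is still a channel whose persistent quantum memory is at most $2^{k}$-dimensional, so the \QAS\ security bound of \Th{thm:qcircuit} --- which treats the adversary as an arbitrary channel independent of the key --- applies directly and forces rejection of any replacement at time $t^{*}$ with probability $1 - \varepsilon$. Composing this structural observation with the existing soundness argument is routine bookkeeping rather than a new conceptual step.
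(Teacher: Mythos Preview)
Your approach differs from the paper's and has a real gap. The paper's (sketched) argument does not appeal to the \QAS\ soundness theorem at all; it exhibits a concrete BB84-style test. The verifier authenticates a random string and later asks the prover to measure a randomly chosen register in either the standard or the Fourier basis. A hybrid prover with $c$ qubits of storage must already have committed all but $c$ of the incoming registers to some fixed basis before the request arrives; with probability roughly $\tfrac12(1-\tfrac{c}{n})$ the verifier asks for the conjugate basis on a register the prover no longer holds coherently, whereupon the prover's classical record is uncorrelated with the correct answer and any reply he fabricates passes authentication with probability at most $2^{-d}$. This is a direct information--disturbance argument in the spirit of conjugate-coding security.

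Your route tries to extract the corollary from the \QAS\ security bound alone, and that is where it breaks. The \QAS\ theorem bounds $\tr\big(P_0^{\ket\psi}\rho_B\big)\le\epsilon$, i.e.\ $\Pr[\text{wrong}\wedge\text{VALID}]\le\epsilon$; it does \emph{not} bound $\Pr[\text{VALID}]$. To force an abort you must separately show that $\Pr[\text{correct}]$ is small for a $k$-qubit prover, and both of your arguments for this are unsound. First, ``any channel through a $k$-qubit memory produces outputs of rank at most $2^k$'' is false once unbounded classical side-memory is allowed: measure all $Nd$ qubits in the computational basis, store the outcome $m$ classically, and re-prepare $\ket m$ --- the averaged output has full rank $2^{Nd}$. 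Second, the replacement state is \emph{not} ``effectively key-independent'': the prover's classical measurement record is correlated with the key through the very authenticated state he measured. Showing that this correlation is nevertheless too weak to reproduce a state passing verification is precisely the information--disturbance tradeoff your sketch omits; the paper supplies it via conjugate coding, and one could alternatively argue via entanglement fidelity through a $k$-qubit bottleneck, but some such ingredient is required. Without it, your final invocation of \QAS\ security is circular: it tells you a disturbing channel is caught, not that a bounded-memory channel must disturb.
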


This means that our protocols suggests yet another example in which quantum
mechanics cannot be simulated by classical systems, regardless of how
computationally powerful they are. This property appears already in
bounded storage models \cite{watrous1999sbq}, and of course (in a different
setting) in the \textit{EPR} experiment.


Finally, we remark that in the study of the classical notion of \IP,
a natural question is to ask how powerful the prover must be, to prove
certain classes of languages.
It is known that a \PSPACE\ prover is capable of proving any language
in  \PSPACE, and similarly, it is known that
 \NP\ or \textsf{\#P}
 restricted provers
 can prove any language which they can compute. This is not known for
\textsf{coNP}, \textsf{SZK} or \textsf{PH} \cite{arora:ccm}.
It is natural to ask what is the power of a \BQP\ prover;
our results imply that such a prover can prove the entire
class of \BQP\ (albeit to a verifier who is not entirely classical).
Thus, we provide a characterization of the power of a \BQP\ prover.


\ignore{
Knill \cite{knill2008rbq} has studied independently a related question of
providing methods to test the fact that \emph{small} quantum systems
(of up to ten qubits, say) indeed
constitute a quantum register as they are supposed to. He suggests tests to do
this, based on random Clifford operations and quantum algorithms.  However, his
methods are of a more physics flavor, and they
do not address the fundamental questions presented here.}

\ignore{
As far as we know, our scheme is the first
fault tolerant quantum interactive proof result, though
related notions were discussed by Terhal etc \dnote{complete and check}.}

\subsection{Related Work and Open Questions}
The two questions regarding the cryptographic angle were asked by Childs in
\cite{childs2001saq}, and by Arrighi and Salvail in \cite{blind}, who proposed
schemes to deal with such scenarios.  However \cite{childs2001saq} do not deal
with a cheating prover, and \cite{blind} deals with a restricted set of
functions that are classically verifiable.

After  deriving  the results  of  this work,  we  have  learned that
Broadbent, Fitzsimons,  and  Kashefi  \cite{broadbent2008ubq}  have proven
related results. Using measurement based quantum computation, they
construct a protocol for universal blind quantum computation.
In their case, it suffices that the verifier's
register consists of a single qubit. Their results imply
similar implications to ours, though these are implicit in
\cite{broadbent2008ubq}.

An important and intriguing open question is whether it is possible to
remove the necessity for even a small quantum register, and
achieve similar results in the more natural \QPIP\ model
in which the verifier is
entirely classical. This would have interesting fundamental implications
regarding the ability of a classical system to learn and test a quantum
system.

Another interesting (perhaps related?) open question is to study the model we
have presented here of \QPIP, with more than one prover. Possibly, multiprover
\QPIP\ might be strong enough even when restricted to classical communication.

This work also raises some questions in the philosophy of science.
In particular, it suggests the possibility of formalizing, based on
computational complexity notions,
the interaction between physicists and Nature; perhaps the evolution of
physical theories..
Following discussions with us at preliminary stages of this work,
Jonathan Yaari is currently studying ``interactive proofs with Nature''
from the philosophy of science aspect \cite{JonathanThesis}.

\vb\textbf{Paper Organization}
We start by some notations and background in \Sec{Back}. In
\Sec{sec:authentication} we present both our \QAS\ and prove their security.
In \Sec{sec:IPQ} we present our \QPIP\ protocols together some aspects of their security proofs. Other proofs are delayed to the appendices due to lack
of space:
Fault tolerance is explained in \pen{app:ft}.
Blind delegated quantum computation is proved in \pen{app:blind}.
The corollaries related to the interpretations of the results
are proven in \pen{app:interpretation}.

\section{Background}\label{Back}
\subsection{Pauli and Clifford Group}\label{sec:PauliNClifford}
Let $\mbP_n$ denote the $n$-qubits Pauli group.
$P=P_1\otimes P_2\odots  P_n$ were $P_i \in \{\mcI,X,Y,Z\}$.

\begin{deff}
Generalized Pauli operator over $F_q$:
$X \ket{a} = \ket{\left(a+1\right)\mod q}, ~
Z \ket{a} = \omega_q^a\ket{a}, ~Y = XZ,$
where $\omega_q = e^{2\pi i /q}$ is the primitive q-root of the unity.
\end{deff}
We note that $ZX=\omega_q XZ$.
We use the same notation, $\mbP_n$, for the standard and generalized
Pauli groups, as it will be clear by context which one is being used.

\begin{deff}
For vectors $x,z$ in $F_q^m$, we denote a $P_{x,z}$ the Pauli
operator $Z^{z_1}X^{x_1}\odots Z^{z_m}X^{x_m}$.
\end{deff}

We denote the set of all unitary matrices over a vector space $A$ as
$ \mathbbm{U}(A)$. The Pauli group $\mcP_n$ is a basis to the
matrices acting on n-qubits. In particular, we can write any matrix
$U \in \mathbbm{U}(A\otimes B)$ for $A$ the space of $n$ qubits,
as $\sum_{P\in \mcP_n}P\otimes U_P$ with $U_P$ some matrix on
$B$.

Let $\mfC_n$ denote the $n$-qubit Clifford group. Recall that it is
a finite subgroup of $\mathbbm{U}(2^n)$ generated by the Hadamard matrix-H, by $K=\left(\begin{array}{ll}
1&0\\0&i\end{array}\right)$, and by controlled-NOT.
The Clifford group is characterized by the property that it maps the
Pauli group $\mbP_n$ to itself, up to a phase $\alpha\in\{\pm 1,\pm i\}$. That is:
$\forall C\in\mfC_n ,  P\in \mbP_n: ~\alpha CPC^\dagger \in \mbP_n$

\begin{fact}\label{fa:randomclifford}
A random element from the Clifford group on $n$ qubits can be
sampled efficiently by choosing a string $k$ of $poly(n)$ length uniformly
at random. The map from $k$ to the group element represented as a product
of Clifford group generators can be computed in classical polynomial
time.
\end{fact}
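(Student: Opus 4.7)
The plan is to parameterize Clifford elements by their action on Pauli generators, observe that this action ranges over a set of size $2^{O(n^2)}$ which admits a simple iterative uniform sampler, and then invoke a constructive decomposition procedure that rewrites the sampled element as a product of $H$, $K$, and CNOT gates in classical polynomial time.

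First, I would recall that, by the defining property stated just before the Fact, every Clifford $C$ is determined up to a global phase by the $2n$ images $CX_iC^\dagger, CZ_iC^\dagger \in \mbP_n$. Writing each image as a pair $(x,z,s) \in F_2^n \times F_2^n \times \{\pm 1\}$ (binary vector plus sign), the commutation relations among the $X_i,Z_i$ force the $2n$ binary vectors to form a symplectic basis of $F_2^{2n}$ with respect to the symplectic form $\omega((x,z),(x',z'))=x\cdot z' + z\cdot x'$. Conversely, any symplectic basis together with any sign pattern extends to a unique Clifford element (mod global phase). Thus sampling uniformly a Clifford (mod phase) reduces to sampling uniformly an element of $Sp(2n,F_2)$ together with $2n$ independent sign bits.

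Next, I would give the iterative sampler for $Sp(2n,F_2)$: pick a uniformly random nonzero $v_1 \in F_2^{2n}$ to be the image of $X_1$; pick $w_1 \in F_2^{2n}$ uniformly among vectors with $\omega(v_1,w_1)=1$ to be the image of $Z_1$; restrict to the symplectic complement of $\mathrm{span}(v_1,w_1)$, which is naturally isomorphic to $F_2^{2(n-1)}$, and recurse. A standard counting argument shows each step is genuinely uniform on the relevant coset and uses $O(n)$ random bits with $O(n^2)$ arithmetic per round, for a total of $O(n^2)$ random bits and $O(n^3)$ classical time. The string $k$ is simply the concatenation of these random choices together with the $2n$ sign bits.

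Finally, to turn the sampled tableau (the $2n\times 2n$ symplectic matrix together with signs) into an explicit product of generators, I would invoke the symplectic Gauss--Jordan elimination from the Gottesman--Knill normal form: using $H$, $K$, and CNOT in a prescribed sequence of $O(n^2)$ row/column operations, reduce the tableau to the identity; the inverse of the recorded sequence is the desired decomposition of the Clifford element, has length $O(n^2)$, and the reduction runs in $O(n^3)$ classical time. The only part of the argument with real content is the correctness and termination of this elimination, which is the main obstacle; however, this is entirely standard in the stabilizer-formalism literature and can be imported as a black box, so the Fact follows.
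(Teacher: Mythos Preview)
The paper does not actually prove this statement: it is recorded as a \emph{Fact} with no accompanying argument and is treated as standard background from the stabilizer-formalism literature. Your proposal is therefore not competing against any argument in the paper, and it is a correct and entirely standard way to establish the Fact: the tableau parameterization of $\mfC_n$ (mod global phase) by $Sp(2n,F_2)\times\{\pm 1\}^{2n}$, the iterative symplectic-basis sampler, and the symplectic Gauss--Jordan reduction to an $O(n^2)$-gate circuit over $\{H,K,\mathrm{CNOT}\}$ are precisely the ingredients one would cite.

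One small caveat worth flagging: the Fact, as phrased, suggests a \emph{deterministic} polynomial-time map from a uniformly random fixed-length string $k$ to a uniformly random Clifford element, whereas your sampler naturally uses rejection (e.g.\ to pick a nonzero $v_1$) and so consumes a random number of bits. Since $|Sp(2n,F_2)|$ is not a power of $2$, an exactly uniform deterministic map from uniform bit-strings of a fixed length cannot exist; one either accepts an exponentially small statistical error, reads $k$ as the random tape of a rejection-sampling procedure, or indexes the group by an integer range and samples that. For every use of the Fact in this paper the distinction is immaterial, but you should state explicitly which reading you intend.
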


\subsection{Signed Polynomial Codes}\label{sec:SignedPolynomial}
For background on polynomial quantum codes see \pen{app:poly}.
\begin{deff}\label{def:SignedPolynomial} {\bf (\cite{benor2006smq})}
  The signed polynomial code with respect
 to a string $k\in\{\pm 1 \}^m$ (denoted  $\mcC_k$) is defined by:
\begin{equation}
\ket{S^k_a} \EqDef \frac{1}{\sqrt{q^d}}\sum_{f:def(f)\le d ,
f(0)=a}\ket{k_1\cdot f(\alpha_1){\ldots} k_m\cdot f(\alpha_m)}
\end{equation}
\end{deff}
We use $m=2d+1$. In this case, the code can detect
$d$ errors. Also,  $\mcC_k$
 is self dual \cite{benor2006smq}, namely,
the code subspace is equal to the dual code subspace.


\section{Quantum Authentication}\label{sec:authentication}
\subsection{Definitions}

\begin{deff} \label{def:qas} (adapted from Barnum et. al. \cite{barnum2002aqm}).
  A quantum authentication scheme (\QAS) is a pair of polynomial time quantum
  algorithms $\mcA$ and $\mcB$ together with a set of classical keys $\mcK$ such
  that:
\begin{itemize}
\item $\mcA$ takes as input an m-qubit message system M and a key
$k\in\mcK$ and outputs a transmitted system $T$ of $m+d$ qubits.
\item $\mcB$ takes as input the (possibly altered) transmitted system
$T'$ and a classical key $k \in \mcK$ and outputs two systems: a
$m$-qubit message state $M$, and a single qubit $V$ which
indicate whether the state is considered valid or erroneous. The basis states of
$V$ are called $\ket{VAL},\ket{ABR}$. For a fixed
$k$ we denote the corresponding super-operators by $A_k$ and
$B_k$.
\end{itemize}
\end{deff}

\ignore{
Note that $\mcB$ may well have measured the qubit $V$, but we would rather
keep the quantum description so that we can use density matrices.

There are two conditions which should be met by a quantum
authentication protocol. On the one hand, in the absence of
intervention, the received state should be the same as the initial
state and $\mcB$ should not abort.

On the other hand, we want that when the adversary does intervene,
$B$'s output systems have high fidelity to the statement ``either
$\mcB$ rejects or his received state is the same as that sent by
$\mcA$''. This is formalized below for pure states; one can deduce
the appropriate statement about fidelity of mixed states, or for
states that are entangled to the rest of the world (see
\cite{barnum2002aqm} Appendix B).
}

Given a pure state $\ket{\psi}$, consider the following test on
the joint system $M, V$ : output a $1$ if the first $m$ qubits are in
state $\ket{\psi}$ or if the last qubit is in state $\ket{ABR}$,
otherwise, output  $0$. The corresponding projections are:
\begin{eqnarray}
P_1^{\ket{\psi}} & = &\ket{\psi}\bra{\psi} \otimes I_V+ \left(I_M- \ket{\psi}\bra{\psi}\right) \otimes
\ket{ABR}\bra{ABR} \\
P_0^{\ket{\psi}} & = &(I_M -
\ket{\psi}\bra{\psi}) \otimes \ket{VAL}\bra{VAL}
\end{eqnarray}
The scheme is secure if for all possible input states $\ket{\psi}$ and for all
possible interventions by the adversary, the expected fidelity of
$\mcB$'s output to the space defined by $P^{\ket{\psi}}_1$ is high:

\begin{deff}
A \QAS\ is secure with error $\epsilon$ if for every state $\ket\psi$
it holds:
\begin{itemize}
\item
Completeness: For all keys $k\in \mcK$ :
$B_k(A_k(\ket{\psi}\bra{\psi})) = \ket{\psi}\bra{\psi}\otimes
\ket{VAL}\bra{VAL}$
\item
Soundness: For any super-operator $\mcO$ (representing a possible intervention by the adversary), if $\rho_B$ is defined by
defined by $\rho_B = \frac{1}{|\mcK|}\sum_k B_k\big(\mcO(A_k(\ket{\psi}\bra{\psi}))\big)$, then:\; $\tr{(P_1^{\ket{\psi}}\rho_B)} \ge 1- \epsilon$.
\end{itemize}
\end{deff}

\subsection{Clifford Authentication Scheme}\label{sec:CliffordAuth}

\begin{protocol}\textbf{Clifford based \QAS\ }:
  Given is a state $\ket\psi$ on $m$ qubits and $d\in\mN$ a security
  parameter. We denote $n=m+d$. The set of keys $\mcK$ consists of succinct
  descriptions of Clifford operations on $n$ qubits
(following Fact \ref{fa:randomclifford}).
We denote by $C=C_k$ the operator specified by a
  key $k\in \mcK$.
\begin{itemize}
\item \textbf{Encoding - $A_k$}: Alice applies $C_k$ on the state $\ket\psi \otimes
\ket{0}^{\otimes d}$.
\item \textbf{Decoding - $B_k$}: Bob applies $C_k^\dagger$ to the received
      state.  Bob measures the auxiliary registers and declares the state valid
      if they are all $0$, otherwise Bob aborts.
\end{itemize}
\end{protocol}

\begin{thm}{\label{thm:CliffordAuth}}
  The Clifford scheme applied to $n=m+d$ qubits is a \QAS\ with security
  $2^{-d}$. Where $d$ is the number of qubits added to a message on $m$ qubits.
\end{thm}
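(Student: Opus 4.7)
The plan is to dispatch completeness immediately and then prove soundness via a Pauli-twirling argument. Averaging the adversary's attack over a uniformly random Clifford conjugation should collapse any intervention into a depolarizing-style Pauli channel on the encoded state, after which the $2^{-d}$ bound follows by counting how many Pauli errors slip past Bob's ancilla check. Completeness is immediate, since with no intervention $B_k\circ A_k$ is the identity on the message and sets the flag to $\ket{VAL}$.

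For soundness I would first purify $\mcO$ to a unitary $U$ on the transmitted register $T$ together with an adversary ancilla $W$ in a pure state $\ket\chi$, and expand $U = \sum_{P\in \mbP_n} P\otimes U_P$ in the Pauli basis of $T$. Setting $\ket{\tilde\psi} = \ket\psi\otimes \ket{0}^{\otimes d}$ and $\rho = \ket{\tilde\psi}\bra{\tilde\psi}$, the reduced state on Bob's side just before the ancilla measurement, averaged over the key, is
\[
\bar\rho_T \;=\; \sum_{P,P'} \bra{\chi} U_{P'}^\dagger U_P \ket{\chi}\;\mathbb{E}_k\!\left[C_k^\dagger P C_k\,\rho\, C_k^\dagger P' C_k\right].
\]
I then need two facts. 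First, the off-diagonal ($P\ne P'$) terms vanish: using that $\mbP_n$ is normal in $\mfC_n$, I would factor any Clifford as $C = \tilde C\,Q$ with $Q\in \mbP_n$ and $\tilde C$ ranging over a fixed transversal; conjugation by $Q$ flips $P$ by a sign $(-1)^{\langle Q,P\rangle}$ determined by the anticommutation form, so the inner average over $Q$ picks up $\mathbb{E}_Q(-1)^{\langle Q, PP'\rangle}$, which vanishes whenever $PP'$ is not a scalar (i.e.\ whenever $P\ne P'$ as canonical Paulis), since then exactly half of $\mbP_n$ anticommutes with $PP'$. Second, for each $P = P'\ne I$, the standard transitivity of Clifford conjugation on non-identity Paulis makes the expectation equal to the uniform mixture $(4^n-1)^{-1}\sum_{Q\ne I} Q\rho Q$ independently of $P$. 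Combined with $\sum_P\bra{\chi}U_P^\dagger U_P\ket{\chi} = 1$ (from $U^\dagger U = I$ and Pauli-basis orthogonality), $\bar\rho_T$ collapses to the depolarizing-style form
\[
\bar\rho_T \;=\; q_I\,\rho \;+\; \frac{1-q_I}{4^n-1}\sum_{Q\ne I} Q\,\rho\,Q, \qquad q_I \EqDef \bra{\chi} U_I^\dagger U_I \ket{\chi}\in[0,1].
\]

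All that remains is to bound $\tr(P_0^{\ket\psi}\,\bar\rho_B)$, where $\bar\rho_B$ is $\bar\rho_T$ post-composed with Bob's ancilla measurement and the $\ket{VAL}/\ket{ABR}$ flag. Writing $Q = Q_M\otimes Q_A$, Bob's test passes precisely when $Q_A\in\{I,Z\}^{\otimes d}$ (the $2^d$ Paulis that fix $\ket{0}^{\otimes d}$), and the recovered message coincides with $\ket\psi$ exactly when $Q_M = I$; only terms with $Q_M\ne I$ and $Q_A\in\{I,Z\}^{\otimes d}$ contribute to $P_0^{\ket\psi}$, yielding
\[
\tr\!\left(P_0^{\ket\psi}\,\bar\rho_B\right) \;\le\; \frac{(1-q_I)(4^m-1)\cdot 2^d}{4^n-1} \;\le\; \frac{4^m\cdot 2^d}{4^{m+d}} \;=\; 2^{-d},
\]
which is the claimed bound. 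The main obstacle is the off-diagonal vanishing step above: the $\pm 1$ phases introduced by Clifford conjugation require careful bookkeeping, and the factorization through the Pauli normal subgroup is the cleanest route I see without invoking the heavier fact that $\mfC_n$ is a unitary $2$-design.
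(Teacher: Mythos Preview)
Your proposal is correct and follows essentially the same route as the paper's proof: Pauli-decompose the adversary's unitary, kill the $P\ne P'$ cross terms by Clifford averaging, collapse the diagonal terms to a uniform non-identity Pauli mixture via Clifford transitivity, and finish by counting the $(4^m-1)\cdot 2^d$ undetected Paulis. The only cosmetic differences are that you eliminate the cross terms by averaging over the full Pauli normal subgroup (a character-sum argument) whereas the paper exhibits a single Pauli $Q_i$ with $C\mapsto Q_iC$ flipping the sign (its \Le{clifTw}), and you bound $\tr(P_0^{\ket\psi}\bar\rho_B)$ directly rather than $\tr(P_1^{\ket\psi}\bar\rho_B)$ as the paper does.
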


\begin{proof} \textit{Sketch.(The full proof is given in \pen{app:cliffordsecurity}). }
We show that when Eve applies a non
trivial Pauli operator, then averaging over the random Clifford operators, the
effective transformation on the original state is as an application of a random
Pauli. Hence, any Pauli attack is detected with high probability. We then
show that \emph{any} attack of Eve is reduced to a very specific form:
\begin{equation}\label{eq:Mdef}\mcM_s: \rho \rightarrow s\rho+ (1-s)\frac 1
{4^n-1}\sum_{P\ne \mcI} P\rho P^\dagger
\end{equation}
(for some $0\le s\le 1$). It is not
hard to see, using linearity, that this type of attack is detected
with high probability.
\end{proof}

\ignore{
Essentially, we show that the average of a conjugation of any operator by
Clifford operators is equal to the mixing operator $\mcM_s$ for some $s$;
furthermore, we show that these type of attacks are detected with high
probability.}

Given $r$ blocks of $m$ qubits each, we can apply the
\QAS\ separately on each one of the $r$ blocks.
$\mcB$ declares the state valid  if all
of the $r$ registers are valid according to the original Clifford \QAS\ .
We call this the {\it concatenated Clifford protocol}.
The completeness of the concatenated protocol is trivial,
reasoning as in the original \QAS. For soundness we have the following
theorem, whose proof is given in \pen{app:concatclifford}.

\begin{thm} \label{thm:CliffordConcat} The concatenated Clifford protocol has
  the security of the individual Clifford with security parameter $d$, \QAS,
  that is $2^{-d}$. This holds regardless of the number of blocks ($r$) that are
  authenticated.
\end{thm}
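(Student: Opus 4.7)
The strategy is to lift \Th{thm:CliffordAuth}'s single-block argument by exploiting the independence of the Clifford keys $k_1,\ldots,k_r$. Write the joint encoder as $C_k = C_{k_1}\otimes\cdots\otimes C_{k_r}$. Expanding Eve's attack $\mcO$ in the Pauli basis over all $r(m+d)$ qubits, each Pauli decomposes as a tensor product $P = P_1\otimes\cdots\otimes P_r$ of per-block components, and averaging over $k$ factorizes across blocks. Applying the single-block Clifford twirl blockwise reduces $\mcO$ to a Pauli channel $\tilde{\mcO}(\rho) = \sum_{\vec P} c_{\vec P}\,(\bigotimes_i P_i)\,\rho\,(\bigotimes_i P_i)^\dagger$, with the crucial symmetry that, for every block $i$ and every fixing $\vec P_{-i}$ of the other blocks' Paulis, the conditional distribution of $P_i$ is a mixture of a delta at $\mcI$ and the uniform distribution over the $4^{m+d}-1$ non-identity Paulis on block $i$. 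This follows from the Clifford group's transitive action (up to phase) on the non-identity Paulis of a single block, the same ingredient already used inside the proof of \Th{thm:CliffordAuth}.

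Given this reduction, I would bound the failure probability $\tr{(P_0^{\ket\psi}\rho_B)}$ by a first-non-identity-block argument. A failure of $P_1^{\ket\psi}$ requires both that some $P_i\neq\mcI$ (otherwise the output is exactly $\ket\psi$) and that every block reports $\ket{VAL}$. Fix an arbitrary ordering of the blocks and let $i^\ast$ be the first block with $P_{i^\ast}\neq\mcI$. Conditioned on the event ``first non-identity block is $i^\ast$'', the symmetry above forces $P_{i^\ast}$ to be uniform over the non-identity Paulis of block $i^\ast$; decomposing $P_{i^\ast} = P_{M_{i^\ast}}\otimes P_{V_{i^\ast}}$, block $i^\ast$'s decoder reports $\ket{VAL}$ only when $P_{V_{i^\ast}}\ket{0}^{\otimes d}\propto\ket{0}^{\otimes d}$, i.e.\ when $P_{V_{i^\ast}}$ is $Z$-type, and only $2^d$ of the $4^d$ Paulis on the $d$ ancilla qubits are $Z$-type. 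Hence $\Pr[V_{i^\ast}=\ket{VAL}\mid\text{first non-id is }i^\ast]\le 2^{-d}$, and summing over $i^\ast$ yields $\Pr[\text{failure}]\le 2^{-d}\cdot\Pr[\exists\,i:\,P_i\neq\mcI]\le 2^{-d}$, independent of $r$.

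The main obstacle I anticipate is establishing the conditional-uniformity claim when Eve's attack can be arbitrarily entangled across the $r$ blocks rather than being a product of per-block attacks. The tensor structure of $C_k$ combined with the independence of the keys lets the twirl factorize in the Pauli expansion of $\mcO$, so the single-block symmetrization can be carried out per block without disturbing the others. Once that structural step is in place, the first-non-identity-block trick replaces what would otherwise be a naive union bound (costing an unwanted factor of $r$) with a single per-block $2^{-d}$ estimate, yielding the claimed $r$-independent security.
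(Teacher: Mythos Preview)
Your proposal is correct and follows essentially the same approach as the paper: expand Eve's attack in the Pauli basis, use the independence of the per-block Clifford keys together with the single-block twirl lemmas (\Le{mix} and \Le{clifTw}) to reduce to a Pauli channel whose per-block components are either identity or uniformly random non-identity, and then bound the failure probability by the $2^{-d}$ detection bound on any one non-identity block. The only cosmetic difference is packaging: the paper carries out an explicit case analysis for $r=2$ (the four cases $s,h,r,t$ according to which subset of blocks is non-identity) and then asserts the general $r$ case is analogous, whereas your first-non-identity-block conditioning handles all $r$ uniformly in one stroke---but the underlying mechanism is identical.
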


\subsection{Polynomial Authentication Scheme}
\label{sec:PolynomialAuth}

\begin{protocol} \textbf{Polynomial Authentication protocol }:
Alice wishes to send the state $\ket\psi$ of dimension $q$. She
chooses a security parameter $d$, and a code length $m=2d+1$.
\begin{itemize}
\item \textbf{Encoding:}
 Alice randomly selects a pair of keys: a sign key $k\in\{\pm
1\}^m$ and a Pauli key $(x,z)$ with $x,z \in {F_q}^m$. She
 encodes $\ket\psi$ using the signed quantum polynomial code
$\mcC_k$ of polynomial degree $d$ (see \Def{def:SignedPolynomial}).
She then applies the Pauli $P_{(x,z)}$ (i.e., for $j \in \{1,..,m\}$ she applies  $Z^{z_j}X^{x_j}$ on the $j$'th qubit).

\item \textbf{Decoding}
 Bob applies the inverse of
$P_{(x,z)}$, and performs the error detection procedure of the code $\mcC_k$.
He aborts if any error is found and declares the message valid  otherwise.
\end{itemize}
\end{protocol}

The completeness of this protocol is trivial. We proceed to prove the security of the protocol.

\ignore{if Eve does not interfere with the state then Bob recovers
exactly the original
state and never aborts.}

\begin{thm}{\label{thm:PolynomialAuth}} The polynomial authentication scheme
  is secure against general attacks with security~$2^{-d}$
\end{thm}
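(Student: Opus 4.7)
The plan is to execute the two-step reduction flagged in the paper's overview: use the random Pauli key to turn an arbitrary intervention into an incoherent mixture of Pauli errors (a ``Pauli twirl''), and then use the random sign key to show that every non-identity Pauli error is an undetected logical operator of $\mcC_k$ for at most a $2^{-d}$ fraction of sign keys.

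First I would carry out the Pauli twirl. Purify Eve's intervention to a unitary $U$ on the $m$-qudit transmitted register $T$ together with her private ancilla $E$, and expand $U = \sum_{(a,b)\in F_q^{2m}} P_{(a,b)} \otimes U_{(a,b)}$ in the $m$-qudit Pauli basis. When Bob undoes the key by conjugating $T$ with $P_{(x,z)}^\dagger$ and we average over a uniformly random $(x,z)$, each cross term $P_{(a,b)} \rho_k P_{(a',b')}^\dagger \otimes U_{(a,b)} \sigma_E U_{(a',b')}^\dagger$ picks up a character $\omega_q^{\sum_i z_i(a_i-a'_i)-x_i(b_i-b'_i)}$ which sums to zero unless $(a,b)=(a',b')$. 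What survives is a convex mixture $\sum_{(a,b)} p_{(a,b)} \bigl(P_{(a,b)} \rho_k P_{(a,b)}^\dagger\bigr) \otimes \sigma_{(a,b)}$, where $\rho_k$ is the signed-polynomial encoding of $\ket\psi$ and $\sigma_{(a,b)}$ is an operator on $E$. By linearity of $P_1^{\ket\psi}$ it then suffices to bound security against a single deterministic Pauli error $P_{(a,b)}\ne\mcI$.

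Next I would analyze a fixed non-identity $P_{(a,b)}$ against a uniformly random sign key $k\in\{\pm 1\}^m$. Bob's syndrome measurement rejects $P_{(a,b)}$ unless it commutes with every stabilizer of $\mcC_k$; and if it additionally lies in the stabilizer, the encoded state is unchanged and the attack is harmless. So the only bad event is that $P_{(a,b)}$ acts as a non-trivial logical operator of $\mcC_k$. Using that $m=2d+1$ and that $\mcC_k$ is self dual, the commutation condition unfolds into the requirement that both sign-weighted vectors $(k_i a_i)_i$ and $(k_i b_i)_i$ belong to the Reed--Solomon evaluation code of degree $\le d$ at $\alpha_1,\ldots,\alpha_m$. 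For any fixed non-zero $(a,b)$ this is a system of $d$ independent linear constraints on $k$, and a direct counting argument shows that the random sign key satisfies it with probability at most $2^{-d}$, uniformly in $(a,b)$. Summing against the Pauli-twirl distribution yields $\tr\bigl(P_1^{\ket\psi}\rho_B\bigr) \ge 1-\sum_{(a,b)\ne 0} p_{(a,b)}\cdot 2^{-d} \ge 1-2^{-d}$, as required.

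The hard part is the second step, and the subtle point flagged in the introduction is exactly why: for authentication one must control undetected logical errors, not only detectable ones, and this is a genuinely stronger demand than the usual error-correction property. It is precisely because ``secure against Pauli attacks'' does not propagate to ``secure against general attacks'' by linearity that the Pauli key is indispensable in step one. Making step two rigorous requires a clean description of the logical operators of $\mcC_k$ as a function of $k$ and the counting inequality that turns ``logical'' into a rare event over random $\pm 1$ signs, uniformly in the attack $(a,b)$; this is the combinatorial core of the argument.
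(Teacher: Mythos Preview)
Your two-step plan matches the paper's proof exactly: a Pauli twirl via the random Pauli key (the paper's lemma that $\sum_{Q\in\mbP_m} Q^\dagger P Q\,\rho\,Q^\dagger P'^\dagger Q = 0$ for $P\ne P'$), followed by a sign-key analysis of a fixed non-identity Pauli error. The twirl step is correct as you describe it.

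The second step is where your sketch is under-specified, and your suggested mechanism is not quite the one that works. Your normalizer/logical-operator formulation is equivalent to the paper's error-detection formulation, and the condition that $(k_i a_i)_i$ lie in the degree-$\le d$ Reed--Solomon code is correct. But the assertion that this is ``a system of $d$ independent linear constraints on $k$'' from which a ``direct counting argument'' yields $2^{-d}$ is not right as stated: the parity checks are $F_q$-linear while $k$ ranges only over $\{\pm 1\}^m$, and for low-weight $a$ the constraints are not even independent. The paper instead splits on the Hamming weight of the $X$-part. If $1\le |a|\le d$, the interpolating polynomial would have $\ge m-d=d+1$ zeros and hence be zero, so the error is detected with certainty. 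If $|a|>d$, one interpolates through $d+1$ coordinates (a function of $k_{d+1},\ldots,k_m$) to pin down the unique candidate polynomial $f$; then the remaining events $k_i = a_i^{-1} f(\alpha_i)$ for $i=1,\ldots,d$ are genuinely independent $\{\pm1\}$-coin conditions, each of probability $\le\tfrac12$, giving $2^{-d}$. The $Z$-part follows by self-duality, and for a general $P_{(a,b)}$ one uses whichever of $a,b$ is nonzero. Your route can be made to work (use the MDS property of the dual RS code to get rank $d$ whenever $|a|\ge d$, then observe that any rank-$d$ system over $F_q$ has at most $2^{m-d}$ solutions in $\{\pm1\}^m$ by Gaussian elimination), but that is more than the ``direct counting'' you advertise, and the paper's interpolation argument is the cleaner way to land the bound.
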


\begin{proof} A sketch was given in the introduction;
the full proof is given in
\pen{app:securitypolynomial}.
\end{proof}

We notice that in this scheme a
$q$-dimensional system is encoded into a system of
dimension $q^m=q^{2d+1}$. The same security is achieved in the Clifford
 \QAS\ by encoding $q$ into $q\cdot 2^d$ dimensions.
 The polynomial scheme is somewhat worse
in parameters, but still with an exponentially good security.

To encode several registers, one can independently authenticate
each register as in the Clifford case,
(\Th{thm:CliffordConcat}) but in fact we can use the same sign key
$k$ for all registers, while still maintaining security.
This fact will be extremely useful in \Sec{sec:IPQ}.
The following theorem is proved in \pen{app:concatpolynomial}.

\begin{thm} \label{thm:concatpolynomial} The concatenated polynomial based \QAS\ (with the same sign key for all registers), and with degree $d$ polynomial,  has
the same security as the individual \QAS, that is: $2^{-d}$.
\end{thm}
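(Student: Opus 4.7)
The strategy is to mirror the two-stage argument used in the proof of \Th{thm:PolynomialAuth}, adapting each stage to the concatenated setting. Stage one uses the \emph{independent} per-register Pauli keys to twirl Eve's attack into a mixture of tensor-product Pauli operators; stage two uses the \emph{shared} sign key to argue that any such non-trivial Pauli is detected with probability at least $1-2^{-d}$, by restricting attention to a single block on which the Pauli acts non-trivially.

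For stage one, expand the Kraus operators of an arbitrary adversarial super-operator $\mcO$ in the Pauli basis on the full $r$-block transmitted system; every basis element automatically factors as $P = P_1 \otimes \ldots \otimes P_r$ across the $r$ registers. Conjugation by $\bigotimes_i P_{x_i,z_i}$ multiplies each $P$ by a phase equal to the product of independent per-register phases. Averaging cross-terms $P\rho Q^{\dagger}$ with $P\ne Q$ using orthogonality of characters on each register separately kills every off-diagonal contribution, leaving a convex mixture of tensor-product Paulis acting on the signed-polynomial-encoded state. Hence it suffices to analyse Pauli attacks of the form $P = P_1\otimes\ldots\otimes P_r$ applied to the jointly encoded state.

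For stage two, fix a non-trivial tensor-product Pauli $P$ and pick any index $j$ with $P_j \ne \mcI$. Bob runs the error-detection procedure of $\mcC_k$ independently on each block using the \emph{same} sign key $k$, and outputs $\ket{VAL}$ only when every block passes; thus the probability of evading detection is at most the probability of not aborting on block $j$ alone. Even when the input state is entangled across blocks, the reduced state on block $j$ is supported on the valid code subspace of $\mcC_k$, so the per-register sign-key-detection lemma inside the proof of \Th{thm:PolynomialAuth}---that no non-trivial Pauli preserves the code subspace for more than a $2^{-d}$ fraction of sign keys---applies verbatim and bounds the expected (over $k$) undetected probability by $2^{-d}$, independent of $r$. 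Combining the two stages via the soundness formulation of \Def{def:qas} then yields the claimed security $\epsilon = 2^{-d}$.

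The main obstacle to overcome is in stage two: sharing the sign key across registers introduces correlations between the per-block decodings, and the reduced state on any given block depends on $k$ when the input is entangled across blocks. The reason the argument still goes through is that the reduced state nevertheless lies in the code subspace of $\mcC_k$, so the single-register bound on the fraction of sign keys under which a non-trivial Pauli preserves the code remains uniform, and no additional randomness beyond the one shared sign key is needed to decouple the registers.
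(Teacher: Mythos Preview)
Your proposal is correct and follows essentially the same two-stage approach as the paper: reduce to Pauli attacks via the independent per-register Pauli twirl (the paper just says the single-register argument ``up to \Eq{eq:FinalCalc}'' carries over verbatim to $\mbP_{rm}$), then bound any non-identity tensor Pauli by restricting to one block where it acts non-trivially and invoking \Le{lem:PauliPolyAuthSec}. Your closing paragraph makes explicit a point the paper leaves implicit---that detection of a Pauli on a block depends only on the sign key and the Pauli (it either preserves the code subspace of $\mcC_k$ or maps it entirely outside), not on the particular state in the code---which is exactly why the shared sign key causes no trouble even for entangled inputs.
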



\section{Interactive Proof For Quantumness}\label{sec:IPQ}

\subsection{Clifford Authentication Based Protocol}\label{sec:cliffordIP}

\begin{protocol} \textbf{Clifford based Interactive Proof for \qc}:
  \label{prot:CliffordIP}Fix a security parameter $\epsilon$.  Given is a
  quantum circuit consisting of two-qubit gates, $U=U_T{\ldots}U_1$, with error
  probability reduced to $\le \delta$.  The verifier authenticates the input
  qubits of the circuit one by one using the (concatenated) Clifford \QAS\ with
  security parameter $d = \lceil{\log{\frac 1 \epsilon}}\rceil$, that is every
  qubit is authenticated by $d+1$ qubits, and sends them to $\mcP$.  For each
  $i=1$ to $m$, the verifier asks the prover for the authenticated qubits on
  which he would like to apply the gate $U_i$, decodes them, aborts if any error
  is found, applies the gate, authenticates the resulting qubits using a new
  pair of authentication keys, and sends the encoded qubits back to
  $\mcP$. Finally, the verifier asks $\mcP$ to send the output authenticated
  qubit, decodes and aborts if any error is found; otherwise, measures the decoded
  qubit and accepts or rejects accordingly. \label{prot:finalMeasurement} In any
  case that $\mcV$ does not get the correct number of qubits he aborts.
\end{protocol}

\begin{statement}{\Th{thm:qcircuit}} For any $\epsilon,\delta >0$
\Prot{prot:CliffordIP} is a \QPIP\ protocol
  with completeness $1-\delta$ and soundness $\delta+\epsilon$ for
  \qc.
\end{statement}

\begin{proof}
  If the prover is honest, the verifier will declare valid with certainty.
  Since the error in the circuit is $\le \delta$, $(1-\delta)$ completeness
  follows. For soundness, we observe that for the verifier to accept if $x$ is
  not in the language, means that he has not aborted, and also, answers YES. Let
  us denote by $P_{bad}$ the projection on this subspace (\emph{Valid} on the first
  qubit, \emph{Accept} on the second).  To bound the probability of this event, we
  observe that the correct state at any given step is a state which is
  authenticated by the concatenated Clifford \QAS. We can thus use the
  decomposition of Eve's attacks to Paulis, namely \Eq{attackProfile}.
Observing
  that a Pauli attack in our scheme is either declared valid or leads to abort,
  implies that the final density matrix can be written
  as
  \begin{equation}\rho_{final}=(\alpha_0\rho_0
    +\alpha_{c}\rho_c)\otimes\ket{VAL}\bra{VAL}
    +\alpha_{1}\bar\rho_1\otimes\ket{ABR}\bra{ABR},
  \end{equation}
where $\rho_c$ is the correct state.
  To bound $\tr(P_{bad}\rho_{final})$ we observe that the left term in bounded
  by the security parameter of the \QAS, namely $\epsilon$, the second term is
  bounded by the error caused by the quantum circuit, namely $\delta$, and the
  third term vanishes.
\end{proof}

The classical communication is linear in the number of gates.  For $\epsilon
=\frac 1 2$, we get $d=1$, and so the verifier uses a register of $4$ qubits.
In fact $3$ is enough, since each of the authenticated qubits can be decoded (or
encoded and sent) on its own before a new authenticated qubit is handled.


\subsection{Polynomial Authentication Based Protocol}
We start by describing how the prover performs
a set of universal gates on authenticated qubits, using classical
communication with the verifier, and special states called
Toffoli states. This set of operations, namely Clifford group operations
augmented with the Toffoli gate, form a universal set
of gates \cite{benor2006smq}.

\vb\textbf{Application of Quantum Gates} \label{des:secapp}
We denote encoded gates (logical operators) with a tilde.
For the full description of how to apply each of these logical gates
see \pen{app:polynomialgates}.  Briefly,
for Pauli operators, the verifier merely updates his Pauli key.
For the control-SUM, and the Fourier transform, the prover applies the
gates transversally as if the code was the standard polynomial codes,
and the verifier updates his sign and Pauli keys.
For the measurement, the prover measures the register, sends the result to
the verifier, who returns its interpretation which he computes using his
keys. The Toffoli gate is applied using the above, on the relevant
authenticated qubits plus an authenticated Toffoli state \cite{benor2006smq}.

\begin{protocol}\textbf{Polynomial based Interactive Proof for
    \qc} \label{prot:PolynomialIP}
Fix a security parameter $\epsilon$.  Given is a
 quantum circuit on $n$ qubits
consisting gates from the above universal set, $U=U_T{\ldots}U_1$.
We assume the circuit has error
 probability $\le \delta$.
The verifier sets  $d = \lceil{\log{\frac 1 \epsilon}}\rceil$ and
 uses $3$ registers of $m=2d+1$ qudits each, where each qudit is
of dimensionality $q>m$.  The verifier uses concatenated
polynomial \QAS\ with security parameter $d$ to authenticate
$n$ input qudits and the necessary number of Toffoli states.
This is done sequentially using $3m$ qudits at a time.
Then, the prover and verifier perform the gates of the circuit
as described above. Finally,
 if the final measurement does not yield an authenticated answer,
      the verifier\textbf{aborts}, otherwise, he accepts or rejects
according to the measurement outcome.
\end{protocol}

\begin{thm}\Prot{prot:PolynomialIP} is a \QPIP\ protocol with
  completeness $1-\delta$ and soundness $\delta +\epsilon$ for \qc.
\label{thm:PolynomialIP}
\end{thm}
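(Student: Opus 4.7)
The plan is to follow the same two-step template as the proof of \Th{thm:qcircuit} for the Clifford protocol: honest-run completeness is immediate, and soundness is reduced to the security of the underlying \QAS. For completeness, if $\mcP$ follows the protocol then at every intermediate step the joint state of the three $m$-qudit registers is the polynomial-\QAS\ encoding (under the current, verifier-known sign and Pauli keys) of the correct intermediate state of $U$; this is precisely what the ``Application of Quantum Gates'' paragraph (and \pen{app:polynomialgates}) establishes for each logical gate in the universal set Clifford$+$Toffoli. Since $U$ answers correctly with probability $\ge 1-\delta$, the verifier's final decoded measurement accepts with probability $\ge 1-\delta$ and never aborts, yielding $1-\delta$ completeness.

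For soundness I would show that any cheating strategy of $\mcP$ across the $T$ gate rounds can be collapsed into a single effective super-operator $\mcO$ acting on the final authenticated state just before $\mcV$'s decoding; the $\epsilon$ bound then follows directly from \Th{thm:concatpolynomial} applied with the shared sign key across all registers. The collapse rests on two facts. First, every prescribed logical gate is realized by a fixed transversal quantum operation on the encoded registers (performed by $\mcP$) plus a purely classical update of $\mcV$'s keys; these transversal operations normalize the generalized Pauli group, so any intervention inserted by $\mcP$ between two logical gates can be commuted forward by Pauli-decomposing it and conjugating each Pauli through the remaining logical gates, at the cost only of a reshuffling of Pauli labels that is absorbed into the final effective $\mcO$. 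Second, the concatenated polynomial \QAS\ uses a single sign key throughout, so the key at the end of the protocol is still a valid authentication key for the composed encoded state, which is exactly what \Th{thm:concatpolynomial} requires. Writing the final joint state in the Pauli decomposition of $\mcO$, together with the observation that in the polynomial \QAS\ every non-identity Pauli on the encoded register is either caught in decoding or accounts for the $\epsilon$ valid-but-wrong term, gives a decomposition of the form
\begin{equation}
\rho_{final}=(\alpha_0\rho_0+\alpha_c\rho_c)\otimes\ket{VAL}\bra{VAL}+\alpha_1\bar\rho_1\otimes\ket{ABR}\bra{ABR},
\end{equation}
from which $\tr(P_{bad}\rho_{final})\le \delta+\epsilon$ follows as in the Clifford proof: the first bracket is bounded by the \QAS\ security $\epsilon$, the correct-state branch contributes at most $\delta$ from the circuit error, and the abort branch contributes $0$.

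The main obstacle will be the careful handling of the Toffoli gadget, where $\mcP$ measures an encoded register and reports the outcome to $\mcV$, who responds with a key-dependent Pauli correction. A dishonest prover can lie about his measurement outcome, and a lie is not a priori of the form ``Pauli applied to the encoded register''. The key observation I would use is that an incorrect classical report is equivalent, at the level of the post-correction encoded state, to $\mcV$ applying the wrong Pauli correction; since $\mcV$'s correction is a function of the secret keys only, this wrong correction is indistinguishable from an adversarial Pauli twist and can be absorbed into $\mcO$. Combined with the fact that the Toffoli magic states are themselves authenticated at the start of the protocol (so they enter the analysis on the same footing as the input qudits), this brings the measurement-based rounds back under the umbrella of the transversal-plus-Pauli argument, and the proof proceeds exactly as in the Clifford-protocol case.
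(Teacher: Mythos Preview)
Your overall architecture---completeness from the honest run, soundness by reducing to the security of the concatenated polynomial \QAS\ and then reading off the same $\delta+\epsilon$ bound as in the Clifford proof---matches the paper. But there is a genuine gap in the soundness argument, and it is exactly the lemma the paper singles out.

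The paper's proof does \emph{not} proceed by commuting the prover's interventions forward through the transversal gates. Instead it first proves \Le{lem:uniformkeys}: at every stage of the protocol, the sign key $k$ and the surviving Pauli keys $\{(x,z)_i\}$ are uniformly distributed and independent \emph{conditioned on everything the prover has seen}. Only once this is established does the rest follow the Clifford template. This lemma is not a technicality you can skip; it is precisely what licenses invoking \Th{thm:concatpolynomial} at the end. The security of the \QAS\ is a statement that averages over a uniformly random key unknown to the adversary. In \Prot{prot:PolynomialIP} the verifier sends the prover classical messages during every Toffoli round (the interpretation of the measured register), and those messages are functions of the secret keys. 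If you do not argue that these replies leak nothing about $k$ or about the Pauli keys of the unmeasured registers, then the prover's final effective super-operator $\mcO$ may be correlated with the key, and \Th{thm:concatpolynomial} simply does not apply. Your ``second fact''---that the sign key is fixed throughout so the final state is encoded under \emph{some} key---addresses only the syntactic requirement; it says nothing about whether that key is still uniform from the prover's point of view.

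Your commuting-forward heuristic is fine for the purely transversal Clifford portion, but it does not substitute for the key-uniformity argument once classical feedback enters. In particular, your treatment of a dishonest measurement report (``a lie is equivalent to $\mcV$ applying the wrong Pauli correction, which can be absorbed into $\mcO$'') tacitly assumes the prover gains no information from $\mcV$'s reply; otherwise the subsequent attacks are key-dependent and cannot be folded into a single key-independent $\mcO$. The paper handles this by checking, gate by gate (Pauli, \textit{SUM}, Fourier, measurement), that the induced transformation on the key tuple is a bijection and that the decoded value returned after a measurement reveals nothing about $k$ or the other registers' Pauli keys. Once you add that lemma, your decomposition of $\rho_{final}$ and the final $\delta+\epsilon$ bound go through exactly as you wrote.
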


This theorem implies a second proof for \Th{thm:qcircuit}.
The size of the verifier's register is naively $3m$, but using the same idea
as in the Clifford case, $m+2$ suffice. With $\epsilon=1/2$, this gives a
register of $5$ qutrits.

\begin{proof} \textit{(Sketch. The full proof can be found in \pen{apen:polyIPproof})}
 The completeness is trivial, similarly to the Clifford case.
To prove the soundness of the protocol we first prove the following lemma.
\begin{lem}\label{lem:uniformkeys}
At any stage of the protocol the verifier's set of keys,
$k$ and $\{(x,z)_i\}_1^n$ are distributed uniformly and
independently.
\end{lem}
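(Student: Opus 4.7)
The plan is to prove \Le{lem:uniformkeys} by induction on the sequence of actions performed in \Prot{prot:PolynomialIP}. The base case is immediate from the authentication step: when the verifier prepares the authenticated inputs and the authenticated Toffoli magic states, he draws the sign key $k\in\{\pm 1\}^m$ and each Pauli key $(x,z)_i \in F_q^m\times F_q^m$ uniformly and independently at random by construction. For the inductive step, I need to check that every atomic action of the protocol---a prover-side application of a Clifford gate from the universal set, a Toffoli (executed via a magic state), or a reported measurement outcome---preserves the uniform and independent joint distribution of $k$ and $\{(x,z)_i\}_1^n$.

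For the purely unitary updates (logical Pauli, logical control-SUM, logical Fourier), the verifier's key update is a fixed, efficiently computable bijection on the space of keys: the Pauli update simply relabels a coordinate in $F_q^m\times F_q^m$ by an affine map, and the sign/key update induced by a transversal Clifford is again linear and invertible in the appropriate coordinates, with no data from the prover entering the bijection. Because a bijective map sends the uniform distribution to itself, uniformity and independence of the keys are preserved, and the inductive hypothesis propagates trivially through any number of such gates. The Toffoli gate reduces to these primitives combined with the consumption of a fresh, independently authenticated magic state, so it is covered by the same argument (augmented by the base-case observation applied to the magic-state keys).

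The main obstacle---and the step that needs real justification---is the measurement action: the prover measures an encoded register and classically sends the outcome $y\in F_q^m$ back, and one must argue that conditioning on $y$ does not collapse the distribution of the remaining keys. The key observation is that, before measurement, the register carries the mask $P_{(x,z)}$ with $x\in F_q^m$ uniform, so the standard-basis outcome on that register is shifted coordinate-wise by $x$. Thus $y$ is a one-time-pad encryption of the underlying codeword measurement, and is statistically independent of the sign key $k$ and of the Pauli keys on all other registers. The verifier's return message to the prover (the interpretation of $y$) is a deterministic function of $(y,k,x,z)$, hence reveals no further information about the keys beyond what $y$ already reveals, which by the argument above is none.

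Putting these pieces together, each atomic step of the protocol acts on the keys either by a bijection (no information loss) or via a classical channel whose output is a one-time-pad of the key data (no information gain for the prover). Combining the base case with these two types of inductive step yields that the joint distribution of $k$ and $\{(x,z)_i\}_1^n$ remains uniform and independent at every stage of \Prot{prot:PolynomialIP}, completing the proof.
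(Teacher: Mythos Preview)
Your overall approach matches the paper's: both argue by induction over the gate sequence, observing that the sign key $k$ is never updated, that the Pauli, Fourier, and control-\textit{SUM} key updates are explicit bijections on the Pauli keys (hence preserve the uniform independent distribution), and that the measurement step leaks nothing because the $x$-part of the Pauli key one-time-pads the reported outcome $y$.

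There is, however, a genuine logical slip in your treatment of the verifier's return message after a measurement. You write that the interpretation ``is a deterministic function of $(y,k,x,z)$, hence reveals no further information about the keys beyond what $y$ already reveals.'' This implication is invalid: a deterministic function of the keys can of course reveal the keys (take the identity map on $k$ as an extreme example). What actually makes the step go through---and what the paper's argument uses---is that the outcome $y$ is uniformly distributed in $F_q^m$ independently of \emph{both} the sign key $k$ \emph{and} the logical value being measured (because $x$ acts as a fresh one-time pad on the codeword). Consequently, handing the prover the decoded logical value $a$ in addition to $y$ still leaves $k$ and the remaining registers' Pauli keys uniformly distributed: conditioning on $(y,a)$ does not bias $k$, since for every choice of $k$ there are exactly the same number of consistent $(x,f)$ pairs. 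Replace your ``deterministic function'' sentence with this independence argument and the proof is complete and essentially identical to the paper's.
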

This implies that the correct state in an encoded states according to
the concatenated \QAS. The rest of the argument follows closely that
of the proof of \Th{thm:qcircuit}.
\end{proof}


\section{Acknowledgements}
D.A. thanks Oded Goldreich, Madhu sudan and Guy Rothblum
for exciting and inspiring conversations that eventually led to this work.
E.E. thanks Avinatan Hassidim for
stimulating and refining ideas, particularly about fault tolerance.
We also thank Gil Kalai, David DiVincenzo and Ari Mizel, for stimulating
questions and clarifications, and Daniel Gottesman for many helpful ideas and
remarks, and in particular, for his help in
proving \Th{thm:CliffordAuth}.

\bibliographystyle{alpha}
\bibliography{qpip}
\newpage \appendix

\section{Polynomial Quantum Error Correction Codes}\label{app:poly}\begin{deff} Polynomial error correction code
  \cite{aharonov1997ftq}. Given $m,d,q$ and $\{\alpha_i\}^m$ where $\alpha_i$
  are distinct non zero values from $F_q$, the encoding of $a\in F_q$ is
  $\ket{S_a}$
\begin{equation}
\ket{S_a} \EqDef \frac{1}{\sqrt{q^d}}\sum_{f:def(f)\le d ,
f(0)=a}\ket{ f(\alpha_1),{\ldots} , f(\alpha_m)}
\end{equation}
\end{deff}

We use here $m=2d+1$, in which case the code subspace is its own dual.
It is easy to see that this code can detect up to $d$ errors
\cite{aharonov1997ftq}.
It will be useful to explicitly state the logical
gates of \textit{SUM}, Fourier ($F$) and Pauli operators ($X,Z$). We will
see that it is possible to apply the logical operations of the
Pauli operators or the controlled-sum by a simple transitive operation. We
can easily verify that applying $X^{\otimes m}$ is the logical $\wt X$
operation:
\begin{equation}\begin{split}
\wt{X} \ket{S_a} =& X^{\otimes m}  \frac{1}{\sqrt{q^d}}\sum_{f:def(f)\le d ,
f(0)=a}\ket{ f(\alpha_1),{\ldots} , f(\alpha_m)} \\
=&  \frac{1}{\sqrt{q^d}}\sum_{f:def(f)\le d ,
f(0)=a}\ket{ f(\alpha_1)+1,{\ldots} , f(\alpha_m)+1}
\end{split}
\end{equation}
setting $f'(\alpha) = f(\alpha)+1$
\begin{equation}\begin{split}
{\ldots} =& \frac{1}{\sqrt{q^d}}\sum_{f':deg(f')\le d ,
f'(0)=a+1}\ket{ f'(\alpha_1),{\ldots} , f'(\alpha_m)}\\
=&\ket{S_{(a+1)}}
\end{split}\end{equation}
Similarly for logical \textit{SUM} , we consider the transitive
application of controlled-sum, that is a  \textit{SUM} operations applied
between the $j$'th register of $\ket{S_a}$ and $\ket{S_b}$.
\begin{equation}\begin{split}
\wt{\textit{SUM}} \ket{S_a} \ket{S_b} =\ & (\textit{SUM})^{\otimes
m}\frac{1}{{q^d}}\sum_{ f(0)=a}\ket{ f(\alpha_1),{\ldots} ,
f(\alpha_m)} \sum_{ h(0)=b}\ket{ h(\alpha_1),{\ldots} ,h(\alpha_m)}\\
=& \frac{1}{{q^d}}\sum_{f(0)=a,h(0)=b}\ket{ f(\alpha_1),{\ldots} ,
f(\alpha_m)}\ket{ h(\alpha_1)+f(\alpha_1),{\ldots} ,
h(\alpha_m)+f(\alpha_m)}
\end{split}\end{equation}
We set $g(\alpha)= f(\alpha)+h(\alpha)$
\begin{equation}\begin{split}
{\ldots} =& \frac{1}{{q^d}}\sum_{f(0)=a,g(0)=a+b}\ket{ f(\alpha_1),{\ldots} ,
f(\alpha_m)} \ket{ g(\alpha_1),{\ldots} ,
g(\alpha_m)} \\
=& \ket{S_a}\ket{S_{a+b}}
\end{split}\end{equation}

Showing what is the logical Fourier transform on the polynomial code requires more work. We first recall the
definition of the Fourier transform in $F_q$:
\begin{eqnarray}
F \ket{ a} &\EqDef& \frac 1 {\sqrt q} \sum_b \omega_q^{ab}\ket b
\end{eqnarray}
We consider an $r$-variant of the Fourier transform which we denote
$F_r$
\begin{eqnarray}
F_r \ket{a} &\EqDef& \frac 1 {\sqrt q} \sum_b \omega_q^{rab}\ket b
\end{eqnarray}
In addition we need the following claim:
\begin{lem}\label{inter} For any $m$ distinct numbers $\{\alpha_i\}_1^m$ there
exists $\{c_i\}_1^m$ such that \begin{eqnarray}
\sum_{i=1}^m c_if(\alpha_i) = f(0)
\end{eqnarray}
For any polynomial of degree $\le m-1$.
\end{lem}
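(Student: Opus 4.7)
The plan is to invoke Lagrange interpolation in the most direct way possible. The space of polynomials over $F_q$ of degree at most $m-1$ has dimension $m$, and the evaluation map $f \mapsto (f(\alpha_1),\ldots,f(\alpha_m))$ is a linear map from this space into $F_q^m$. Since the $\alpha_i$ are distinct, this map is injective (a nonzero polynomial of degree $\le m-1$ has at most $m-1$ roots), hence bijective. In particular, the linear functional $f \mapsto f(0)$ on the space of polynomials pulls back through this isomorphism to a linear functional on $F_q^m$, which is necessarily of the form $(y_1,\ldots,y_m) \mapsto \sum_i c_i y_i$ for some coefficients $c_i \in F_q$.

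To exhibit the coefficients explicitly, I would just write down the Lagrange basis: define
\[
L_i(x) \;=\; \prod_{j \neq i} \frac{x - \alpha_j}{\alpha_i - \alpha_j},
\]
which is well-defined because the $\alpha_i$ are distinct, and which satisfies $L_i(\alpha_j) = \delta_{ij}$ and has degree $m-1$. Then for any $f$ of degree $\le m-1$, the polynomial $\sum_i f(\alpha_i) L_i(x)$ agrees with $f$ at all $m$ points $\alpha_1,\ldots,\alpha_m$, and since both sides have degree $\le m-1$ they must be equal. Evaluating at $x=0$ gives
\[
f(0) \;=\; \sum_{i=1}^m f(\alpha_i) \, L_i(0),
\]
so setting $c_i := L_i(0) = \prod_{j\neq i} \frac{-\alpha_j}{\alpha_i - \alpha_j}$ yields the claim.

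There is no real obstacle here; the lemma is a standard consequence of the uniqueness of polynomial interpolation. The only thing worth noting is that the coefficients $c_i$ depend only on the fixed evaluation points $\{\alpha_i\}$ and not on $f$, which is precisely what is needed so that the verifier in the QPIP protocol can precompute them and use the same linear combination for every encoded qudit. Since the $\alpha_i$ are non-zero in the setup of the polynomial code, none of the $c_i$ vanishes trivially from a numerator issue, though this is not required by the statement itself.
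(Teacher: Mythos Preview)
Your proof is correct and is essentially the same as the paper's: both invoke Lagrange interpolation, write $f(x)=\sum_i f(\alpha_i)\prod_{j\ne i}\frac{x-\alpha_j}{\alpha_i-\alpha_j}$, and read off $c_i=\prod_{j\ne i}\frac{-\alpha_j}{\alpha_i-\alpha_j}$ by evaluating at $x=0$. Your additional linear-algebra framing (bijectivity of the evaluation map) is a nice conceptual wrapper but not a different argument.
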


\begin{proof}
A polynomial $p$ of degree $\le m-1$ is completely determined by it's
values in the points $\alpha_i$. We write $p$ as in the form of the
Lagrange interpolation polynomial:
$f(x) = \sum_i \prod_{j\ne i} \frac{x-\alpha_j}{\alpha_i
-\alpha_j}f(\alpha_j)
$.
Therefore, we set $c_i = \prod_{j\ne i}\frac
{-\alpha_j}{\alpha_i-\alpha_j}$ and notice that it is independent of $p$,
and the claim follows.
\end{proof}

We are now ready to define the logical Fourier transform.
\begin{claim}\label{claim:fourier} The logical Fourier operator $\wt F$ obeys
  the following identity:
\begin{eqnarray}
\wt{F} \ket{S_a} \EqDef F_{c_1}\otimes F_{c_2}\odots F_{c_m}\ket{S_a} =& q^{-m/2}\sum_b
\omega_q^{ab} \ket{\wt{S_b}}
\end{eqnarray}
Where $\wt{S_b}$ is the encoding of $b$ in a polynomial code of
degree $m-d$ on $m$ registers.
\end{claim}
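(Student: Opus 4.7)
The plan is a direct calculation that traces what the tensor product of twisted Fourier transforms does to the superposition defining $\ket{S_a}$. I would first expand using the identity $F_r\ket{x}=q^{-1/2}\sum_y\omega_q^{rxy}\ket{y}$, yielding
$$\wt F\ket{S_a}=\frac{1}{q^{m/2}\sqrt{q^d}}\sum_{\vec y\in F_q^m}\Bigl(\sum_{f:\deg f\le d,\,f(0)=a}\omega_q^{\sum_{i=1}^m c_i\,f(\alpha_i)\,y_i}\Bigr)\ket{\vec y}.$$
Next, I would parameterize $f(x)=a+f_1x+\cdots+f_dx^d$ and observe that the exponent is linear in the free coefficients: it equals $a\,\beta_0(\vec y)+\sum_{j=1}^d f_j\,\beta_j(\vec y)$ with $\beta_j(\vec y)\EqDef\sum_i c_i\alpha_i^j\,y_i$. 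Summing each $f_j$ independently over $F_q$ produces Kronecker deltas, so only those $\vec y$ with $\beta_1(\vec y)=\cdots=\beta_d(\vec y)=0$ survive, each contributing the phase $\omega_q^{a\beta_0(\vec y)}$.

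The central step is to identify the surviving subspace with evaluation vectors of low-degree polynomials. The coefficients $c_i=L_i(0)$ are precisely the Lagrange coefficients from \Le{inter}: applying that lemma to $g(x)=x^j h(x)$ gives $\beta_j(h(\alpha_1),\ldots,h(\alpha_m))=g(0)=0^j\cdot h(0)$, which vanishes for each $j\ge 1$ as soon as $\deg h$ is small enough for the lemma to apply. Hence every evaluation vector of a low-enough-degree polynomial $h$ lies in the surviving subspace. A dimension count closes the identification: the $d$ linear forms $\beta_1,\ldots,\beta_d$ are linearly independent by a Vandermonde argument (the $\alpha_i$ are distinct and nonzero), so the surviving subspace has dimension $m-d$, matching the space of allowed polynomials. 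A further application of \Le{inter} to $h$ itself shows $\beta_0(\vec y)=h(0)$.

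Finally, I would regroup the remaining sum over $\vec y$ by fixing $b\EqDef h(0)$, recognize the resulting partial sum as a scalar multiple of $\ket{\wt{S_b}}$, and collect the phase $\omega_q^{ab}$ to obtain the claimed expression. The main obstacle is the middle step, namely converting the analytic kernel of the $\beta_j$'s into a clean polynomial evaluation code; this requires coupling the Lagrange-interpolation identity with the Vandermonde independence argument. After that, only routine bookkeeping of normalizations remains.
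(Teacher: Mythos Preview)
Your proposal is correct, and it reaches the same conclusion through a genuinely different organization than the paper. The paper does not sum over the coefficients of $f$ first; instead it interprets each output basis vector $\ket{b_1,\ldots,b_m}$ as the evaluation vector $\ket{\bar g}$ of a polynomial $g$ of degree $\le m-1$, \emph{temporarily restricts} to those $g$ with $\deg g\le m-d-1$, and applies \Le{inter} directly to the product $fg$ (which then has degree $\le m-1$) to obtain $\sum_i c_i f(\alpha_i)g(\alpha_i)=f(0)g(0)=ab$. After summing over $f$, the resulting expression already has norm $1$, so the paper concludes that the ignored higher-degree $g$ terms must have vanished. By contrast, you sum over $f$ first to produce the Kronecker deltas $\beta_1=\cdots=\beta_d=0$, and then identify the surviving subspace with the degree-$\le m-d-1$ evaluation code via a Vandermonde dimension count. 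The paper's norm trick is slicker and avoids any linear-algebra side argument; your route is more explicit about \emph{why} the extraneous terms vanish rather than inferring it from unitarity, at the cost of the extra independence check on the $\beta_j$.
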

\begin{proofof}{\Cl{claim:fourier}}
We denote $\ket{\bar{f}} = \ket{f(\alpha_1),{\ldots} ,f(\alpha_m)}$
\begin{eqnarray}
F_{c_1}\otimes F_{c_2}{\ldots} \otimes F_{c_m}\ket{S_a} &= &
q^{-d/2} F_{c_1}\otimes F_{c_2}{\odots} F_{c_m}\sum_{f:def(f)\le
d,f(0)=a}\ket{\bar p} \\
& = & q^{-d/2} q^{-m/2}\sum_{f:def(f)\le d,f(0)=a} \sum_{b_1,{\ldots} ,b_m}
\omega_q^{\sum_i c_if(\alpha_i)b_i}\ket{b_1,{\ldots} ,b_m}
\end{eqnarray}
We think of the $b_i$'s as defining a polynomial $g$ of degree $\le
m-1$ that is $g(\alpha_i)=b_i$ and split the sum according to $g(0)$:
\begin{eqnarray}\label{ignore}
{\ldots} & = & q^{-(m+d)/2} \sum_{f:def(f)\le d,f(0)=a} \sum_b
\sum_{g:deg(g)\le m-1,g(0)=b }
\omega_q^{\sum_i c_if(\alpha_i)g(\alpha_i)}\ket{\bar g}\label{pre}
\end{eqnarray}

We temporally restrict our view to polynomials $g$ with degree at most
$m-d-1$ and therefore  the polynomial $fg$ has degree at most $m-1$.
We use \Le{inter} on $fg$:
\begin{eqnarray}
\sum_{i=1}^m c_i (fg)(\alpha_i) &= fg(0) &=ab
\end{eqnarray}
Going back to \Eq{pre}:
\begin{eqnarray}
q^{-(m+d)/2} \sum_{p,g} \sum_{b\in F_q} \omega_q^{\sum_i c_i
(fg)(\alpha_i)}\ket{\bar g}  &=& q^{-(m+d)/2}
\sum_{b\in F_q}\sum_{f,g} \omega_q^{ab}\ket{\bar g}
\end{eqnarray}
Where the summation is over all $f,g$ such that $f(0)=a$ and
$g(0)=b$ while the degrees of $f$ and $g$ are at most $d$ and
$m-d-1$ respectively.

The sum does not depend on $f$ and there are exactly $q^d$
polynomials $f$ in the sum, therefore, we can write the expression as :
\begin{equation}\begin{split}
{\ldots} =&\quad  q^{-(m+d)/2} \sum_{b\in F_q}q^d\sum_{g}
\omega_q^{ab}\ket{\bar g} \\
\ =&\quad  \frac{ 1}{ \sqrt{q}} \sum_{b\in F_q} \omega_q^{ab}\frac {1}
{ \sqrt{q^{m-d-1}}} \sum_{g:deg(g)\le
m-d-1, g(0)=b}\ket{\bar g} \\
\ =&\quad  \frac 1 {\sqrt{q}} \sum_{b\in F_q} \omega_q^{ab}\ket{\wt{S_b}}
\end{split}\end{equation}
Since the above expression has norm 1, if follows that the
coefficients that we temporally ignored at \Eq{ignore} all vanish.
\end{proofof}
\begin{corol}
If $m=2d+1$ then it follows from \Cl{claim:fourier} that the code is self dual.
\end{corol}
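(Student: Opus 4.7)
The plan is to read off self-duality directly from Claim \ref{claim:fourier}. Recall that for this CSS-style construction, the dual code subspace is the image of the code under the logical Fourier transform $\wt{F}$, i.e.\ the span of the states $\wt{F}\ket{S_a}$ as $a$ ranges over $F_q$. So it suffices to check that this span equals the original code subspace $\mathrm{span}\{\ket{S_a}\}$.

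First, I would substitute into Claim \ref{claim:fourier} and observe that $\wt{F}\ket{S_a}$ is (up to the overall factor $q^{-1/2}$) a uniform Fourier superposition $\sum_b \omega_q^{ab}\ket{\wt{S_b}}$ of codewords of the polynomial code whose codewords involve polynomials of degree at most $m-d-1$ (the bound that actually appears in the Lagrange-interpolation step of the proof, where $\deg(fg)$ must stay below $m$). Specializing to $m=2d+1$ gives $m-d-1 = d$, so $\wt{S_b}$ coincides with $S_b$, the codeword of the original degree-$d$ polynomial code. Hence every $\wt{F}\ket{S_a}$ lies in $\mathrm{span}\{\ket{S_b}\}$.

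Next, since $\wt{F}$ is unitary, $\{\wt{F}\ket{S_a}\}_{a\in F_q}$ spans a subspace of the same dimension $q$ as the code itself. The inclusion established above together with matching dimensions yields equality of the code subspace with its dual, which is exactly the self-duality claim.

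There is essentially only one subtlety: one must use the degree bound $m-d-1$ that is actually forced by the proof of Claim \ref{claim:fourier} (via the requirement $\deg(f)+\deg(g)\le m-1$ so that Lemma \ref{inter} can identify $\sum_i c_i(fg)(\alpha_i)$ with $(fg)(0)=ab$), rather than reading the degree superficially from the statement of the claim. Once this identification is made, the corollary is immediate and requires no further computation.
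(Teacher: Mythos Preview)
Your proposal is correct and is exactly the intended reading: the paper gives no separate proof of the corollary and treats it as immediate from \Cl{claim:fourier}, so your argument that $m=2d+1$ forces the dual-code degree to coincide with $d$ and then a dimension count gives equality is precisely what the paper has in mind. Your observation that one must use the degree bound $m-d-1$ actually established in the proof of \Cl{claim:fourier} (rather than the $m-d$ written in its statement, which appears to be an off-by-one slip) is a genuine and necessary correction---with the stated $m-d$ the specialization $m=2d+1$ would give degree $d+1$ and the corollary would not follow.
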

\begin{claim}
The logical Pauli $Z$ operator $\wt{Z}$ is $Z^{c_1} \odots
Z^{c_m}$.
\end{claim}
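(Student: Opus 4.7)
The plan is to verify directly that $Z^{c_1}\otimes\cdots\otimes Z^{c_m}$ acts on the codeword $\ket{S_a}$ exactly as the logical $Z$ should, i.e. by multiplying by the phase $\omega_q^a$. Since the $c_i$ are already singled out as the Lagrange coefficients of Lemma~\ref{inter}, and the generalized $Z$ acts diagonally in the computational basis, the argument is essentially a one-line computation once the right Lagrange identity is invoked.

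More concretely, I would start by recalling that $Z$ acts on a basis vector as $Z\ket{v}=\omega_q^v \ket{v}$, so on a product basis state,
\[
Z^{c_1}\otimes\cdots\otimes Z^{c_m}\,\ket{f(\alpha_1),\ldots,f(\alpha_m)} \;=\; \omega_q^{\sum_i c_i f(\alpha_i)}\,\ket{f(\alpha_1),\ldots,f(\alpha_m)}.
\]
Every polynomial $f$ appearing in the sum defining $\ket{S_a}$ has degree at most $d\le m-1$, so Lemma~\ref{inter} applies and gives $\sum_{i=1}^m c_i f(\alpha_i)=f(0)=a$. Thus every basis term in the superposition picks up the same phase $\omega_q^a$, and linearity immediately yields
\[
Z^{c_1}\otimes\cdots\otimes Z^{c_m}\,\ket{S_a} \;=\; \omega_q^a\,\ket{S_a}.
\]
This is exactly the defining action of the logical $\wt{Z}$ on the code.

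Because the calculation is immediate once Lemma~\ref{inter} is in hand, I do not anticipate a real obstacle; the only thing to be careful about is that the lemma is being applied to a polynomial of degree $\le d$, which is comfortably within the degree bound $m-1$ under which the lemma holds, so the Lagrange coefficients $c_i$ that were used to define $\wt{F}$ in Claim~\ref{claim:fourier} are the same ones that work here. No separate verification for $m=2d+1$ is needed; the identity holds for any $d\le m-1$.
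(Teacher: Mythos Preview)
Your proof is correct and aligns with what the paper indicates: the paper omits the argument entirely, remarking only that it is ``extremely similar'' to the proof of Claim~\ref{claim:fourier}, which likewise hinges on the Lagrange interpolation identity of Lemma~\ref{inter}. Your direct diagonal computation is in fact cleaner than the Fourier case (no degree-splitting or vanishing-coefficient argument is needed), but the key ingredient is the same.
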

The proof of this claim is omitted since it is extremely
similar to the proof of \Cl{claim:fourier}.

\section{Clifford Authentication Scheme}
\subsection{Security Proof of Clifford \QAS}\label{app:cliffordsecurity}

\begin{proofof}{\Th{thm:CliffordAuth}}
  We denote the space of the message sent from Alice to Bob as $M$.  Without
  loss of generality, we can assume that Eve adds to the message a system $E$
  (of arbitrary dimension) and performs a unitary transformation $U\in
  \mathbbm{U}(M\otimes E)$ on the joint system.  We note that there is a unique
  representation of $U=\sum_{P\in \mbP_n}P\otimes U_P$ since the Pauli matrices
  form a basis for the $2^n\times 2^n$ matrix vector space. We first
  characterize the effect that Eve's attack has on the unencoded message:
  $\ket\psi\otimes{\ket 0}^{\otimes d}$.

\begin{claim}\label{claim:EveAttack}
  Let $\rho=\ket\psi\otimes{\ket 0}^{\otimes d}$ be the state of Alice before
  the application of the Clifford operator. For any attack
  \mbox{$U=\sum_PP\otimes U_P$} by Eve, Bob's state after decoding is
  $\mcM_s(\rho)$, where $s=\tr{(U_\mcI^{\dagger}U_\mcI)}$.
\end{claim}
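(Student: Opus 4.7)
The plan is to directly compute Bob's averaged density matrix $\rho_B$ and exploit the fact that the Pauli group $\mathbb{P}_n$ is a subgroup of $\mathfrak{C}_n$. Taking Eve's initial state to be $\ket{0}_E$ (without loss of generality, by purification) and expanding $U=\sum_P P\otimes U_P$, Alice's encoding by $C$ and Bob's decoding by $C^\dagger$ conjugate Eve's attack through the Clifford, giving
\begin{equation*}
\rho_B \;=\; \mathbb{E}_C \sum_{P,P'} T_{P,P'}\,(C^\dagger P C)\,\rho\,(C^\dagger P'^\dagger C),\qquad T_{P,P'} \;:=\; \bra{0}_E U_{P'}^\dagger U_P \ket{0}_E.
\end{equation*}
So the whole task reduces to evaluating the Clifford twirl $\mathbb{E}_C\bigl[(C^\dagger P C)\,\rho\,(C^\dagger P'^\dagger C)\bigr]$ for each pair $P,P'$.

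To evaluate the twirl I would use that replacing $C$ by $CP_0$ for a uniformly random Pauli $P_0$ leaves the distribution of $C$ unchanged. Pushing $P_0$ through and using $P_0^\dagger Q P_0 = (-1)^{\langle P_0,Q\rangle}Q$ for any Pauli $Q$ (with $\langle\cdot,\cdot\rangle$ the symplectic commutation form, $=1$ iff $P_0$ and $Q$ anticommute) produces an overall prefactor $(-1)^{\langle P_0,\,C^\dagger(PP')C\rangle}$ in front of the original expression. Averaging over $P_0\in\mathbb{P}_n$ then kills this prefactor unless $C^\dagger(PP')C=\mathcal{I}$, i.e.\ unless $P=P'$ (using $P'^\dagger=P'$ since qubit Paulis are Hermitian). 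All off-diagonal terms therefore drop out.

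For the remaining diagonal terms: $P=\mathcal{I}$ simply contributes $T_{\mathcal{I},\mathcal{I}}\,\rho$, while for each $P\ne\mathcal{I}$ the Clifford group acts transitively on the non-identity Paulis by conjugation (up to $\pm 1,\pm i$ phases which cancel in $Q\rho Q^\dagger$), yielding
\begin{equation*}
\mathbb{E}_C\bigl[(C^\dagger P C)\,\rho\,(C^\dagger P C)^\dagger\bigr] \;=\; \frac{1}{4^n-1}\sum_{Q\ne\mathcal{I}} Q\rho Q^\dagger.
\end{equation*}
Assembling, $\rho_B = T_{\mathcal{I},\mathcal{I}}\rho + \bigl(\sum_{P\ne\mathcal{I}}T_{P,P}\bigr)\cdot\tfrac{1}{4^n-1}\sum_{Q\ne\mathcal{I}}Q\rho Q^\dagger$. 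To match $\mathcal{M}_s$ it remains to verify $\sum_P T_{P,P}=1$: unitarity of $U$, together with Pauli orthogonality $\mathrm{Tr}(PP'^\dagger)=2^n\delta_{P,P'}$, gives $\sum_P U_P^\dagger U_P = I_E$ after partial-tracing $U^\dagger U = I_{M\otimes E}$ over $M$, hence $\sum_P T_{P,P}=1$. Setting $s:=T_{\mathcal{I},\mathcal{I}}$ (which is $\bra{0} U_\mathcal{I}^\dagger U_\mathcal{I}\ket{0}$, the quantity the claim writes suggestively as $\mathrm{Tr}(U_\mathcal{I}^\dagger U_\mathcal{I})$ with Eve's initial state implicit) gives $\rho_B=\mathcal{M}_s(\rho)$ exactly.

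The main obstacle I expect is sign/phase bookkeeping: $C^\dagger P C$ is a Pauli only up to a $\pm 1,\pm i$ phase, and these phases must be tracked carefully through both the Pauli-subgroup averaging and the Clifford transitivity step. They cancel in every conjugation $Q\rho Q^\dagger$ since adjoint conjugation annihilates unit scalars, but one must be careful when factoring them past the symplectic commutation form. A clean way to organize this is to first pass to the unsigned Pauli group $\mathbb{P}_n/\{\pm 1,\pm i\}$, carry out the combinatorial argument there, and only then lift back to the actual unitaries.
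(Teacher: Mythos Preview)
Your proposal is correct and follows the same structure as the paper's proof: expand Eve's attack in the Pauli basis, eliminate the $P\ne P'$ cross terms, apply Clifford transitivity on non-identity Paulis to the surviving diagonal terms, and normalize via unitarity of $U$. The only cosmetic difference is in the cross-term elimination---you average over the whole Pauli subgroup via the substitution $C\mapsto CP_0$, whereas the paper exhibits a single Pauli $Q_i$ anticommuting with exactly one of $P,P'$ and uses the bijection $C\mapsto Q_iC$ to show the sum equals its own negative; both arguments are standard and equivalent.
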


We proceed with the proof of the theorem. From the above claim we know what
Bob's state after Eve's intervention is and we would like to bound its projection on $P_1^{\ket\psi}$:
\begin{eqnarray}\label{tr}
\tr{\Big(P_1^{\ket\psi} \big(s\rho + \frac{1-s}
{4^n -1}\sum_{Q\in\mbP_n\setminus\{\mcI\}}Q\rho
Q^\dagger\big)\Big)}
&= &s\tr(P_1^{\ket\psi}\rho) + \frac{1-s}{4^n-1}
\sum_{Q\in\mbP_n\setminus\{\mcI\}}\tr{(P_1^{\ket\psi}
Q\rho Q^\dagger)}\label{eq:ClifSec}
\end{eqnarray}
By definition of $P_1^{\ket\psi}$ we see that $\tr(P_1^{\ket\psi}\rho)=1$. On
the other hand: $\tr{(P_1^{\ket\psi}Q\rho Q^\dagger)}=1$ when $Q$ does not flips
any auxiliary qubit, and vanishes otherwise.  The Pauli operators that do not
flip auxiliary qubits can be written as $Q'\otimes Q''$ where \mbox{$Q'\in
  \mbP_m$} and \mbox{$Q'' \in \{\mcI,Z\}^{\otimes d}$}. It follows that the
number of such operator is exactly $4^m2^d$. Omitting the identity $\mcI_n$ we
are left with $4^m2^d-1$ operators which are undetected by our scheme. We return
to \Eq{eq:ClifSec}:
\begin{eqnarray}
{\ldots}& \ge &s +\left(1-s\right)(1-\frac{4^m2^d-1}{4^n-1})\\
& \ge &s +\left(1-s\right)(1-\frac{4^m2^d}{4^{m+d}})\\
& = &1 -\frac{1-s} {2^d} \label{goodProjection}
\end{eqnarray}
The security follows from the fact that $s\ge 0$, and hence the
projection is bounded by $1 -\frac{1} {2^d}$.

\end{proofof}

We remark that the above proof in fact implies a stronger theorem:
interventions that are very close to $\mcI$ are even more
likely to keep the state in the space defined by $P_1^{\ket\psi}$.

What remains to prove is \Cl{claim:EveAttack} which is stated above. To this end
we need three simple lemmata:

\begin{lem} \label{mix} Fix a non-identity Pauli operator.  Applying a random
  Clifford operator (by conjugation) maps it to a Pauli operator chosen
  uniformly over all non-identity Pauli operators. More formally, for every
  $P,Q\in\mbP_n\setminus\{\mcI\}$ it holds that : $\left|\left\{C\in\mfC_n |
      C^\dagger PC =Q\right\}\right| = \frac {\left|\mfC_n\right|} {\left|\mbP_n
    \right| -1}= \frac {\left|\mfC_n\right|} {4^n -1}$.
\end{lem}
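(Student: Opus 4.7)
The plan is to apply the orbit-stabilizer theorem to the conjugation action of $\mfC_n$ on $\mbP_n$. Once I show that this action is transitive on $\mbP_n \setminus \{\mcI\}$ (a set of $4^n - 1$ non-identity Paulis), the conclusion follows immediately: for any fixed $P$, the orbit has size $4^n - 1$, so by orbit-stabilizer $|\mathrm{Stab}(P)| = |\mfC_n|/(4^n - 1)$, and for any target $Q$ in the orbit the preimage $\{C : C^\dagger P C = Q\}$ is a single left coset of $\mathrm{Stab}(P)$, hence has the same cardinality.

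The substance of the argument is therefore transitivity, which I would prove by exhibiting, for every non-identity Pauli $P$, an explicit Clifford conjugating $P$ to a canonical form such as $X_1 = X \otimes \mcI^{\otimes (n-1)}$. Two standard maneuvers with the Clifford generators suffice: on a single qubit, $H$ and $K$ generate a subgroup that permutes $\{X, Y, Z\}$ up to sign; on multiple qubits, CNOT gates can be used to transport or merge Pauli weight across registers, so that any tensor product with at least one non-identity factor can be rotated into a Pauli supported on a single coordinate. Iterating these reductions, any non-identity $P$ can be conjugated to $X_1$ by an explicit Clifford $C_P$; setting $C = C_Q^\dagger C_P$ then conjugates $P$ to $Q$.

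The main obstacle is the phase bookkeeping. Conjugation of a Hermitian Pauli by a unitary is still Hermitian, so the image a priori lies in $\pm \mbP_n$ rather than strictly in $\mbP_n$, and the naive composition $C_Q^\dagger C_P$ might land on $-Q$ instead of $Q$. To fix this, I would note that for any non-identity $Q$ there exists a Pauli $D \in \mbP_n$ anticommuting with $Q$, and that $D^\dagger Q D = -Q$; since Paulis are themselves Clifford elements, composing with $D$ flips the sign at will without altering the rest of the action. This guarantees the strict equation $C^\dagger P C = Q$ required by the lemma, after which the orbit-stabilizer count yields the stated cardinality.
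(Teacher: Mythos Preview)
Your proposal is correct and follows essentially the same route as the paper: both establish transitivity of the Clifford conjugation action by reducing any non-identity Pauli to the canonical form $X\otimes\mcI^{\otimes(n-1)}$ via $H$, $K$, and CNOT, and then derive the count from this transitivity. The only cosmetic differences are that you package the counting step as orbit--stabilizer while the paper writes out the coset bijection directly, and that you are more explicit than the paper about the $\pm$ sign issue (the paper's reduction is stated ``up to a phase'' without spelling out the anticommuting-Pauli fix you describe).
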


\begin{lem} \label{clifTw} Let $P\ne P'$ be Pauli operators. For
  any $\rho$ it holds that: \mbox{$\sum_{C\in\mfC_n}C^\dagger P C\rho C^\dagger
    P'C=0$}.
\end{lem}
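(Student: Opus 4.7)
The plan is to show $S \;:=\; \sum_{C \in \mfC_n}C^\dagger P C\, \rho\, C^\dagger P' C \;=\; 0$ by exhibiting a Pauli $Q \in \mbP_n$ such that the bijective reindexing $C \mapsto QC$ of the Clifford sum (valid because $Q$ itself lies in $\mfC_n$) produces an overall sign $-1$, forcing $S = -S$. Since $P,P',Q$ are Paulis, $Q^\dagger R Q = \epsilon_R R$ with $\epsilon_R = +1$ if $[Q,R]=0$ and $\epsilon_R = -1$ if $\{Q,R\}=0$. The reindexed sum therefore reads
\begin{equation*}
S \;=\; \sum_{C \in \mfC_n} C^\dagger (Q^\dagger P Q) C\, \rho\, C^\dagger (Q^\dagger P' Q) C \;=\; \epsilon_P\,\epsilon_{P'}\, S ,
\end{equation*}
so it suffices to pick $Q$ with $\epsilon_P\epsilon_{P'} = -1$, i.e.\ a Pauli that anticommutes with exactly one of $P,P'$.

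The core step is then a statement purely about the Pauli group: given $P\ne P'$ in $\mbP_n$, produce such a $Q$. I would identify the (phaseless) Pauli group with $\mathbb{F}_2^{2n}$, under which (anti)commutation is recorded by the standard non-degenerate symplectic form $\omega$ (with $\omega(Q,R)=0$ iff $[Q,R]=0$). The assignment $Q \mapsto (\omega(Q,P),\omega(Q,P'))$ is $\mathbb{F}_2$-linear, and non-degeneracy of $\omega$ forces the rank of this map to equal $\dim \mathrm{span}_{\mathbb{F}_2}\{P,P'\}$. I then need to verify that for $P\ne P'$ the image always contains $(1,0)$ or $(0,1)$, which supplies the desired $Q$.

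The main (mild) obstacle is the edge-case analysis. When both $P,P'$ are non-identity, distinctness as operators coincides with distinctness in $\mathbb{F}_2^{2n}$ (since $\mbP_n$ as defined in the paper carries no phases), hence $P,P'$ are linearly independent, the image is all of $\mathbb{F}_2^2$, and $(1,0)$ is realized. If $P = \mcI$, then $P'\ne \mcI$, the image is the one-dimensional subspace $\{0\}\times \mathbb{F}_2$ (which equals $\mathbb{F}_2$ in the second coordinate because $\omega(\cdot,P')$ is surjective), and $(0,1)$ is realized; the case $P' = \mcI$ is symmetric. In every scenario a valid $Q$ exists, the sign-flip argument applies, and $S = 0$.
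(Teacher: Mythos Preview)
Your proposal is correct and follows essentially the same approach as the paper: reindex the Clifford sum via the bijection $C\mapsto QC$ for a Pauli $Q$ that anticommutes with exactly one of $P,P'$, yielding $S=-S$. The only difference is that the paper locates a coordinate $i$ with $P_i\ne P_i'$ and writes down an explicit single-qubit $Q_i$ there, whereas you establish the existence of $Q$ abstractly via the symplectic form on $\mathbb{F}_2^{2n}$; your case analysis (including the $P=\mcI$ or $P'=\mcI$ edge cases) is clean and arguably more robust than the paper's explicit formula.
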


\begin{lem} \label{Decompose} Let $U=\sum_{P\in\mbP_n} P\otimes U_P$ be a
  unitary operator. For any  density matrix $\rho$:
\begin{equation}\sum_{P\in\mbP_n}
\tr(U_P\rho U_P^{\dagger}) =1
\end{equation}
\end{lem}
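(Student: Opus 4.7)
The plan is to extract the identity $\sum_P U_P^\dagger U_P = I_E$ from unitarity of $U$, and then use the cyclicity of the trace to convert the target sum into $\tr(\rho)=1$.

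First I would expand $U^\dagger U = I_M \otimes I_E$ using the decomposition $U=\sum_P P\otimes U_P$, obtaining
\begin{equation}
\sum_{P,Q\in\mbP_n} P^\dagger Q \otimes U_P^\dagger U_Q \;=\; I_M\otimes I_E.
\end{equation}
Taking the partial trace over $M$ on both sides kills the off-diagonal terms: the Pauli group is orthogonal with respect to the Hilbert--Schmidt inner product, with $\tr(P^\dagger Q) = 2^n \delta_{P,Q}$, while $\tr_M(I_M\otimes I_E) = 2^n I_E$. This yields $\sum_P U_P^\dagger U_P = I_E$, which is the central identity.

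Next I would multiply by $\rho$ and take the trace on $E$, giving $\sum_P \tr(U_P^\dagger U_P \rho) = \tr(\rho)$. By cyclicity, $\tr(U_P^\dagger U_P \rho) = \tr(U_P \rho U_P^\dagger)$, and since $\rho$ is a density matrix $\tr(\rho)=1$, completing the proof.

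The only subtlety is the Pauli-orthogonality step (making sure the correct normalization $2^n$ cancels on both sides of the partial trace identity); everything else is mechanical. No genuine obstacle is anticipated — the lemma is essentially a restatement that a unitary decomposed in the Pauli basis yields a valid quantum channel (a Kraus-type decomposition), and the sum $\sum_P U_P^\dagger U_P = I_E$ is precisely the trace-preservation condition.
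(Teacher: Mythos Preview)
Your proof is correct, and it takes a slightly different (and arguably cleaner) route than the paper. The paper sandwiches the state $\mcI\otimes\tau$ inside $U(\cdot)U^\dagger$, takes the full trace, and uses trace preservation together with Pauli orthogonality $\tr(PP'^\dagger)=2^n\delta_{P,P'}$ to kill the cross terms; the factor $2^n$ appears on both sides and cancels. You instead work directly with the operator identity $U^\dagger U = I_M\otimes I_E$, take the partial trace over $M$, and extract the stronger intermediate statement $\sum_P U_P^\dagger U_P = I_E$ before pairing with $\rho$. The advantage of your route is that it isolates the Kraus-type completeness relation as a standalone fact (valid for all $\rho$ at once), whereas the paper's computation is tied to a particular state from the start; the paper's version, on the other hand, avoids the partial-trace formalism and stays entirely within scalar trace identities. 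Both rely on exactly the same two ingredients (unitarity and Hilbert--Schmidt orthogonality of the Paulis), so the difference is one of packaging rather than substance.
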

Assuming these lemmata we are ready to prove the claim:

\begin{proofof}{\Cl{claim:EveAttack}}
Let $ U = \sum_{P\in \mbP_n} P\otimes U_P$ be the operator applied
by Eve. We denote $\rho=\ket\psi \bra\psi\otimes\ket {\bar
0}\bra{\bar 0}$ the state of Alice prior to encoding.  Let us now
write the state of Bob's system after decoding and before measuring
the $d$ auxiliary qubits.  For clarity of reading we omit the
normalization factor $|\mfC_n|$ and denote the Clifford operation
applied by Alice (Bob) $C$ ($C^\dagger$):
\begin{eqnarray}
\rho_{Bob}&=& \tr_E{\Big(\sum_{C\in\mfC_n}(C\otimes \mcI_E)^\dagger
U\left( (C\otimes \mcI_E)\rho (C\otimes \mcI_E)^\dagger \otimes
\rho_E\right)U^\dagger(C\otimes \mcI_E)\Big)} \\
&=&
\tr_E{\Big(\sum_{P,P'\in\mbP_n}\sum_{C\in\mfC_n }(C\otimes \mcI_E)^\dagger
P\otimes U_P\left( (C\otimes \mcI_E)\rho (C\otimes \mcI_E)^\dagger \otimes
\rho_{E}\right) P'\otimes U_{P'}^\dagger(C\otimes \mcI_E)\Big)}
\end{eqnarray}
Regrouping elements operating on $M$ and on $E$ we have:
\begin{eqnarray}\begin{aligned}
{\ldots} =&\
\tr_E{\Big(\sum_{P,P'\in\mbP_n}\sum_{C\in\mfC_n}\left(C^\dagger
P C\rho C^\dagger
P'C\right)\otimes  U_P\rho_{E} U_{P'}^\dagger \Big)}\\
=&
\sum_{P,P'\in\mbP_n}
\sum_{C\in\mfC_n}\left(C^\dagger P C\rho C^\dagger  P'C\right)
\cdot \tr{\big( U_P\rho_{E} U_{P'}^\dagger \big)}
\end{aligned}\end{eqnarray}
We use \Le{clifTw} and are left only with $P=P'$
\begin{eqnarray}\label{prev}
{\ldots}  =
\sum_{P\in\mbP_n}\sum_{C\in\mfC_n}\left(C^\dagger
P C\rho C^\dagger
PC\right)\cdot \tr{\big(  U_P\rho_{E} U_{P'}^\dagger \big)}
\end{eqnarray}
We first consider the case were $P=\mcI$, then:
\begin{equation}
\sum_{C\in\mfC_n}C^\dagger
P C\rho C^\dagger
PC =|\mfC_n|\rho
\end{equation}
On the other hand when, $P\ne\mcI$  by \Le{mix}:
\begin{equation}
\sum_{C\in\mfC_n} C^\dagger
P C\rho  C^\dagger
PC =\sum_{Q\in\mbP\setminus\{\mcI\}}Q\rho Q^\dagger \frac {|\mfC_n|}{|\mbP_n|-1}
\end{equation}
Plugging the above two equations in \Eq{prev}:
\begin{eqnarray}\begin{aligned}
{\ldots} =&\; |\mfC_n|\rho \tr{\big(
U_\mcI\rho_{E}U_{\mcI}^\dagger\big)}&+&
\sum_{P\in\mbP_n\setminus\{\mcI\}}
\sum_{Q\in\mbP_n\setminus\{\mcI\}}\big(Q\rho Q^\dagger \big)\frac
{|\mfC_n|}{|\mbP_n|-1}
\tr{\big(   U_P\rho_{E}  U_{P'}^\dagger\big)}\\
=&\; |\mfC_n|\rho \tr{\big( U_\mcI\rho_{E}U_{\mcI}^\dagger \big)}&+&
\frac {|\mfC_n|\sum_{P\ne\mcI} \tr{\big(U_P\rho_{E}U_{P}^\dagger
\big)}}{|\mbP_n|-1} \sum_{Q\in\mbP_n\setminus\{\mcI\}
}\big(Q\rho Q^\dagger \big)
\end{aligned}\end{eqnarray}
We use \Le{Decompose} and so Bob's state after renormalization can be written as:
\begin{equation}\label{attackProfile}
s\rho + \frac{(1-s)}{4^n-1}\sum_{Q\in\mbP_n\setminus\{\mcI\}}\left(Q\rho Q^\dagger
\right)
\end{equation}
For $s=\tr({ U_{\mcI}\rho U_{\mcI}^\dagger})$, which concludes the proof.
\end{proofof}


Finally, we prove the lemmata stated above:

\begin{proofof}{\Le{mix}}
  We first claim that or every
  $Q,P\in\mbP_n\setminus \mcI$ there exists $D\in\mfC_n$ such that $D^\dagger P
  D=Q$. We will prove this claim by induction. Specifically, we show that
  starting form any non identity Pauli operator one can, using conjunction by
  Clifford group operator reach the Pauli operator $X\otimes \mcI^{\otimes
    n-1}$.

We first notice that the swap operation is in $\mfC_2$ since it holds that:
\begin{eqnarray}
    SWAP_{k,k+1} &= &CNOT_{k\rightarrow (k+1)}CNOT_{(k+1)\rightarrow k}CNOT_{k\rightarrow(k+1)}
\end{eqnarray}
Furthermore, we recall that $K^\dagger(XZ)K\propto X$ and $H^\dagger
ZH=X$. Therefore, any Pauli $P=P_1\odots P_n$ can be transformed using $SWAP,H$
and $K$ to the form: $X^{\otimes k}\otimes \mcI^{\otimes n-k}$ (up to a
phase). To conclude we use:
\begin{eqnarray}
CNOT_{1\rightarrow2}^\dagger (X_1\otimes X_2)CNOT_{1\rightarrow2}& =& X\otimes \mcI
\end{eqnarray}
which reduces the number of $X$ operations at hand. Applying this sufficiently many times results in reaching the desired form. Since this holds for any non-identity Pauli operators: $P,Q$ we know there are $C,D\in \mfC_n$ such that:
\begin{eqnarray}
X\otimes\mcI^{\otimes n-1}&=&C^\dagger PC=D^\dagger QD \\
&\Rightarrow& DC^\dagger PCD^\dagger= Q
\end{eqnarray}
therefore $CD^\dagger$ is the operator we looked for.  We return to the proof of
the Lemma, let us first fix some $Q\ne\mcI$, it will suffice to prove that for
any $P,P'$ the set $A_{P,P'}\EqDef\left\{C\in\mfC_n| C^\dagger PC=P'\right\}$ is
of a fixed size. We set $D\in\mfC_n$ such that $D^\dagger PD=Q$ then it holds
that: $CD\in A_{Q,P'} \iff C\in A_{P,P'}$ therefore $|A_{P,P'}| = |A_{Q,P'}|$,
and $|A_{Q',P'}| = |A_{Q,P}|$ follows trivially.

We use the fact that the sets $\{A_{P,Q}\,:\forall P \}$ is a partition of $\mfC_n$, and that all $A_{P,Q}$ have the same size:
\begin{eqnarray}
\left|\mfC_n\right| =\sum_{P'\in\mbP_n\setminus\mcI}\left|A_{P',Q}\right|
=& (4^n -1)\left|A_{P,Q}\right|
\end{eqnarray}
Which concludes the proof.
\end{proofof}

\begin{proofof}{\Le{clifTw}}
Since $P\ne P'$ we know there exists an index $i$ such that $P_i\ne
P_i'$ that is:
\begin{eqnarray}
P_i = X^aZ^b &\;&
P_i' = X^{a'}Z^{b'}
\end{eqnarray}
where $(a,b)\ne (a',b')$. let us define
$Q_i=X_i^{1-b-b'}Z_i^{1-a-a'}\otimes \mcI$.  We notice
that $(Q_i\otimes\mcI) C\in \mfC_n$ and furthermore any operator in
$\mfC_n$ can be written in this form. We write $Q_iC$ instead of
$(Q_i\otimes\mcI) C$ for simplicity.
\begin{eqnarray}
\sum_{C\in\mfC_n} C^\dagger P C\rho C^\dagger  P'C
& = &
\sum_{Q_i C\in\mfC_n}
\left(Q_i C\right) ^\dagger P \left(Q_i C\right)
\rho \left(Q_i C\right) ^\dagger  P'\left(Q_i C\right)\\
& = &
\sum_{Q_i C\in\mfC_n}
C^\dagger Q_i^\dagger  P Q_i C
\rho C^\dagger Q_i^\dagger  P'Q_i C
\end{eqnarray}
It is easy to check that $Q$ commutes with either $P'$ or $P$ and
anti-commutes with the other. Therefore:
\begin{eqnarray}
{\ldots} & = & (-1) \sum_{Q_i C\in\mfC_n} C^\dagger Q_i^\dagger Q_i
PC \rho C^\dagger Q_i^\dagger Q_i P'C \\
& = & (-1) \sum_{Q_i C\in\mfC_n} C^\dagger
PC \rho C^\dagger P'C\\
& = & (-1) \sum_{ C\in\mfC_n} C^\dagger
PC \rho C^\dagger P'C
\end{eqnarray}
This concludes the proof since the sum must vanish.

\end{proofof}

\begin{proofof}{\Le{Decompose}}
  We analyze the action of $U$ on the density matrix $\mcI\otimes\tau$. We first
  notice that since $U$ is a trace preserving operator, that is:
$\tr(U(\mcI\otimes\tau) U^\dagger\big)=\tr(\mcI\otimes\tau\big)=1$.
On the other hand it holds that:
\begin{eqnarray}
\tr\big(U(\mcI \otimes \tau) U^{\dagger}\big) &=&
\sum_{P,P'\in \mbP_n}\tr\big((P\otimes U_P)(\mcI\otimes\tau)
(P'\otimes U_{P'})^{\dagger}\big)\\
&=&
\sum_{P,P'\in \mbP_n}\tr\big(P\mcI P'^{\dagger} \otimes U_P\tau U_{P'}^{\dagger}\big)\\
&=&
\sum_{P,P'\in \mbP_n}\tr\big(PP'^{\dagger}\big) \tr\big( U_P\tau U_{P'}^{\dagger}\big)
\end{eqnarray}
If $P\ne P'$ then $\tr\big(PP'^{\dagger}\big)=0$, and therefore:
\begin{eqnarray}\begin{aligned}
\ldots &=
\sum_{P\in \mbP_n}\tr\big(\mcI\big) \tr\big( U_P\tau U_{P}^{\dagger}\big)\\
&=
\sum_{P\in \mbP_n} \tr\big( U_P\tau U_{P}^{\dagger}\big)
\end{aligned}\end{eqnarray}
It follows that  $1=\sum_{P\in \mbP_n} \tr( U_P\tau U_{P}^{\dagger})$, which concludes the proof.

\end{proofof}

\subsection{Concatenated Clifford \QAS}
\label{app:concatclifford}
\begin{proofof}{\Th{thm:CliffordConcat}}
  From \Cl{claim:EveAttack}, we know that any attack by Eve on an authenticated
  register is equivalent to an effect of the mixing operator $\mcM_s$, on the unencoded message space. We notice that any attack on
  the concatenated protocol is in fact equivalent to separate attacks on the
  different registers. This fact follows from the fact any individual attack can
  be broken down to attacks of the form $\mcM_s$, specifically for $r=2$:
\begin{equation}
\begin{split}
\rho_{Bob}= &\frac 1 {|\mfC_n|^2}\sum_{C_1,C_2 \in \mfC_n}(C_1\otimes C_2)^\dagger E\left(
    (C_1\otimes
    C_2)(\rho_1\otimes\rho_2)(C_1\otimes C_2)^\dagger\right)(C_1\otimes C_2) \\
  =& \frac 1 {|\mfC_n|^2}\sum_{P,Q\in \mbP_n} \sum_{C_1,C_2 \in
    \mfC_n}\alpha_{P,Q}(P\otimes Q) (C_1\otimes
  C_2)(\rho_1\otimes\rho_2)(C_1\otimes C_2)^\dagger)(P\otimes Q)^\dagger \\
  =&\sum_{P,Q\in \mbP_n}\alpha_{P,Q} \big(\frac 1 {|\mfC_n|}\sum_{C_1\in
    \mfC_n}(C_1^\dagger PC_1\rho_1C_1^\dagger P^\dagger C_1)\big)\otimes
  \big(\frac 1 {|\mfC_n|}\sum_{C_2\in \mfC_n}(C_2^\dagger QC_2\rho_2C_2^\dagger
  Q^\dagger C_2)\big)
\end{split}
\end{equation}
We denote $\mcM_s(\rho)=\wt\rho$, and use \Le{mix}: it holds that:
\begin{align}
\label{eq:concat}
\ldots=\alpha_{\mcI,\mcI}(\rho_1\otimes \rho_2) +
(\sum_{P,Q\ne \mcI \in \mbP_n}\alpha_{P,Q})( \wt\rho_1\otimes  \wt\rho_2) +
(\sum_{P\ne \mcI \in \mbP_n}\alpha_{P,\mcI})( \wt\rho_1\otimes  \rho_2) +
(\sum_{Q\ne \mcI \in \mbP_n}\alpha_{\mcI,Q})( \rho_1 \otimes  \wt\rho_2)
\end{align}
Bob does not abort, if both individual Clifford \QAS\ are valid. From the
security of the individual \QAS\ we know that
$\tr{((P_0^{\rho_i})B(\wt\rho_i))}<2^{-d}$ where $B$ is Bob cheat detecting
procedure. We also notice that $P_0^{\rho_1\otimes\rho_2}=P_0^{\rho_1}\otimes
P_0^{\rho_2}$.  We first rewrite \Eq{eq:concat} more clearly:
\begin{eqnarray} {\ldots} =s(\rho_1\otimes\rho_2)
  +h(\wt\rho_1\otimes\rho_2)+r(\rho_1\otimes\wt\rho_2)+t(\wt\rho_1\otimes\wt\rho_2)
\end{eqnarray}
and using the above observations we have:
\begin{equation}
\begin{split}
  \tr\left(P_0^{\rho_1\otimes\rho_2} B\left(s(\rho_1\otimes\rho_2)
      +h(\wt\rho_1\otimes\rho_2)+r(\rho_1\otimes\wt\rho_2)+t(\wt\rho_1\otimes\wt\rho_2)\right)\right)\
  =\ & s\cdot0 +q\cdot2^{-d}+r\cdot2^{-d}+t\cdot2^{-d}\cdot2^{-d}\\
  \le&(1-a)2^{-d}
\end{split}
\end{equation}
Where the inequality holds since $s+q+r+t=1$.

\noindent The claim for $r>2$ is follows the exact same lines and therefore is omitted.

\end{proofof}
\section{Polynomial Authentication Scheme}
\subsection{Security Proof of Polynomial \QAS} \label{app:securitypolynomial}
\subsubsection{Security Against Pauli Attacks}
\begin{lem} \label{lem:PauliPolyAuthSec} The polynomial \QAS\ is secure against
(generalized) Pauli attacks, that is, in the case where the
adversary applies a Pauli operator. In this case the projection of
Bob's state on the space spanned by $P_1\ket{\psi}$ is at least
$1-2^{-d}$.
\end{lem}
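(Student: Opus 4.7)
The plan is to show that the sign key $k\in\{\pm 1\}^m$ alone suffices against any Pauli attack; the Pauli key turns out to play no role here. Since generalized Paulis commute up to a phase, $P_{(x,z)}^{-1} P_{(x',z')} P_{(x,z)} = \omega_q^\beta P_{(x',z')}$ for some integer $\beta$, so after Bob removes his Pauli mask the state is $P_{(x',z')}\ket{S_\psi^k}$ up to an irrelevant global phase, and he then runs error detection for $\mcC_k$. It therefore suffices to show that for every non-identity Pauli $P_{(x',z')}$ the probability over the sign key $k$ that $P_{(x',z')}\ket{S_\psi^k}\in \mcC_k$ (i.e.\ the attack is undetected) is at most $2^{-d}$.

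My next step is to conjugate away the sign and work with the ordinary (unsigned) polynomial code $\mcC$. Writing $\ket{S_\psi^k}=D_k\ket{S_\psi}$ for the basis permutation $D_k:\ket{v_1,\ldots,v_m}\mapsto\ket{k_1 v_1,\ldots,k_m v_m}$, a short calculation using $k_i^2=1$ gives $D_k^\dagger X^c D_k = X^{kc}$ and $D_k^\dagger Z^c D_k = Z^{kc}$ (coordinate-wise products). Hence Bob's test is equivalent to running error detection for the standard code $\mcC$ on $P_{(kx',kz')}\ket{S_\psi}$. Since $\mcC$ with $m=2d+1$ is self-dual, the normalizer of its stabilizer is $\{P_{(u,v)}:u,v\in\Lambda\}$, where $\Lambda\subset F_q^m$ is the $(d+1)$-dimensional subspace of evaluations of polynomials of degree $\le d$. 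So $P_{(kx',kz')}$ preserves $\mcC$ iff $kx'\in\Lambda$ and $kz'\in\Lambda$, and the task reduces to the combinatorial bound: for every nonzero $u\in F_q^m$, $\Pr_{k\in\{\pm 1\}^m}[ku\in\Lambda]\le 2^{-d}$.

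I would prove this by a case split on $t:=|\mathrm{supp}(u)|$. If $t\le d$, then $ku$ has Hamming weight $\le d$, strictly less than the minimum distance $d+1$ of the Reed--Solomon code $\Lambda$, so $ku\ne 0$ forces $ku\notin\Lambda$ and the probability vanishes. If $t\ge d+1$, let $A\in F_q^{d\times m}$ be a parity-check matrix of $\Lambda$ and let $A'\in F_q^{d\times t}$ be its submatrix on the columns $i\in\mathrm{supp}(u)$, with the $i$-th column scaled by $u_i\ne 0$; then $ku\in\Lambda$ iff $A'(k|_{\mathrm{supp}(u)})=0$. Since $\Lambda$ has minimum distance $d+1$, any $d$ columns of $A$ are linearly independent over $F_q$, so $A'$ has rank $d$. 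Picking an invertible $d\times d$ submatrix of $A'$ and solving, $d$ coordinates of $k|_{\mathrm{supp}(u)}$ are uniquely determined in $F_q$ by the remaining $t-d$ coordinates: for each of the $2^{t-d}$ assignments of the free coordinates in $\{\pm 1\}^{t-d}$ the forced value lies in $\{\pm 1\}^d$ or not, so at most $2^{t-d}$ sign patterns on $\mathrm{supp}(u)$ are undetected. Multiplying by the $2^{m-t}$ free choices off $\mathrm{supp}(u)$ gives at most $2^{m-d}$ sign keys out of $2^m$, yielding $\Pr\le 2^{-d}$. The main obstacle is this last step: the $F_q$-rank of $A'$ does not automatically produce a $2^d$-factor reduction over the Boolean cube $\{\pm 1\}^t$, and the argument must leverage the uniqueness of the $F_q$-solution for each fixing of the free coordinates.
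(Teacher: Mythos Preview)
Your proof is correct and is essentially the paper's argument in different clothing: the paper works directly with polynomial interpolation on the signed code (fixing $d+1$ of the $k_i$ determines the unique interpolating polynomial $f$, and then each of the remaining $d$ independent sign bits must equal the prescribed value $x_i^{-1}f(\alpha_i)\in F_q$, which happens with probability at most $1/2$), while you conjugate by $D_k$ to the unsigned code and phrase the same uniqueness via the parity-check matrix. Your stated concern about the last step is unfounded: once you fix the $t-d$ free coordinates of $k|_{\mathrm{supp}(u)}$ in $\{\pm 1\}$, the invertible $d\times d$ block pins down the other $d$ coordinates \emph{uniquely} in $F_q^d$, and that unique tuple either lies in $\{\pm 1\}^d$ or it does not---so at most one admissible sign pattern per fixing of the free coordinates, giving $\le 2^{t-d}$ solutions on $\mathrm{supp}(u)$ exactly as you wrote.
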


\begin{proof}
  Let us consider the effect of a Pauli $Q$ operator on the signed polynomial
  code $\mcC_k$. We first show that with probability $1-2^{-d}$ over the sign
  key $k$, the effect of $Q$ is detected by the error detection procedure.

Let $Q_x\ne \mcI$ be a Pauli operator $Q_x=X^{x_i}\odots X^{x_m}$ where $x\in
F_q^m$. The effect of $Q_x$ on the code is an addition of $x_i$ to the $i'th$
qubit.  This addition passes the error detection step only if coincides with the
values of a \textbf{signed} polynomial of degree at most $d$. We consider two
cases depending on the weight of $x$:
\begin{itemize}
\item If $|x|\le d$: let us denote by $g$ the polynomial that satisfies
      $\forall_{i} k_ig(\alpha_i) = x_i$, since $Q_x\ne \mcI$ we know that $g\ne
      0$. then $g$ has at least $m-d$ zeros. Since $g$ is nonzero it must have
      degree at of least: $m-d=d+1$. Such an attack will be detected with
      certainty by the error detection procedure.
\item Otherwise, assume without loss of generality that $x_i \ne 0$ for
      $i\le|x|$. There is exactly one polynomial $f$ of degree at most $d$ such
      that $\forall_{i\ge d+1}\ k_if(\alpha_i)=x_i$. For the attack of Eve to be
      undetected $x$ must agree with $f$ on the remaining coordinates as well:
\begin{eqnarray}
\Pr_k(\forall_{i\le d}\ x_i =k_if(\alpha_i))=\prod_{i=1}^d\Pr_k(k_i = x_i^{-1}f(\alpha_i))
    \end{eqnarray}
    Equality holds since: $k_i$ are independent, $k_i=k_i^{-1}$ and $x_i\ne0$
    for $i\le d$. Since $k_i=c$ with probability at most half we conclude that
    the probability that Eve's attack is undetected is at most $2^{-d}$.
\end{itemize}
Now that we have proved the claim for operators of the form $Q_x$, we handle the
general case. Pauli $Z$ are mapped in the dual code to $X$ operators. Since the
signed polynomial code is self dual, $Q_z$ attacks will be caught with
probability $1-2^{-d}$ as well. To conclude the proof we notice that detection
$Q_x$ attacks do not depend on the existence $Q_z$ attacks, therefore, a non
identity operator $Q_{x,z}=P_zP_x$ will be detected with the correct probability
since either $x$ or $z$ must be non trivial.

What remains is to notice that the Pauli randomization $P_{x,z}$ simply shifts
any attack $Q$ on the authenticated message to a different Pauli. That is the
effect on the signed polynomial code is $P_{x,z}^\dagger QP_{x,z}$. We conclude
that any Pauli operator acting on the polynomial \QAS\ is detected with a
probability of at least $1-2^{-d}$ as claimed.
\end{proof}

\subsubsection{Security Against General Attacks}
We start with a generalization of \Le{clifTw} for generalized Pauli
operators.

\begin{lem} \label{pauliTw} Let $P \ne P'$ generalized Pauli operators. Then:
$\sum_{Q\in \mbP_m} Q^\dagger P Q \rho Q^\dagger P'^\dagger Q = 0$
\end{lem}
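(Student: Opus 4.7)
The plan is to prove this by direct character-sum computation, which is the natural $F_q$-analogue of the change-of-variables trick used for \Le{clifTw}. Because the summation is over a commutative set (generalized Paulis, up to phase) rather than the nonabelian Clifford group, parameterizing everything in $F_q$-coordinates and exploiting orthogonality of additive characters will let the argument go through transparently.

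First I would reduce to canonical form. Writing $P = P_{x_1,z_1}$ and $P' = P_{x_2,z_2}$, any overall scalar phases on $P$ and $P'$ pull out of the summand as a constant prefactor and do not affect whether the sum vanishes, so the hypothesis $P \ne P'$ may be taken to mean $(x_1,z_1) \ne (x_2,z_2)$ in $F_q^m \times F_q^m$. Next, I would establish the basic conjugation identity using $ZX = \omega_q XZ$ at the single-qudit level and tensoring:
\[
P_{a,b}^\dagger P_{x,z} P_{a,b} = \omega_q^{\langle a, z\rangle - \langle b, x\rangle} P_{x,z},
\]
and, by taking daggers,
\[
P_{a,b}^\dagger P_{x,z}^\dagger P_{a,b} = \omega_q^{-\langle a, z\rangle + \langle b, x\rangle} P_{x,z}^\dagger.
\]

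With this in hand, writing $Q = P_{a,b}$ and combining the two identities, the summand becomes
\[
Q^\dagger P Q \rho Q^\dagger P'^\dagger Q = \omega_q^{\psi(a,b)} \, P \rho P'^\dagger,
\]
where $\psi(a,b) = \langle a, z_1 - z_2 \rangle - \langle b, x_1 - x_2\rangle$ is linear in $(a,b) \in F_q^{2m}$. The full sum then factors as
\[
\sum_{Q \in \mbP_m} Q^\dagger P Q \rho Q^\dagger P'^\dagger Q \;=\; P \rho P'^\dagger \cdot \sum_{(a,b) \in F_q^{2m}} \omega_q^{\psi(a,b)}.
\]
The exponential sum splits as $\bigl(\sum_a \omega_q^{\langle a, z_1 - z_2\rangle}\bigr) \cdot \bigl(\sum_b \omega_q^{-\langle b, x_1 - x_2\rangle}\bigr)$, and by additive-character orthogonality on $F_q^m$ each factor is zero unless the corresponding vector vanishes. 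Since $(x_1, z_1) \ne (x_2, z_2)$, at least one factor is zero and hence the whole sum is zero.

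The only place requiring genuine care is the phase bookkeeping in the conjugation identity, especially the sign flip induced by the dagger on $P'$; this is the qudit analogue of the ``$Q$ commutes with one of $P,P'$ and anticommutes with the other'' step in the proof of \Le{clifTw}. Once those identities are verified, the character-sum argument is automatic.
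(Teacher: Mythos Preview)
Your proof is correct and is essentially the same as the paper's. The paper also parameterizes $P=X^aZ^b$, $P'=X^{a'}Z^{b'}$, and $Q=X^cZ^d$, uses $ZX=\omega_qXZ$ to pull out the phase $\omega_q^{d(a-a')+c(b-b')}$ in front of the fixed operator $P\rho P'^\dagger$, and then observes that $\sum_c\omega_q^{c(b-b')}\sum_d\omega_q^{d(a-a')}$ vanishes since $(a,b)\ne(a',b')$; your inner-product notation just packages the multi-qudit case more explicitly.
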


The proof follows the same line as \Le{clifTw}:

\begin{proofof}{\Le{pauliTw}}
Let  $P\ne P'$ be generalized Pauli operator $P=X^aZ^b$ and $P'=X^{a'}Z^{b'}$.
\begin{eqnarray}
\sum_{Q\in \mbP_m} Q^\dagger P Q \rho
Q^\dagger P'^\dagger Q
&=&
\sum_{d,c=0}^{q-1} (X^cZ^d)^\dagger X^aZ^b (X^cZ^d) \rho
(X^cZ^d)^\dagger (X^{a'}Z^{b'})^\dagger (X^cZ^d)
\end{eqnarray}
We use the fact that $Z^dX^c=\omega_q^{dc} X^cZ^d$ and some algebra:
\begin{eqnarray}
{\ldots} &=&
\sum_{d,c=0}^{q-1} \omega_q^{d(a-a')+c(b-b')} X^aZ^b \rho
Z^{-b'}X^{-a'}\\
&=& X^aZ^b \rho
Z^{-b'}X^{-a'} \sum_{c=0}^{q-1} \omega_q^{c(b-b')} \sum_{d=0}^{q-1}
\omega_q^{d(a-a')}
\end{eqnarray}
To conclude the proof we recall that $a\ne a'$ or $b\ne b'$, hence
one of the above sums vanishes.

\end{proofof}

In addition we need one more simple lemma:
\begin{lem} \label{pauliID}For any generalized Pauli operator $P$
\begin{eqnarray}
\frac 1 {\left|\mbP_m\right|}\sum_{Q\in \mbP_m} Q^\dagger P Q \rho
Q^\dagger P^\dagger Q = P\rho P^\dagger
\end{eqnarray}
\end{lem}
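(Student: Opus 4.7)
The plan is to show that the average is superfluous: each individual summand already equals $P\rho P^\dagger$, independently of $Q$, so the claim reduces to cancellation of phases under conjugation.

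The central fact is that any two elements of the generalized Pauli group commute up to a scalar of unit modulus: for all $P, Q \in \mbP_m$ there exists $\alpha(P,Q) \in \mathbbm{C}$ with $|\alpha(P,Q)|=1$ such that
\begin{equation*}
Q^\dagger P Q = \alpha(P,Q)\, P.
\end{equation*}
This follows immediately by writing $P$ and $Q$ coordinatewise as products of $X^a Z^b$ factors and applying the relation $ZX = \omega_q XZ$ from \Sec{sec:PauliNClifford}; the accumulated phase is a product of powers of $\omega_q$, in particular a root of unity.

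First I would take the adjoint of the displayed identity to get $Q^\dagger P^\dagger Q = \overline{\alpha(P,Q)}\, P^\dagger$. Substituting both relations into the summand gives
\begin{equation*}
Q^\dagger P Q\, \rho\, Q^\dagger P^\dagger Q = \alpha(P,Q)\, \overline{\alpha(P,Q)}\, P \rho P^\dagger = |\alpha(P,Q)|^2\, P\rho P^\dagger = P\rho P^\dagger.
\end{equation*}
Since every term in the sum equals $P\rho P^\dagger$ without any residual dependence on $Q$, averaging over the $|\mbP_m|$ elements of the group preserves the value and yields the claim.

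There is essentially no obstacle to overcome here; the only point requiring attention is book-keeping the two phases, and noting that $Q^\dagger P Q$ and $Q^\dagger P^\dagger Q$ pick up complex-conjugate phases, so that the two factors multiply to $1$ rather than to some nontrivial scalar. This is the direct analogue of \Le{pauliTw} but in the degenerate case $P=P'$, where instead of cancelling in the sum the terms all line up.
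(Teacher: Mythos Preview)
Your proof is correct and matches the paper's own argument essentially line for line: both use that conjugation by $Q$ multiplies $P$ by a unit-modulus phase $\alpha$, observe that the adjoint factor picks up $\alpha^*$, and conclude that each summand already equals $P\rho P^\dagger$ before averaging. There is no substantive difference in approach.
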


\begin{proofof}{\Le{pauliID}}
  From the observation about generalized Pauli operators in \Sec{Back} we know
  that for any two generalized Pauli operators $P,Q$ $PQ=\alpha QP$ where
  $\alpha$ is some phase dependent on $P$ and $Q$.
\begin{eqnarray}\begin{aligned}
\frac 1 {\left|\mbP_m\right|}\sum_{Q\in \mbP_m} Q^\dagger P Q \rho
Q^\dagger P^\dagger Q &=  \frac 1 {\left|\mbP_m\right|}\sum_{Q\in
\mbP_m} Q^\dagger(\alpha Q P)  \rho (\alpha^* P^\dagger Q^\dagger) Q
\\&=  \frac 1 {\left|\mbP_m\right|}\sum_{Q\in
\mbP_m} \alpha P\rho \alpha^* P^\dagger
\\&= P\rho  P^\dagger
\end{aligned}\end{eqnarray}
\end{proofof}

\begin{proofof} {\Th{thm:PolynomialAuth}} The proof will follow the same lines
  as \Th{thm:CliffordAuth}.  For clarity, we omit the normalization factor
  $|\mbP_m|$. We start by decomposing any attack \hbox{$V\in \mathbbm{U}(M\otimes E)$}
  made by Eve to \mbox{$V=\sum_{P\in \mbP_m}P\otimes U_P$}. Bob's state prior to
  applying the error detection procedure is:
\begin{eqnarray}
\rho_{Bob} &=&  \tr_E{\Big(\sum_{Q\in\mbP_m}(Q\otimes \mcI_E)^\dagger
V\left( (Q\otimes \mcI_E)\rho \otimes\rho_E(Q\otimes \mcI_E)^\dagger
\right)V^\dagger(Q\otimes \mcI_E)\Big)} \\
&=&
\tr_E{\Big(\sum_{P,P'\in\mbP_m}\sum_{Q\in\mbP_m}(Q\otimes \mcI_E)^\dagger
P\otimes U_P\left( (Q\otimes \mcI_E)\rho\otimes\rho_E
(Q\otimes \mcI_E)^\dagger \right) P'\otimes U_{P'}^\dagger(Q\otimes \mcI_E)\Big)}
\end{eqnarray}
Regrouping elements operating on $M$ and on $E$ we have:
\begin{eqnarray}\begin{aligned}
{\ldots} =&
\tr_E{\Big(\sum_{P,P'\in\mbP_m}\sum_{Q\in\mbP_m}\left(Q^\dagger
P Q\rho Q^\dagger
P'Q\right)\otimes  U_P\rho_{E} U_{P'}^\dagger \Big)}\\
=&
\sum_{P,P'\in\mbP_m}
\sum_{Q\in\mbP_m}\tr{\left(U_P\rho_{E}U_{P'}^\dagger \right)} \cdot
\left(Q^\dagger P Q\rho Q^\dagger  P'Q\right)
\end{aligned}\end{eqnarray}
We use \Le{pauliTw} and are left only with $P=P'$
\begin{eqnarray}\label{prev2}
{\ldots}  =
\sum_{P\in\mbP_m}\sum_{Q\in\mbP_m}
\tr{\left(  U_P\rho_{E}U_P^\dagger  \right)} \cdot\left(Q^\dagger
P Q\rho Q^\dagger
PQ\right)
\end{eqnarray}
Now we use \Le{pauliID} :
\begin{eqnarray}
{\ldots}  =
\sum_{P\in\mbP_m}
\tr{\left( U_P^\dagger U_P\rho_{E} \right)}\cdot|\mbP_m
|P \rho  P^\dagger
\end{eqnarray}
We set $\alpha_P = \tr{\big( U_P^\dagger U_P\rho_{E} \big)} $ and
we rewrite Bob's state after normalization:
\begin{eqnarray}
\alpha_{\mcI}\cdot \rho + \sum_{P\in\mbP_m\setminus\{ \mcI\}}
\alpha_P\cdot P
\rho P^\dagger
\end{eqnarray}\
Recall that we are interested projection of Bob's state on the
subspace spanned by the operator $P_1^{\ket\psi}$.
\begin{eqnarray}
\tr\Big( P_1^{\ket\psi} \big(\alpha_{\mcI}\cdot \rho + \sum_{P\in\mbP_m\setminus \{\mcI\}}
\alpha_P\cdot P\rho P^\dagger  \big)  \Big) &=&
\alpha_\mcI
+\sum_{P\in\mbP_m\setminus \{\mcI\}}\alpha_P\tr\Big(P_1^{\ket\psi}
P\rho P^\dagger\Big)
\end{eqnarray}
We use the bound from \Le{lem:PauliPolyAuthSec}:
\begin{eqnarray}\label{eq:FinalCalc}
{\ldots} &\ge& \alpha_\mcI
+\sum_{P\in\mbP_m\setminus \{\mcI\}}\alpha_P
\left(1-2^{-d}\right) \\
&=& (1-\frac {1-\alpha_\mcI} {2^d})
\end{eqnarray}
Which concludes the proof. Similarly to the random Clifford
authentication scheme, the further Eve's intervention is closer to
the identity, that is -- Eve does almost nothing, then the projection
on the good subspace is closer to $1$.
\end{proofof}

\subsection{Concatenated Polynomial \QAS\ }\label{app:concatpolynomial}

When authenticating multiple registers, it may seem at first glance
that Eve has the advantage of being able to tamper with the state by
applying some transformation on the entire space. In the concatenated
Clifford authentication protocol, the intervention of Eve is broken
down to individual attacks on each register by the fact random
Clifford operators are applied to each register independently.

The main idea for the concatenated polynomial authentication is to use
an independent Pauli key $(x,z)$ for each registers, while maintaining
the sign key $k$ equal between registers. This idea will suffice to
``brake up'' the attack of Eve to a sequence of attacks on each
register separately.

\begin{protocol}\textbf{Concatenated polynomial Authentication protocol}:\\
Alice wishes to send a state $\ket\psi \in (\mcC^q)^{\otimes r}$
that is $r$ $q$-dimensional systems. For a security parameter $d$,
set $m=2d+1$. Alice randomly selects a single sign key $k\in \{\pm
1\}^m$, furthermore, Alice selects $r$ independent Pauli keys
$(x_j,z_j)$.

To encode $\ket\psi$ Alice encodes each $q$-dimensional system using
the signed polynomial code specified by $k$. Additionally, Alice
shifts the $j$'th encoded message by  $P_{(x_j,y_j)}$.

Bob decodes each message separately, if all messages are correctly
authenticated Bob dealers as valid the concatenated message, otherwise Bob
aborts.
\label{PolynomialConcatSec}
\end{protocol}

We now prove \Th{thm:concatpolynomial}.

\begin{proofof} (of \Th{thm:concatpolynomial})
  We notice that all the reasoning in \Th{thm:PolynomialAuth} till
  \Eq{eq:FinalCalc} hold in this case as well. So we have that the projection on
  the good subspace $P_1^{\ket\psi}$ is equal to:
\begin{eqnarray} \label{reFinalCalc}
\alpha_\mcI
+\sum_{P\in\mbP_{r\cdot m}\setminus \{\mcI\}}\alpha_P\tr\left(P_1^{\ket\psi}
P\rho P^\dagger\right)
\end{eqnarray}
We start by writing $\tr(P_1^{\ket\psi} P\rho P^\dagger) =
1-\tr(P_0^{\ket\psi} P\rho P^\dagger)$. We recall that $P$ here is a
Pauli operator from the group $\mbP_{m\cdot r}$ so we write:
$P=P_{\left(1\right)}\odots P_{(r)}$.
\begin{lem}
The probability for Bob to be fooled by the application of
$P\ne\mcI$ is at most $2^{-d}$.
\end{lem}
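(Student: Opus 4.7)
The plan is to reduce the multi-register detection problem to the single-register analysis already established in Lemma \ref{lem:PauliPolyAuthSec}. Since $P = P_{(1)} \otimes \cdots \otimes P_{(r)} \ne \mcI$, there is at least one index $j_0$ with $P_{(j_0)} \ne \mcI$. Bob aborts as soon as \emph{any} register fails its individual error-detection step, so the probability of fooling Bob is bounded by the probability of fooling Bob on register $j_0$ alone. The task therefore reduces to bounding, over the random keys, the probability that a nontrivial Pauli attack passes the signed polynomial error detection on that single register.

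Next, I would trace what happens on register $j_0$ during decoding. Before running the polynomial error detection, Bob undoes the per-register Pauli key $P_{(x_{j_0},z_{j_0})}$. Conjugating $P_{(j_0)}$ by $P_{(x_{j_0},z_{j_0})}$ changes $P_{(j_0)}$ only by a global phase (by the standard commutation $ZX = \omega_q XZ$), so it leaves a nontrivial Pauli operator acting on the signed polynomial codeword $\ket{S^k_a}$. Hence the independent Pauli key on register $j_0$ is irrelevant for the detection probability on that register, and the probability of being undetected depends only on the sign key $k$.

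At this point I would invoke Lemma \ref{lem:PauliPolyAuthSec}: for a nontrivial Pauli acting on a message encoded by the signed polynomial code with a uniformly random sign key, the probability that the resulting error passes the error-detection step of $\mcC_k$ is at most $2^{-d}$. Applying this to register $j_0$ gives the claim immediately.

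The only possible subtlety — and what I expect to be the main thing to justify carefully — is that the sign key $k$ is \emph{shared} across all $r$ registers rather than sampled independently. This might at first seem to give Eve correlated leverage; however, since our analysis uses only one register $j_0$, and the bound of Lemma \ref{lem:PauliPolyAuthSec} is already an average over $k$ alone (the Pauli key contribution being phase-only), the shared sign key causes no loss. The independence of the per-register Pauli keys is what ensured earlier in the proof of Theorem \ref{thm:PolynomialAuth} that Eve's general attack decomposes as a mixture of Pauli attacks, so by that stage of the argument we are already working with a Pauli $P$ and the shared $k$ analysis goes through verbatim.
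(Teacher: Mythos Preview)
Your proposal is correct and follows essentially the same approach as the paper: pick a register $j_0$ with $P_{(j_0)}\ne\mcI$, bound the probability of fooling Bob by the probability of fooling him on that single register, and apply Lemma~\ref{lem:PauliPolyAuthSec}. Your additional remarks about the Pauli key contributing only a phase and the shared sign key not causing trouble are accurate elaborations that the paper leaves implicit.
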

\begin{proof} For $P\rho P^\dagger$ to be in $P_0^{\ket\phi}$ it must be the
  case that for all $j$ such that $P_{(j)}\ne \mcI$ Eve escapes detection (Bob
  does not abort although the register is ``corrupted''). We note that Bob
  declares as valid the remaining registers (where $P_{(j)}= \mcI$) with
  certainty.  We assume without loss of generality that $P_{(1)}\ne \mcI$, we
  write the probability that Bob is fooled:
\begin{eqnarray}
\Pr{(\textit{Bob is fooled by } P)}
&=&\Pr{(\forall_{j: P_{(j)}\ne \mcI}\textit{Bob is fooled by } P_{(j)})}\\
&\le& \Pr{ (\textit{Bob is fooled by } P_{(1)}) }\\
&\le&  2^{-d}
\end{eqnarray}
Where the last inequality holds by \Le{lem:PauliPolyAuthSec}.
\end{proof}

Plugging this result into \Eq{eq:FinalCalc} we have:
\begin{eqnarray}
{\ldots}&=&
\alpha_\mcI +\sum_{P\in\mbP_{r\cdot m}\setminus \{\mcI\}}\alpha_P\left(1-\tr(P_0^{\ket\psi} P\rho P^\dagger)\right)\\
&\ge& \alpha_\mcI +\sum_{P\in\mbP_{r\cdot m}\setminus \{\mcI\}}\alpha_P\left(1-2^{-d}\right)\\
&=& \left(1-\frac {1-\alpha_\mcI} {2^d}\right)
\end{eqnarray}
Which concludes the proof.
\end{proofof}

\section{Polynomial Authentication Based \QPIP}
\subsection{Secure Application of Quantum Gates}
\label{app:polynomialgates}
We have seen in \Sec{sec:SignedPolynomial} how to perform operations on states
encoded by a polynomial code. In this section we present a way for the
prover to apply certain operations on a signed shifted Polynomial
error correcting code. This can be done without compromising the
security of the authentication scheme.

The main idea is that the transitive operation performed on the signed
Polynomial code have almost the desired effect on the state at hand. The
verifier will only need to update \textbf{his} keys $(x,z)$ for
the provers action to have the desired effect on the state.

We will first show the simple and elegant fact that if the verifier wants a
(generalized) Pauli applied to the state, he does not need to ask the prover to
do anything. The only thing the verifier must do is change his Pauli keys. Then,
we show how to perform other operations such as \textit{SUM}, Fourier and
Measurement.

\begin{itemize}
\item \textbf{Pauli $X$}: The logical $\wt{X}$ operator consists of an
      application of $X^{k_1}\otimes, {\ldots} \otimes X^{k_m}$ where
      \textbf{k} is the sign key. We claim that the change $(x,z)\rightarrow (x
      - \mathbf k,z)$ will in fact change the interpretation the
      verifier assigns to the state in the desired way.
\begin{equation}\begin{aligned}
P_{x,z}\ket{S^k_a} &= P_{x-k,z}   P_{x-k,z}^\dagger  P_{x,z}\ket{S^k_a}\\
&=  P_{x-k,z} X^{-(x-k)}Z^{-z} Z^zX^x\ket{S^k_a}\\
&=P_{x-k,z}(X^{k_1}\otimes, {\ldots} \otimes X^{k_m})\ket{S^k_a} \\
&=  P_{x-k,z}\wt{X} \ket{S^k_a}
\end{aligned}
\end{equation}

\item \textbf{Pauli $Z$}: Similarly to the $X$ operator, all that is needed is a
      change of the Pauli key. We recall that $\wt{Z}=Z^{r_1k_1}\odots
      Z^{r_mk_m}$. We define the vector $\mathbf{t}$ to be $t_i=c_ik_i$. From
      the same argument as above, it holds that the change of keys must be
      $(x,z) \rightarrow (x,z -\mathbf{t})$.

\item \textbf{Controlled-Sum}: In order to remotely apply the
      \textit{SUM} operation the prover perform transversely
      Controlled-Sum (\textit{SUM}) from register $A$ to register $B$
      on the authenticated states; as if the code was not shifted by
      the Pauli masking. However, a change in the Pauli keys is needed
      for the operation to have the desired effect. It is easy to
      check that:
\begin{eqnarray} {\textit{SUM}}
  (Z^{z_A}X^{x_A}\otimes Z^{z_B}X^{x_B}) &=&
  (Z^{z_A-z_B}X^{x_A}\otimes Z^{z_B}X^{x_B+x_A})
  {\textit{SUM}}
\end{eqnarray}
Which implies that the same hold for the logical operation
$\wt{\textit{SUM}}$, and the Pauli shift $P_{(x,z)}$ that is:
\begin{eqnarray} \wt{\textit{SUM}}\left(P_{(x_A,z_A) } \otimes
    P_{(x_B,z_B)}\right) &=&\left(P_{(x_A,z_A-z_B) }\otimes
    P_{(x_B+x_A,z_B)}\right) \wt{\textit{SUM}}
\end{eqnarray}
Hence, the verifier must change the pair of keys $(x_A,z_A), (x_B,z_B)$ to
$(x_A,z_A-z_B)$ and $ (x_B+x_A,z_B)$, for the \textit{SUM} to have the desired
affect on the state.

\item \textbf{Fourier}: The prover performs Fourier transversely on the
      authenticated state. We recall that the Fourier operation swaps the roles
      of the $X$ and $Z$ Pauli operator. $FX^xF^\dagger=Z^x$ and
      $FZ^zF^\dagger=X^{-z}$. This is true for each register separately and
      hence:
\begin{eqnarray}\begin{aligned} \wt{\textit{F}}\cdot
    Z^{z_1}X^{x_1}\otimes{\ldots}\otimes Z^{z_m}X^{x_m} =&
    X^{-z_1}Z^{x_1}\otimes{\ldots}\otimes X^{-z_m}Z^{x_m}\cdot
    \wt{\textit{F}}\\
    \simeq& Z^{x_1}X^{-z_1}\otimes{\ldots}\otimes
    Z^{x_m}X^{-z_m}\cdot\wt{\textit{F}}
\end{aligned}
\end{eqnarray}
Where the last equality is up to a global phase.

Therefore the verifier must  change the key $(x,z)$ to $(-z,x)$.

\item \textbf{Measurement in the standard basis}: The prover measures the
      encoded state in the standard basis, send result to the verifier. Using
      the $x$ part of Pauli key, and the knowledge of
      $k$, the verifier interpolates the polynomial according to values
      in the received set of points. If the polynomial is indeed a polynomial of
      low degree (which is always the case if the prover is honest) the verifier
      sends the encoded value to the prover. Otherwise, the prover is caught
      cheating and the verifier aborts.
\item \textbf{Toffoli}: The (generalized) Toffoli gate is applied using Clifford group
     operations on the Toffoli state $\frac 1 q \sum_{a,b} \ket{a,b,a b}$
     (\cite{benor2006smq,nielsen2000qcq}). The application of
      a Toffoli gate in such a way does not imply a change of keys
      directly. Changes are made with respect to the actual operations that were
      performed.
\end{itemize}
\subsection{Proof of \Le{lem:uniformkeys}}\label{apen:polyIPproof}
\begin{statement}{\Le{lem:uniformkeys}}
At any stage of the protocol the verifier's set of keys,
$k$ and $\{(x,z)_i\}_1^n$ are distributed uniformly and
independently.
\end{statement}

\begin{proofof}{\Le{lem:uniformkeys}} Before any gate is applied the claim
  holds. All that needs to be done it to check that all changes keep this
  desired property.

  The sign key $k$ does not change during the protocol so in this case the claim
  is trivial. At every step at most two pair of Pauli keys change, let us review
  the possible changes (see \pen{app:polynomialgates}) and verify that the
  claim holds:
\begin{itemize}
\item Changes from the Pauli operators and Fourier transform induces shift, swap or negation changes to the keys; all of them preserve the uniform independent distribution trivially.
\item The \textit{SUM} operation involves two set of keys $(x_A,z_A),(x_B,z_B)$ which change to $(x_A,z_A-z_B)$ and $ (x_B+x_A,z_B)$. The sum $x_B+x_A$, is $\mod q$ hence it is distributed uniformly, in addition it is not hard to see that is independent of $x_A$. The same holds for $z_A-z_B$ and $z_B$. Other parts of both keys are trivially distributed correctly.
\item When the prover measures in the standard basis an authenticated qubit the outcome of the measurement is distributed uniformly at random in $F_q^m$. Specifically, the outcome does not depend on the sign key or the information that is authenticated. Therefore, even when the prover has the interpretation of his measurement outcome, he does not gain any information about the sign key $k$ or the Pauli keys of other registers.
\end{itemize}
\end{proofof}

\section{Fault Tolerant \QPIP}\label{app:ft}

For the interactive proofs described above to be relevant in a any realistic
scenario, dealing with noise is necessary. We will present a
scheme based on the
polynomial \QPIP, that enables us to conduct interactive proofs in the presence
of noise.
\\~\\
\begin{statement}{\Th{thm:ft}}
{\it \Th{thm:qcircuit} holds also when the quantum communication
and computation devices are subjected to the usual local noise model}.
\end{statement}

\begin{proof}\textit{(Sketch)}
Our proof is based on a collection of
 standard fault-tolerant quantum computation techniques.
Care must be given to the fact that the verifier is the
only one who can authenticate qubits, while he cannot
authenticate many qubits in parallel.

The proof can be divided into three stages.

In the first stage, the prover receives authenticated qudits from the
verifier, one by one. Each qudit is authenticated on $m$ qudits.
The prover ignores the authentication structure and begins encoding
each qubit out of the $m$ qubits
separately using a concatenated error correction code,
with total length which is polylogarithmic, as is
required for the fault tolerance scheme in \cite{aharonov1997ftq}.
From the work
of \cite{aharonov1997ftq,knill1998rqc} (and others) we know that this encoding
can be done in a fault tolerant way, such that if the error
probability was less
than some threshold $\eta$, then the encoded qudit is faulty (namely,
has an effective error) with
probability at most $\eta'$, where $\eta'$ is
a constant that depends on $\eta$ and other parameters of the
encoding scheme, but not on $n$.
We denote this concatenated encoding procedure by $\bar S$.

 Since each
authenticated qudit sent to the prover
is encoded using a constant number ($m$) of qudits, it
follows that with a constant probability,
$\eta''$ all these qubits are effectively correctly authenticated.
In other words, the encoding of
$\ket{S^k_a}$, $(\bar S \odots\bar S) \ket{S^k_a}$, has no effective
faults with probability $\eta''$.

Once a qudit has been encoded by the prover, he can keep applying
error corrections on that qudit, and thus, can keep its effective
error below some constant for a polynomially long time.
Polynomially many authenticated qudits are sent this way to the prover.

In the second stage a purification procedure is performed on the authenticated
messages, which are now protected from noise by the prover's concatenated error
correction code. Since the purification is of the \emph{authenticated} qudits,
it is done according to instructions from the verifier. As explained in
\pen{pen:polynomialgates} the verifier can also interpret measurements
outcomes for the prover, which are needed for the purification procedure. We
need to purify both input qubits which are without loss of generality $\ket 0$,
and Toffoli states. Any standard purification procedure (for example, that
of \cite{benor2006smq}) would work for the $\ket 0$
states. In order to purify the Toffoli states we
use the purification described in \cite{benor2006smq}. The purification
procedure uses polylogarithmically many qubits in order
to provide a total error
of at most $\frac \Delta {poly(nT)}$, where $T$ is the number of gates
in the circuit $U$ that will be computed by the prover.
This means (using the union bound) that with probability at most
$\Delta$ all purified states are effectively correct.

Finally, having with high probability, correct input states, the polynomial
\QPIP\ (\Prot{prot:PolynomialIP}) is executed. The prover applies logical
operations (\textit{SUM},$F$ and measurements) on his registers which contain
authenticated qubits. In particular, a logical measurement of the output bit of
the computation is executed by the prover at the end of the computation. The
result is then sent to the verifier who subsequently
interprets it according to his secret key.

The soundness of the this fault tolerant \QPIP\ is the same as that of
the standard \QPIP. In fact, in this scheme, the verifier ignores the prover's
overhead of encoding the input in an error correcting code, and performing
encoded operations. The verifier can be thought as performing
\Prot{prot:PolynomialIP} for a purification circuit followed by the circuit he
is interested in computing.  Therefore, the security proof of
\Th{thm:PolynomialIP} proves in fact that applying the purification and
computation circuits, has the same soundness parameter as the standard \QPIP.

Regarding completeness, the fact that the prover's computation
is noisy changes the error probability only very slightly.
There is a probability $\Delta$ that one of the input authenticated
states is effectively incorrect; Once they are all correct,
the fault tolerance proof implies that they remain correct
the entire computation with all but an inverse polynomial
probability.
Therefore, if the standard
\QPIP\ protocol has completeness $1-\delta-\epsilon$ the
completeness of this scheme is bounded by $1-\delta-\epsilon-2\Delta$.
\end{proof}

\section{Blind \QPIP}\label{app:blind}
\begin{deff} \cite{blind,broadbent2008ubq,childs2001saq}
  Secure blind quantum computation is a
  process where a server computes a function for a client and the following
  properties hold:
\begin{itemize}
\item \textbf{Blindness}: The prover gets no information beyond an upper bound on the size of the circuit.
    Formally, in a blind computation scheme for a set of function  $\mathfrak{F}_n$
      the prover's reduced density matrix is identical for every
      $f\in\mathfrak{F}_n$.
\item \textbf{Security}: Completeness and soundness hold the same way as in
      \QAS\ (\Def{def:qas}).
\end{itemize}

\end{deff}

\begin{statement}{\Th{thm:blind}} There is a blind \QPIP\ for \qc.
\end{statement}

We use the \QPIP\ protocols for \qc\ in order to provide a blind protocol for
any language in \BQP. We use the simple observation that the input is completely
hidden from the prover. This holds since in both \QAS\ presented the density
matrix that describes the prover's state does not depend on the input to the
circuit. Specifically, due to the randomized selection of an authentication, the
prover's state is the completely mixed state. We also use in the proof of this
theorem the notion of a universal circuit. Roughly, a universal circuit acts on
input bits and control bits. The control bits can be thought of, as a description of
a circuit that should be applied to the input bits. Constructions of such
universal circuits are left as an easy exercise to the reader.

Having mentioned the above observations, a blind computation protocol is not
hard to devise. The verifier will, regardless of the input, compute, with the
prover's help, the result of the universal circuit acting on input and control
bits.

We first formally define a universal circuit:
\begin{deff} The universal circuit $\mathfrak{U}_{n,k}$ acts in the
  following way:
  \begin{eqnarray}
    \mathfrak{U}_{n,k} \ket\phi\otimes\ket{c(U)}
    \longrightarrow U\ket\phi\ket{c(U)}
  \end{eqnarray}
  Where $c(U)$ is the canonical (classical) description of the circuit $U$.
\end{deff}

\begin{proofof}{\Th{thm:blind}}
  We prove that both the Clifford based \QPIP\ and the Polynomial \QPIP\ can be
  used to create a blind computation protocol. We claim that the state of the
  prover through the protocols is described by the completely mixed state. This
  is true in the Polynomial scheme since the Pauli randomization does exactly
  that. Averaging over all possible Pauli keys, it is easy to check that the
  state of the prover is described by $\mcI$. Furthermore, the prover gains no
  information regarding the Pauli key during the protocol, therefore, the
  description of the state does not change during the protocol as claimed.

  Since the above holds for any initial state, it follows that the prover has no
  information about the initial, intermediate or final state of the system.

  To see that the same argument holds for the Clifford \QAS, it suffices to
  notice that applying a random Clifford operator ``includes'' the application
  of a random Pauli:
\begin{eqnarray}
  \frac 1 {|\mfC_n|}\sum_{c\in \mfC_n} C\rho C^\dagger &=& \frac 1 {|\mfC_n|}\sum_{c\in \mfC_n} (CQ)\rho (CQ)^\dagger
\end{eqnarray}
Equality holds for any $Q\in \mfC_n$ since it is nothing but a change of order
of summation.
\begin{eqnarray}
  \ldots  &=& \sum_{Q\in \mbP_n}\frac 1 {|\mbP_n|} \frac 1
  {|\mfC_n|}\sum_{c\in \mfC_n} C(Q\rho Q^\dagger)C^\dagger\\
  &=& \frac 1
  {|\mfC_n|}\sum_{c\in \mfC_n}\frac 1 {|\mbP_n|}\sum_{Q\in \mbP_n}
  C(Q\rho Q^\dagger)C^\dagger\\
  &=&\frac 1
  {|\mfC_n|}\sum_{c\in \mfC_n}C\Big(\frac 1 {|\mbP_n|}\sum_{Q\in \mbP_n}
  \left(Q\rho Q^\dagger\right)\Big)C^\dagger\\
  &=&\frac 1
  {|\mfC_n|}\sum_{c\in \mfC_n}C\left(\mcI\right)C^\dagger\\
  &=& \mcI
\end{eqnarray}
\end{proofof}

\ignore{
\section{\QPIP\ With Different Provers} \label{app:interpretation}
We have seen that a prover restricted to \BQP\ has enough power to convince a
verifier of membership in any language in \BQP. We would like to find out how
strict are the requirements on the prover's computational power for our results
to hold. We consider two extreme cases:

\begin{itemize}
\item A quantum-classical hybrid prover. This prover which we denote $\mcP_\mathbf{hybrid}$
      has, similarly to our verifier, constant amount of quantum memory on which
      he can act. In addition we assume no other (classical) computational bound
      on the prover.
\item A computationally unbounded quantum prover which we denote $\mcP_\mathbf{unbound}$. This
      prover is identical to other systems studied in the literature
      (for instance: \cite{watrous2003phc}).
\end{itemize}

We would like to know what are the properties of our \QPIP\ protocols, when
interacting with such provers. Namely, do the soundness and completeness proofs
hold?

We prove an interesting feature of our \QPIP\ protocols: A prover of type
$\mcP_\mathbf{hybrid}$ is unable to convince $\mcV$ of even a true statement.  That is, $\mcV$
will abort (with high probability) an interaction with such a $\mcP_\mathbf{hybrid}$.
\\~\\
\begin{statement}{\Cl{claim:hybridProver}} There exists languages in \BQP\ such
  that for sufficiently large instances $\mcV$ will abort with high probability.
\end{statement}

\begin{proof}\textit{(Sketch. The complete proof will appear in the full version.)}
  A $\mcP_\mathbf{hybrid}$ prover cannot store $n>c$ qubits. Assume that
  $\mcP_\mathbf{hybrid}$ receives, the authentication of a random string. Then
  similarly, to the encryption procedure of  \cite{bennett1984qcp}, the
  prover is asked to measure a random qubit either in the standard or Fourier
  basis. Since $\mcP_\mathbf{hybrid}$ could not have kept the needed qubit in a
  coherent state, it is possible to show that he will fail to provide an
  acceptable answer with probability of at least $ \frac 1 2 (1-o(1))(1-\frac 1
  {2^d})$. The proof is similar to the security proof of the
  \cite{bennett1984qcp}. Intuitively, For $\mcP_\mathbf{hybrid}$ to be caught he
  must not have saved the bid and have guessed wrongly whether the standard or
  Fourier basis will be requested. This happens with probability $(1-\frac c
  n)\frac 1 2$. Now the string that $\mcP_\mathbf{hybrid}$ holds is independent
  of that correct set of strings, so $\mcP_\mathbf{hybrid}$ must guess a bit and
  fake an authentication for it. This succeeds with probability of at most
  $2^{-d}$.
\end{proof}

As for the computationally unbounded quantum prover, it is quite obvious that
$\mcP_\mathbf{unbound}$ can be used in order to convince $\mcV$ of membership in a \BQP\
language using our protocols. However, it may come as a surprise to find out
that the extra power of $\mcP_\mathbf{unbound}$ does not enable him to cheat $\mcV$ more than
the regular prover $\mcP$.

\begin{thm}\label{thm:unboundProver} The completeness and soundness of the
  \QPIP\ protocols \ref{prot:CliffordIP} and \ref{prot:PolynomialIP} remains the
  same when interacting with a $\mcP_\mathbf{unbound}$ prover.
\end{thm}

\begin{proof}
  To see this, we notice that in the security proofs for the \QPIP\ protocols
  (and for the \QAS) we had no assumptions on the computational power of the
  prover (or Eve). That is, the security proofs holds for \textbf{any}
  intervention, regardless of the possible complexity it might encapsulate in
  it.
\end{proof}

Thus, the security of a delegated (possibly blind) quantum computation is
enhanced by \Th{thm:unboundProver}, since it abandons the assumption that the
prover is computationally bound to \BQP. That is, our security proofs are
computationally unconditional.}

\section{Interpretation of Results}\label{app:interpretation}
\begin{proofof}{Corollary \ref{corol:confidence}}
  Let us first deal with the Clifford based \QPIP. We assume that the soundness
  of the scheme is $\delta$ and that the prover applies a strategy on which the
  verifier dose not abort with probability $\gamma$.  The final state of the protocol before the verifier's cheat detection can be
  written as (see \Eq{attackProfile}):
\begin{equation}
s\rho_c  + \frac{(1-s)}{4^n-1}\sum_{Q\in\mbP_n\setminus\{\mcI\}}\left(Q\rho_c Q^\dagger
\right)
\end{equation}
Where $\rho_c$ is the correct final state of the protocol. After the verifier
applies the cheat detection procedure $\mcB$ (which checks that the control
registers are indeed in the $\ket 0$ state):

\begin{equation} s\rho_{c}\otimes\ket{VAL}\bra{VAL}
  +\alpha_{rej}\rho_{rej}\otimes\ket{ABR}\bra{ABR}
  +\alpha_{bad}\rho_{bad}\otimes\ket{VAL}\bra{VAL}
\end{equation}
Assume the verifier declares the computation valid, then his
state is:
\begin{equation}\begin{aligned}
    \frac {s\rho_{c}  +\alpha_{bad}\rho_{bad}} {1-\alpha_{rej}} \otimes\ket{VAL}\bra{VAL}
\end{aligned}\end{equation}
then the trace distance to the correct state $\rho_c$ is bounded by:
\begin{equation}
1 - \frac {s}{1-\alpha_{rej}}+ \frac {\alpha_{bad}}{1-\alpha_{rej}} = \frac {2\alpha_{bad}}{1-\alpha_{rej}}\le \frac{2\delta}{\gamma}
\end{equation}
Were the inequality follows from the security of the \QPIP\ protocol:
$\alpha_{bad}\le\delta$, and the fact that the non-aborting probability $\gamma$ is
equal to $\alpha_{bad}+s$.

The proof that the Polynomial based \QPIP\ has the same property follows the
exact same lines.
\end{proofof}

\section{Symmetric Definition of \QPIP}\label{app:sym}

The definitions and results presented so far seem to be asymmetric. They refer
to a setting where the provers wishes to convince the verifier \emph{solely} of
YES instances (of problems in \BQP). This asymmetry does not seem natural neither regarding the complexity class
\BQP\ nor in the cryptographic or commercial aspects. In fact, this intuition is
indeed true.

Apparently, we can provide a symmetric definition of quantum prover interactive
proofs, and show that the two definitions are equivalent.  Essentially, this
follows from the trivial observation that the class \BQP\ is closed under
complement, that is, $\mcL\in\BQP \iff {\mcL^c}\in\BQP$.

To see this, let us first consider the symmetric definition for \QPIP.

\begin{deff} A language $\mcL$ is in the class symmetric quantum prover
  interactive proof $(\QPIP^{sym})$ if there exists an interactive protocol with
  the following properties:
\begin{itemize}
\item The prover and verifier computational power is exactly the same as in the
      definition of \QPIP\ (\Def{def:QPIP}). Namely, a \BQP\ machine and
      quantum-classical hybrid machine for the prover and verifier respectively.
\item Communication is identical to the \QPIP\ definition.
\item The verifier has three possible outcomes: \textbf{YES},
      \textbf{NO}, and \textbf{ABORT}:
      \begin{itemize}
  \item\textbf{YES}: The verifier is convinced that $x\in \mcL$.
  \item\textbf{NO}: The verifier is convinced that $x\notin \mcL$.
  \item\textbf{ABORT}: The verifier caught the prover cheating.
      \end{itemize}

\item Completeness: There exists a prover $\mcP$ such that $\forall
      x\in\{0,1\}^*$ the verifier is correct with high probability:
\[ \Pr{(\left[\mcV,\mcP \right](x,r) = \mathbbm{1}_\mcL })  \ge \frac23 \]
where $\mathbbm{1}_\mcL$ is the indicator function of $\mcL$.
\item Soundness: For any prover $\mcP'$ and for {\bf any}
      $x\in\{0,1\}^*$, the verifier is mistaken with bounded
      probability, that is:
\[ \Pr{(\left[\mcV,\mcP \right](x,r) = 1- \mathbbm{1}_\mcL })  \le \frac13\]
\end{itemize}
\end{deff}

\begin{thm} For any language $\mcL$:  If $\mcL,\mcL^c$ are both in $\QPIP$ then $\mcL,\mcL^c \in \QPIP^{sym}$
\end{thm}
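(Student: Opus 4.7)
\textit{(Proposal.)}
The plan is to combine the two given asymmetric \QPIP\ protocols into a single symmetric one. Let $\Pi_L$ be a \QPIP\ protocol for $\mcL$ and $\Pi_{L^c}$ be a \QPIP\ protocol for $\mcL^c$. First, by standard sequential repetition of interactive proofs (which carries over to \QPIP\ without change since the verifier is still a BPP machine interacting with a \BQP\ prover and a $c$-qubit register, and repetition does not enlarge the register), I may amplify the completeness and soundness of both protocols to $1-\epsilon$ and $\epsilon$ respectively for any inverse polynomial $\epsilon$ (in the protocols of this paper this is essentially free, since the parameters are already exponentially good in $d$).

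The symmetric protocol $\Pi^{sym}$ runs $\Pi_L$ and $\Pi_{L^c}$ on the same input $x$ (sequentially, so the verifier's $c$-qubit register is reused). Let $a_L \in \{\textbf{acc},\textbf{rej}\}$ and $a_{L^c}\in\{\textbf{acc},\textbf{rej}\}$ denote the respective outputs. The verifier outputs
\begin{align*}
\textbf{YES}   &\text{ if } a_L=\textbf{acc} \text{ and } a_{L^c}=\textbf{rej},\\
\textbf{NO}    &\text{ if } a_L=\textbf{rej} \text{ and } a_{L^c}=\textbf{acc},\\
\textbf{ABORT} &\text{ otherwise}.
\end{align*}
The honest prover for $\Pi^{sym}$ simply plays the honest $\Pi_L$-prover in the first phase and the honest $\Pi_{L^c}$-prover in the second; since each is a \BQP\ strategy, the combined strategy is in \BQP.

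For completeness, suppose $x\in\mcL$. Then the honest $\Pi_L$-prover makes $a_L=\textbf{acc}$ with probability $\ge 1-\epsilon$, and since $x\notin\mcL^c$, the soundness of $\Pi_{L^c}$ guarantees $a_{L^c}=\textbf{acc}$ with probability $\le \epsilon$ regardless of what the second-phase prover does. A union bound gives $\Pr(\textbf{YES})\ge 1-2\epsilon$. The case $x\notin\mcL$ (i.e.\ $x\in\mcL^c$) is symmetric, giving $\Pr(\textbf{NO})\ge 1-2\epsilon$. For soundness, fix any (possibly cheating) prover $\mcP'$. If $x\in\mcL$ the verifier's output is \textbf{NO} only if $a_{L^c}=\textbf{acc}$, which by soundness of $\Pi_{L^c}$ occurs with probability $\le\epsilon$; symmetrically if $x\notin\mcL$. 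Choosing $\epsilon\le 1/6$ gives the required $2/3$--$1/3$ gap.

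Finally, since the roles of $\mcL$ and $\mcL^c$ enter the symmetric definition symmetrically (swapping the interpretation of \textbf{YES} and \textbf{NO} in $\Pi^{sym}$ is a valid $\QPIP^{sym}$ protocol for $\mcL^c$), $\mcL^c\in\QPIP^{sym}$ follows immediately. The only real subtlety in the argument is making sure that the amplification step does not break the $c$-qubit bound on the verifier's register; this is handled by performing the repetitions sequentially, reusing the same register, which also justifies treating the composite prover as a single \BQP\ machine.
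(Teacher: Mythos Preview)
Your argument is correct, but it takes a different route from the paper's own proof. The paper does not run both protocols; instead, the prover first sends a single classical bit claiming whether $x\in\mcL$ or $x\in\mcL^c$, and then verifier and prover execute only the corresponding one of $\Pi_\mcL$ or $\Pi_{\mcL^c}$. If that protocol accepts, the verifier outputs the claimed answer; otherwise it aborts. Completeness is immediate (the honest \BQP\ prover can decide $\mcL$ and then act honestly in the chosen protocol), and soundness follows directly from soundness of whichever protocol is invoked, with no amplification needed.

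Your approach runs $\Pi_\mcL$ and $\Pi_{\mcL^c}$ back to back and cross-checks the outcomes. This is perfectly valid and has the pleasant feature that the verifier's verdict does not hinge on a declarative bit from the prover; the price is that you genuinely need the amplification step (with the raw $2/3$--$1/3$ parameters the union bound only gives $\Pr(\textbf{YES})\ge 1/3$), and you must argue that sequential repetition preserves the constant-register constraint, which you do. The paper's construction is a bit leaner---one sub-protocol, original parameters---while yours is more symmetric in spirit. Both are fine proofs of the statement.
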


\begin{proof}
  Let $\mcV_{\mcL},\mcP_{\mcL}$ denote the \QPIP\ verifier and prover for the
  language $\mcL$. By the assumption, there exists such a pair for both $\mcL$
  and $\mcL^c$. We define the pair $\wt{\mcP}$ and $\wt{\mcV}$ to be
  $\QPIP^{sym}$ verifier and prover in the following way: On the first round the
  prover $\wt{\mcP}$ sends to $\wt\mcV$\; ``yes'' if $x\in \mcL$ and ``no''
  otherwise. Now, both $\wt{\mcP}$ and $\wt{\mcV}$ behave according to
  $\mcV_{\mcL},\mcP_{\mcL}$ if ``yes'' was sent or according to
  $\mcV_{\mcL^c},\mcP_{\mcL^c}$ otherwise. Soundness and completeness follows
  immediately from the definition.

\end{proof}

Since \BQP\ is closed under  completion, we get:
\begin{corol}
$\BQP=\QPIP^{sym}$
\end{corol}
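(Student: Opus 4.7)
The plan is to construct a $\QPIP^{sym}$ protocol by bolting together the two given $\QPIP$ protocols, one for $\mcL$ and one for $\mcL^c$, and letting the prover declare up front which membership case holds. Concretely, let $(\mcV_{\mcL},\mcP_{\mcL})$ and $(\mcV_{\mcL^c},\mcP_{\mcL^c})$ be the $\QPIP$ pairs guaranteed by the hypothesis. In the combined protocol $(\wt{\mcV},\wt{\mcP})$, the first message is a single classical bit $b\in\{\mathrm{yes},\mathrm{no}\}$ from $\wt{\mcP}$ claiming whether $x\in\mcL$. If $b=\mathrm{yes}$ then both parties simulate $(\mcV_{\mcL},\mcP_{\mcL})$ on input $x$; if $b=\mathrm{no}$ then they simulate $(\mcV_{\mcL^c},\mcP_{\mcL^c})$ on input $x$. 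The verifier's final output rule is: if the simulated $\QPIP$ accepts, return the claim $b$ (\textbf{YES} or \textbf{NO}); if it rejects, return \textbf{ABORT}.

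For completeness, I would let $\wt{\mcP}$ always set $b$ truthfully and then behave as the honest prover of the matching protocol. The original $\QPIP$ completeness (amplified to, say, $1-\tfrac{1}{3}$) then gives that $\wt{\mcV}$ outputs $\mathbbm{1}_\mcL(x)$ with probability at least $\tfrac{2}{3}$. For soundness I would do a case split on the honest answer and the prover's declared bit. If $x\in\mcL$ and the cheating prover sends $b=\mathrm{yes}$, soundness can only be violated by having the verifier output \textbf{NO}, which is impossible because acceptance of the simulated protocol yields \textbf{YES}, matching $\mathbbm{1}_\mcL$. If instead the cheating prover sends $b=\mathrm{no}$, then we are running $(\mcV_{\mcL^c},\mcP_{\mcL^c})$ on an input $x\notin\mcL^c$, so the soundness of that protocol says the simulated verifier accepts with probability at most $\tfrac{1}{3}$; hence $\wt{\mcV}$ outputs \textbf{NO} with probability at most $\tfrac{1}{3}$ and otherwise aborts. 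The symmetric case $x\notin\mcL$ is handled identically by swapping the roles of $\mcL$ and $\mcL^c$.

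The only real thing to verify is that the declaration step does not break the $\QPIP^{sym}$ machine model: the extra communication is one classical bit, so the verifier's quantum register size, the $\BPP$/$\BQP$ bounds, and the channel structure are all preserved, and one can trivially amplify completeness and soundness of the underlying $\QPIP$s so the combined error is at most $\tfrac{1}{3}$. The corollary $\BQP=\QPIP^{sym}$ then follows immediately from Theorem~\ref{thm:main} together with closure of $\BQP$ under complement: both $\mcL$ and $\mcL^c$ lie in $\BQP=\QPIP$, so the above construction gives a $\QPIP^{sym}$ protocol for every $\mcL\in\BQP$, while the reverse containment $\QPIP^{sym}\subseteq\BQP$ is immediate since the verifier's joint operation with the prover can be simulated in $\BQP$. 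I do not expect a real obstacle here; the only place that warrants care is making sure the prover's initial claim $b$ is treated as \emph{untrusted}, so that soundness genuinely follows from the soundness of the two underlying $\QPIP$s rather than from trusting $b$.
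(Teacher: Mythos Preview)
Your proposal is correct and matches the paper's approach almost verbatim: the paper likewise has the prover send a single ``yes''/``no'' bit in the first round and then runs $(\mcV_{\mcL},\mcP_{\mcL})$ or $(\mcV_{\mcL^c},\mcP_{\mcL^c})$ accordingly, deriving the corollary from $\BQP=\QPIP$ and closure of $\BQP$ under complement. Your write-up is in fact slightly more careful, since you spell out the soundness case split on the untrusted bit $b$ and the easy containment $\QPIP^{sym}\subseteq\BQP$, both of which the paper leaves implicit.
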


\end{document}